\documentclass[accepted]{uai2026} 
                        
\usepackage[american]{babel}

\usepackage{natbib} 
\usepackage{mathtools} 
\usepackage{booktabs} 
\usepackage{tikz} 

\usepackage{amssymb}
\usepackage{amsmath}
\usepackage{amsthm}
\usepackage{dsfont}\usepackage{bm}
\usepackage{url}
\usepackage{placeins}
\usepackage{float}

\hypersetup{
colorlinks=true,
citecolor=blue,
linkcolor=blue,
urlcolor=blue
}

\usepackage[capitalize]{cleveref}
\usepackage[ruled,vlined]{algorithm2e}

\usepackage{shadethm}
\usepackage{etoolbox}

\newtheoremstyle{new}
  {12pt}      
  {12pt}      
  {\itshape}  
  {}          
  {\bfseries\color{black}} 
  {.}         
  { }         
  {}          
\theoremstyle{new}
\newtheorem{theorem}{Theorem}[section]
\newtheorem{corollary}[theorem]{Corollary}

\newtheorem{lemma}[theorem]{Lemma}
\newtheorem{definition}[theorem]{Definition}

\newtheorem{example}[theorem]{Example}
\newtheorem{remark}[theorem]{Remark}

\AfterEndEnvironment{theorem}{\noindent\ignorespaces}
\AfterEndEnvironment{corollary}{\noindent\ignorespaces}
\AfterEndEnvironment{proposition}{\noindent\ignorespaces}
\AfterEndEnvironment{lemma}{\noindent\ignorespaces}
\AfterEndEnvironment{definition}{\noindent\ignorespaces}
\AfterEndEnvironment{assumption}{\noindent\ignorespaces}
\AfterEndEnvironment{example}{\noindent\ignorespaces}
\AfterEndEnvironment{remark}{\noindent\ignorespaces}
\AfterEndEnvironment{claim}{\noindent\ignorespaces}

\Crefname{theorem}{Theorem}{Theorems}
\Crefname{corollary}{Corollary}{Corollaries}
\Crefname{proposition}{Proposition}{Propositions}
\Crefname{lemma}{Lemma}{Lemmas}
\Crefname{definition}{Definition}{Definitions}
\Crefname{example}{Example}{Examples}
\Crefname{remark}{Remark}{Remarks}
\Crefname{claim}{Claim}{Claims}

\usepackage[breakable, theorems, skins]{tcolorbox}
\tcbset{enhanced}

\definecolor{shadethmcolor}{cmyk}{0,0,0,0.075}    
\definecolor{shaderulecolor}{rgb}{1,1,1}   

\newtheoremstyle{shad}
  {12pt}      
  {12pt}      
  {\itshape }  
  {}          
  {\bfseries\color{black}} 
  {.}         
  { }         
  {}          
\theoremstyle{shad} 

\newshadetheorem{thmbox}[theorem]{Theorem}
\newshadetheorem{corbox}[theorem]{Corollary}
\newshadetheorem{propbox}[theorem]{Proposition}
\newshadetheorem{lembox}[theorem]{Lemma}
\newshadetheorem{exbox}[theorem]{Example}
\newshadetheorem{defbox}[theorem]{Definition}
\newshadetheorem{rembox}[theorem]{Remark}

\newtheoremstyle{shad*}
  {12pt}      
  {12pt}      
  {\itshape }  
  {}          
  {\bfseries\color{black}} 
  {.}         
  { }         
  {}          
\theoremstyle{shad*} 
\newshadetheorem{anobox}{}

\Crefname{theorem}{Theorem}{Theorems}
\Crefname{corbox}{Korollar}{Corollaries}
\Crefname{propbox}{Proposition}{Propositions}
\Crefname{lembox}{Lemma}{Lemmas}
\Crefname{exbox}{Beispiel}{Examples}
\Crefname{defbox}{Definition}{Definitions}
\Crefname{rembox}{Anmerkung}{Remarks}

\usepackage{bm}
\usepackage{dsfont}

\newcommand{\ind}{\mathds{1}}
\newcommand{\R}{\mathds{R}}
\newcommand{\N}{\mathds{N}}

\newcommand{\E}{\mathds{E}}

\providecommand{\G}{}
\renewcommand{\G}{\mathds{G}}

\newcommand{\Acal}{\mathcal{A}}

\newcommand{\Ncal}{\mathcal{N}}

\newcommand{\Tcal}{\mathcal{T}}

\newcommand{\Wcal}{\mathcal{W}}

\newcommand{\eps}{\varepsilon}

\renewcommand{\bar}{\overline}

\providecommand{\Pr}{}
\renewcommand{\Pr}{\mathbb{P}}
\newcommand{\var}{{\mathds{V}\mathrm{ar}}}

\newcommand{\cov}{{\mathds{C}\mathrm{ov}}}
\newcommand{\corr}{{\mathrm{Corr}}}

\newcommand{\wh}[1]{\widehat{#1}}

\makeatletter
\def\blfootnote{\gdef\@thefnmark{}\@footnotetext}
\makeatother

\title{Online Bootstrap Inference for the Trend of Nonstationary Time Series}

\author[1,2]{Thomas Nagler}
\author[1,2]{Tobias Brock}
\author[1,2]{Nicolai Palm}
\affil[1]{%
    Department of Statistics\\
    LMU Munich\\
    Munich, Germany
}
\affil[2]{%
    Munich Center for Machine Learning\\
    Munich, Germany
}

\begin{document}
\maketitle

\begin{abstract}
This article proposes an online bootstrap scheme for nonparametric level estimation in nonstationary time series. Our approach applies to a broad class of level estimators expressible as weighted sample averages over time windows, including exponential smoothing methods and moving averages. The bootstrap procedure is motivated by asymptotic arguments and provides well-calibrated uniform-in-time coverage, enabling scalable uncertainty quantification in streaming or large-scale time-series settings. This makes the method suitable for tasks such as adaptive anomaly detection, online monitoring, or streaming A/B testing. Simulation studies demonstrate good finite-sample performance of our method across a range of nonstationary scenarios. In summary, this offers a practical resampling framework that complements online trend estimation with reliable statistical inference.
\end{abstract}

\section{INTRODUCTION}


\subsection{Motivation}

Real-time decisions often hinge on the evolution of a time-varying trend in a data stream. 
Such trends are commonly estimated by exponentially weighted moving averages (EWMAs) and related smoothers, which enable efficient online computation and adapt naturally to changing conditions. 
These estimators are widely used in applications ranging from monitoring athlete training-load metrics \citep{murray2017calculating}, to technical analysis of financial assets via \emph{exponential Bollinger bands} \citep{BollingerRules}, to industrial process control charts for mean-shift detection \citep{roberts2000control,LucasSaccucci1990}. 
Despite their ubiquity for estimation, principled \emph{inference} for EWMA-type smoothers---confidence intervals and sequential tests---remains underdeveloped. 
Existing approaches either assume stationarity of the noise or recompute batch estimators on sliding windows, both of which are poorly suited to streaming settings with nonstationarity, latency constraints, and memory budgets.

\subsection{Contribution}
We propose \emph{online bootstrap} procedures that provide scalable uncertainty quantification for EWMA-type estimators of the level (or trend) of nonstationary time series, including higher-order smoothers. 
In particular, our algorithm (\cref{sec:algorithm}) delivers real-time pointwise and uniform-in-time confidence intervals (CIs) and tests with:
\begin{enumerate}[label=(\roman*)]
    \item $O(1)$ time and memory cost per observation,
    \item locally adaptive interval widths and critical values, and
    \item approximate validity under serial dependence and nonstationarity.
\end{enumerate}
The latter is substantiated through mathematical arguments in an idealized setting (\cref{sec:asymptotics}) and numerical experiments (\cref{sec:experiments}).
Taken together, our results establish a theoretically grounded and practically effective resampling framework that complements online trend estimation with reliable inference.

\subsection{Related Work}

\paragraph{Confidence sequences/sequential monitoring}
Recent work on asymptotic \emph{confidence sequences} (CS) provides time-uniform inference via supermartingale constructions, yielding confidence bounds that hold under continuous monitoring \citep{howard2021time,WaudbySmithArbourSinhaKennedyRamdas2024TimeUniformCLTACS,Johari2022}. These methods primarily target cumulative sums, typically in \textit{iid}\ or martingale difference settings. They are not designed for inference on weighted smoothers under serial dependence, where direct martingale arguments fail.
More classical sequential procedures from process monitoring, including CUSUM charts and related change-detection methods \citep[e.g.,][]{lorden1971procedures, tartakovsky2014sequential}, provide time-uniform false-alarm control but are tailored to detecting abrupt changes. However, they usually make parametric or independence assumptions and do not directly yield calibrated confidence bands for smoothly evolving trends.

\paragraph{Time Series Bootstrap}
For dependent data, block bootstraps \citep{Kunsch1989, politis1992circular} and sieve bootstraps \citep{Buhlmann1997} provide valid inference but are inherently offline and require tuning choices. 
Multiplier methods avoid explicit blocking \citep{Shao2010}, yet standard implementations recompute batch statistics. 
Bootstrap approaches for nonparametric trend bands \citep{FriedrichSmeekesUrbain2020} likewise remain computationally demanding and are not designed for streaming settings.
Our work differs by providing \emph{streaming} resampling methods specialized to EWMA-type smoothers, with constant per-observation update cost. 
The closest in spirit is the autoregressive online bootstrap of \citet{PalmNagler2024} for stationary streams. Our method extends theirs to nonstationary environments, uniform-in-time inference, and non-Gaussian bootstrap multipliers.  


\paragraph{Adaptive Conformal Inference} Recent developments in adaptive conformal prediction approaches for sequential or streaming data \citep{shafer2008tutorial,Tibshirani2019,Gibbs2021,zaffran2022adaptive} share our goal of real-time inference, but for a different target. Conformal methods provide inference for label $Y_t$, which is itself a random variable with non-vanishing variance. We target a smoothed trend, a deterministic target, where the variance approaches zero as the size of the smoothing window increases.
\section{BACKGROUND}

\subsection{Setup}

In what follows, we assume that $(X_t)_{t \in \N}$ is a generic real-valued time series with instantaneous mean $\mu(t) = \E[X_t]$, $t \in \N$. 
As a running example, we can think of this series as heart rates recorded by an optical sensor in a smartwatch. The measurements have two sources of fluctuation. The true heart rate $\mu(t)$ fluctuates because of varying health conditions and activities, and due to the measurement error $X_t - \mu(t)$ stemming from the limited-frequency optical sensor.

Because we may only ever see a single observation $X_t$ with mean $\mu(t)$, there is little hope to estimate $\mu$ without making strong assumptions. The typical remedy is to focus on smoothed means 
\begin{align} \label{eq:smoothed-means}
    \mu_\eta(t) = \sum_{i = 1}^t  w_{t, \eta}(i) \mu(i),
\end{align}
and corresponding estimators
\begin{align} \label{eq:smoothed-estimator}
    \wh \mu_\eta(t) = \sum_{i = 1}^t  w_{t, \eta}(i) X_i.
\end{align}
Here, $\eta \in \R^p$ is a parameter controlling the amount of smoothing (or, conversely, forgetting). 


\subsection{Online Estimators of the Trend}

In an online learning or streaming data setting, the most common trend estimators are exponential weight smoothers  \citep{GARDNER2006637, hyndman2008forecasting}.
We give a few popular examples in the following.

\begin{example}[EWMA]\label{ex:ewma}
For an initial $s_0=0$, update
\begin{align*}
    s_t = \eta X_t + (1-\eta)s_{t-1}, 
\end{align*}
where $\eta \in \mathbb{R}$ and set $\wh \mu_\eta(t) = s_t$. This can be written as \eqref{eq:smoothed-estimator} with 
\begin{align*}
    w_{t, \eta}(i) = \eta(1-\eta)^{t-i}.
\end{align*}
\end{example}

\begin{example}[Brown double exponential smoothing]
For initial $s^{(1)}_0=s^{(2)}_0=0$ and $\eta \in \mathbb{R}$, update
\begin{align*}
    s^{(1)}_t &= \eta X_t + (1-\eta)s^{(1)}_{t-1},\\
    s^{(2)}_t &= \eta s^{(1)}_t + (1-\eta)s^{(2)}_{t-1}.
\end{align*}
We take the trend estimate as $\wh \mu_{\eta}(t) := 2s^{(1)}_t - s^{(2)}_t$. This can be written as \eqref{eq:smoothed-estimator} with weights
\[
  w_{t,\eta}(i) = \eta \bigl[2 - \eta (t-i+1)\bigr] (1-\eta)^{t-i}.
\]
\end{example}

\begin{example}[Triple exponential smoothing with additive seasonality]
For seasonal period $L$ and initial values \(s_0 = b_0 = c_{-L} = \dots = c_{0}=0\), the trend update is
\[
  \wh \mu_\eta(t) = s_t = \eta_1(X_t - c_{t-L}) + (1-\eta_1)(s_{t-1} + b_{t-1}),
\]
with trend estimate
\[
  b_{t} = \eta_2(s_t - s_{t-1}) + (1-\eta_2)b_{t-1},
\]
and additive seasonality update
\[
  c_t = \eta_3(X_t - s_t) + (1-\eta_3)\,c_{t-L},
\]
where $\eta \in [0,1]^3$. Closed forms of $w_{t,\eta}(i)$ can be derived for the above expressions but are omitted for simplicity.
\end{example}


\subsection{The Effective Sample Size}

For computing an unweighted mean, $\bar X_n = \frac{1}{n} \sum_{i = 1}^n X_i$, the sample size $n$ is the key scaling factor for the uncertainty. The situation is different for exponential smoothers and their variants discussed in the previous section. To see this, consider the case where $X_1, X_2, \ldots$ are i.i.d.
Comparing 
\begin{align*}
    \var[\bar X_n] &= \var[X_1] / n, \\
    \var[\wh \mu_\eta(n)] &=  \var[X_1] \left(\sum_{i = 1}^n w_{n, \eta}(i)^2\right),
\end{align*}
for large $n$, we may interpret the number 
\begin{align*}
    \nu_\eta = \left(\sum_{i = 1}^n w_{n, \eta}(i)^2\right)^{-1},
\end{align*}
as the \emph{effective sample size} of $\wh \mu_\eta$, a formula often attributed to \citet{kish1992weighting}.  For example, the standard EWMA weights have $\nu_\eta \approx (2 - \eta) / \eta$. Importantly, the effective sample size depends mainly only on the smoothing parameter $\eta$, not on the number of raw samples the smoother has processed.

For nonstationary time series, the above reasoning becomes murkier because $\var[\wh \mu_\eta(n)]$ involves potentially changing variances and covariances across time points. 
However, the quantity $1 / \nu_\eta$ still provides the right scaling for the variance of $\wh \mu_\eta$ under standard assumptions on the serial dependence. We will thus use the effective sample size $\nu_\eta$ throughout the paper to quantify the effective number of samples that are averaged by $\wh \mu_\eta$.

\section{PROBLEM FORMULATION}


Suppose we observe $X_1, \ldots, X_{t_0}$ from a nonstationary time series and begin computing the online estimators $\wh \mu_\eta(t)$ as new data arrive. 
Our goal is to perform statistical inference on the corresponding smoothed means $\mu_\eta(t)$ at a given significance level $\alpha$, either by constructing confidence intervals or by testing for changes in the underlying trend. 
For instance, in the heart-rate example, this corresponds to building real-time confidence bands for the instantaneous heart rate or detecting abnormal shifts in activity patterns.


\subsection{Uniform-in-time inference}
Classical inference typically provides pointwise guarantees, e.g., coverage for a single target.
In streaming contexts, however, many time points are monitored successively, and pointwise intervals will be exceeded frequently.
To guard against this, we seek procedures with (approximate) \emph{uniform-in-time validity} over a contiguous time interval $\Tcal = \{t_1 + 1, \dots, t_2\}$.

\paragraph{Confidence Bands}
A $(1-\alpha)$-confidence band is a collection of random intervals 
\begin{align*}
    \{[\wh \mu_\eta(t) - \wh c_{t, \alpha}, \wh \mu_\eta(t) + \wh c_{t, \alpha}]\colon \, t \in \Tcal\},
\end{align*}
with $\wh c_{t, \alpha}$ thresholds such that 
\begin{align*}
    \Pr\left( \exists t \in \Tcal \colon |\wh \mu_{\eta}(t) - \mu_{\eta}(t)| > \wh c_{t, \alpha}\right) \le \alpha.
\end{align*}
The thresholds $\wh c_{t,\alpha}$ adapt to the local level of uncertainty and are designed to be updated in real time, using only $O(1)$ computation and memory. 
Uniform validity means that the probability of a band miss in $\Tcal$ is approximately $\alpha$. This is a rather unforgiving property requiring additional caution.

\paragraph{Testing}
We consider tests for the null hypothesis 
\begin{align*}
    H_0\colon \quad \mu_\eta(t) = 0 \quad \forall \, t \in \Tcal,
\end{align*}
using the rejection rule
\begin{align*}
    \text{Reject if} \quad  \exists\, t \in \Tcal \colon \quad |\wh \mu_\eta(t)| > \wh c_{t, \alpha},
\end{align*}
with the same requirements on the thresholds $\wh c_{t, \alpha}.$
The testing framework is more general than it may appear. 
First, one-sided tests follow directly by dropping the absolute value in the rejection rule. 
Second, by choosing suitable weights $w_{t,\eta}(i)$ or applying the smoother to a transformed series, we can target a wide range of hypotheses beyond simple level shifts.

\begin{example}[Jump detection]
Suppose we want to detect sudden increases in heart rate. 
We define a `jump' as a change of size at least $J$ within a short horizon of $h$ time steps (e.g., 10 seconds). 
This leads to the null hypothesis
\[
  H_0\colon \quad \left| \sum_{i=h+1}^t w_{t,\eta}(i)\,[\mu(i)-\mu(i-h)] \right| \le J, \quad \forall\, t \in \Tcal.
\]
\end{example}

Constructing valid tests amounts to finding a confidence interval under the null. We shall therefore restrict much of the following exposition to confidence intervals for simplicity.

\subsection{Desiderata}

In summary, we have the following desiderata for our inference procedures:
\begin{itemize}
    \item[(N)] \textbf{Nonstationary}: The CIs and tests adapt to the local uncertainty in a nonstationary time series.
    \item[(U)] \textbf{Uniform validity}: The procedures give approximately valid inference uniformly in time.
    \item[(S)] \textbf{Small-window robustness}: Procedures remain valid even when effective sample sizes are small.
    \item[(O)] \textbf{Online}: Updates require only constant time and memory per new observation.
\end{itemize}


\section{METHOD} \label{sec:algorithm}

In the following, we describe how to construct confidence thresholds $\wh c_{t,\alpha}$ that satisfy the desiderata (N)–(O). 
The construction proceeds in three steps: (i) an idealized asymptotic design, (ii) a bootstrap approximation that makes the procedure practical, and (iii) small-sample calibrations to ensure robustness in practice. A formal result proving asymptotic validity of the procedure is given in the following section.

\subsection{Idea}

\paragraph{Idealized Setting}
To motivate our construction, consider first an idealized scenario where the distribution of the process is known. 
The idea is to scale thresholds with the standard error $\sigma_t=\var[\wh\mu_\eta(t)]^{1/2}$ of the smoother and then inflate them by a factor $q_\alpha$ to achieve uniform-in-time coverage. 
To extend validity over long horizons, we partition the monitoring period into blocks of increasing length and use past blocks to calibrate the intervals for the next one.
Assume for clarity that $t_2=t_0+2(t_1-t_0)$ so that the periods $[t_0\!+\!1,t_1]$ and $[t_1\!+\!1,t_2]$ have the same length.

For coverage over a period $\Tcal$, we calibrate $q_\alpha$ such that
\begin{align*}
    \Pr\left(\sup_{t\in\Tcal}\frac{|\wh\mu_\eta(t)-\mu_\eta(t)|}{\sigma_t}>q_\alpha\right)\le \alpha.
\end{align*}
For longer horizons, let 
$K=\big\lceil \log_2\!\big((t_2-t_0)/(t_1-t_0)\big)\big\rceil$ and partition 
$\Tcal=\bigcup_{k=1}^K \Tcal_k$, where
\begin{align*}
    \Tcal_k&=\{t_{1}^{(k)}\!+\!1,\dots,t_{2}^{(k)}\},\\
t_{1}^{(k)}&=t_0+2^{k-1}(t_1-t_0),\\
t_{2}^{(k)}&=t_0+2^{k}(t_1-t_0).
\end{align*}
Calibrating $q_{k,\alpha}$ so that
\begin{align*}
    \Pr\!\left(\sup_{t\in\Tcal_k}\frac{|\wh\mu_\eta(t)-\mu_\eta(t)|}{\sigma_t}>q_{k,\alpha}\right)\le \frac{\alpha}{K}
\end{align*}
and then using a union bound yields the overall level $\alpha$.
Other stitching/spending schedules can be used instead of the constant $\alpha/K$ budget for every period \citep[e.g.,][]{howard2021time}. These may also allow for $K \to \infty$, but we shall not pursue this further in this already technical article.

\paragraph{Bootstrap Approximation}
In practice, $\mu_\eta(t)$ and $\sigma_t$ are unknown, so we replace the idealized error process with a bootstrap analogue, to be defined formally in \Cref{sec:implementation}.
The bootstrap generates approximate replicates of the estimation error, from which both scale ($\sigma_t^*$) and quantiles ($q_{k,\alpha}^*$) can be estimated empirically. Suppose, for the moment, we have access to an independent, approximate version $\wh \delta_{\eta}^{*}(t)$ of the estimation error $\wh \delta_\eta(t) = \wh \mu_{\eta}(t) - \mu_\eta(t)$ with known variance $\sigma_t^{*2} = \var[\wh \delta_{\eta}^{*}(t)]$.
Then we may as well define the thresholds as $c_{t, \alpha}^* = \sigma_t^* q_{k, \alpha}^*$, where
\begin{align*}
    \Pr\left(\sup_{t \in \Tcal_k} |\wh \delta_{\eta}^{*}(t)| /  \sigma_t^* > q_{k, \alpha}^* \right)
     & \le \frac{\alpha}{K}.
\end{align*}
A remaining issue is that an estimate of the quantile $q_{k, \alpha}^*$ must be available ahead of period $\Tcal_k$, so we cannot use data from that interval. If the fluctuations around the nonstationary trend are sufficiently stable in the sense that
\begin{align*}
     \Pr\left(\sup_{t \in \Tcal_k} \frac{|\wh  \delta_\eta(t)| }{\sigma_t^*} > y \right)   
     & \approx  \Pr\left(\sup_{t \in [t_0 + 1, t_1^{(k)}]}  \frac{|\wh  \delta_\eta(t)| }{\sigma_t^*} > y \right),
\end{align*}
we can simply choose $y = q_{k, \alpha}^*$ such that the right-hand side is bounded by $\alpha / K$. Because the interval lengths are identical on both sides, this certainly holds when the noise is stationary, but may only hold approximately otherwise.

\paragraph{Small-sample Calibration}
Our asymptotic arguments leading to \Cref{thm:true} ahead rely on Gaussian approximations of the sequence of estimation errors $\wh \delta_\eta(t)$ and their bootstrapped versions $\wh \delta_\eta^*(t)$.
However, in practice, we may want to use small smoothing windows, where the asymptotic approximation is not accurate.
Since uniform coverage is especially sensitive to such deviations, we introduce corrections that stabilize variance estimates and make the procedure more robust in finite samples.

\subsection{Implementation} \label{sec:implementation}

Based on the ideas above, we propose the following concrete implementation.
Fix a smoothing parameter $\eta$, burn-in time $t_0$, start time $t_1$, and end time $t_2$. We devise an algorithm that computes the smoothed estimates $\wh \mu_{\eta}(t)$ and estimated confidence intervals in an online manner.
The resulting procedure alternates between (i) generating bootstrap replicates of the smoothed error process, (ii) updating online variance estimates, and (iii) recalibrating critical values at block boundaries. 
A detailed version is provided in Algorithm~\ref{algo:main} and explained in more detail in the following paragraphs. The confidence intervals produced by the algorithm on a few simulated time series are illustrated in \Cref{fig:wide-compare-processes}. 

\begin{algorithm}[t]
    \textbf{Input}:  $\eta$, $\alpha$, $t_0, t_1, t_2$, $B_1, B_2$, and  $X_1, X_2, \ldots$\\

    \SetAlgoLined

    \textbf{Initialize} $\wh \mu_\eta=\wh \mu_\eta(t_0), \rho_\eta = 1 - \nu_\eta^{-\chi}, Z^{(b)} = \wh \delta_\eta^{*(b)}(0)= m_0^{(b)} = 0 , K = \lceil \log_2\big((t_2-t_0)/(t_1-t_0)\big) \rceil
$. \\[6pt]



    \For{$ t=t_0 + 1, \dots, t_2$}{ 


    \For{$b=1 \ to \ B$}{

        Simulate $\zeta^{(b)}\sim \mathcal{N}(0,1)$. \\
        Update \\[-18pt]
        \begin{align*}
            Z^{(b)}             & \leftarrow \ \rho_\eta Z^{(b)} +\sqrt{1-\rho^2_\eta}\zeta^{(b)} \\
            V^{(b)}             & \leftarrow \ T(Z^{(b)})                                       \\
            X^*                 & \leftarrow \ V^{(b)}(X_t-\wh{\mu}_{\eta})                \\
            \wh  \delta_\eta^{*(b)} & \leftarrow  \text{smoother update using }  X_t^*      \\[-24pt]
        \end{align*}
    }

    $\wh \sigma^{*2} \leftarrow \text{empirical variance of } \{\wh \delta_\eta^{*(1)}, \dots, \wh \delta_\eta^{*(B_1)} \}$.\\
    \For{$b=B_1 + 1 \ to \ B$}{
        $m^{(b)} \leftarrow \max\{m^{(b)}, |\wh \delta_\eta^{*(b)}| / \wh \sigma^{*} \}$. \\
    }
    \If{$t = t_0  + 2^k(t_1 - t_0)$ for some $k \in \N_0$}{
        $\wh q_{\alpha}^* \leftarrow$ empirical $(1 - \alpha / K)$-quantile of \\ \qquad \, $\{m^{(B_1 + 1)}, \dots, m^{(B)}\}$. \\
    } \, \\[-8pt]
    $\wh{\mu}_\eta \leftarrow \ \text{smoother update using } X_t$  \\
    Set confidence threshold $\wh c_{t, \alpha}^* = \wh q_{\alpha}^* \wh \sigma^{*}$.
    }
    \caption{Online nonstationary AR-bootstrap with uniform-in-time calibration}
    \label{algo:main}
\end{algorithm}

\paragraph{Bootstrap Multipliers}

Over the whole time horizon $t = t_0 + 1, t_0 + 2, \ldots$, we compute the smoothed estimates $\wh \mu_{\eta}(t)$ online using an appropriate update formula.
We further construct $B = B_1 + B_2$ independent bootstrap replicates of the estimation error $\wh \delta_{\eta}(t) = \wh \mu_{\eta}(t) - \mu_{\eta}(t)$. The first $B_1$ replicates will be used to estimate the variance, and the remaining replicates will be used for calibration. Our experiments suggest that $B_1 = 20, B_2 = 80$ are sufficient for good accuracy. 
To mimic the dependence structure of the data, we generate autoregressive Gaussian multipliers with persistence parameter $\rho_\eta = 1 - \nu_\eta^{-\chi}$, $\chi \in (0, 1/2)$. 
This follows the stationary online bootstrap of \citet{PalmNagler2024}, but we additionally apply a heavy-tailed transformation $T$ to make the procedure robust against non-Gaussian innovations when $\nu_\eta$ is small.

For each $b = 1, \ldots, B$, we generate a sequence of bootstrap multipliers $V_1^{(b)}, V_2^{(b)}, \ldots$ with
\begin{align*}
    V_{t}^{(b)} = T(Z_t^{(b)}), \quad Z_t^{(b)} = \rho_\eta Z_{t - 1}^{(b)} + \sqrt{1 - \rho_\eta^2} \xi_t^{(b)},
\end{align*}
where $\xi_1^{(b)}, \xi_2^{(b)}, \ldots \stackrel{iid}\sim \Ncal(0, 1)$, and $T = t_{2 + \nu_\eta^{1/3}}^{-1} \circ \Phi$ with $t_\nu$ the CDF of a $t$-distribution with $\nu$ degrees of freedom and $\Phi$ the CDF of a standard normal distribution.  We introduce the heavy-tailed transformation $T$ to make the approximation more robust against non-Gaussianity when $\nu_\eta$ is small.

\paragraph{Local Adaptation}

Now define the bootstrapped errors
\begin{align*}
    \wh \delta_{\eta}^{*(b)}(t) = \sum_{i = 1}^t w_{t, \eta }(i)  V_i^{(b)}  (X_i - \wh \mu_{\eta}(i - 1)).
\end{align*}
To capture rapid changes in the time series, we center each bootstrap innovation on the lagged smoother $\wh \mu_\eta(t-1)$. 
This makes the bootstrap variance more responsive to local changes, making the subsequent calibration step easier.
This is a major difference to the stationary online bootstrap of \citet{PalmNagler2024}, where the multipliers are applied to the mean-centered observations $X_i - \bar X_n$, which is not suitable for nonstationary data.
The bootstrapped errors can be computed online---using online updates for $\wh \mu_{\eta}(t)$, $V_t^{(b)}$, and noting that the error $\wh \delta_\eta^*(t)$ is just the initial smoother applied to the
sequence 
$X_t^* = V_t^{(b)} (X_t - \wh \mu_{\eta}(t - 1)).$ 
From here, we can estimate the squared standard error $\sigma_t^{*2} = \var[\wh \delta_{\eta}^{*}(t)]$ by the empirical variance   $\wh v_t$ of $\{\wh \delta_{\eta}^{*(b)}(t)\colon b = 1, \dots, B_1\}$. 


\paragraph{Global Calibration}

At block boundaries $t_1^{(k)}$, we recalibrate the critical value for the upcoming block to ensure that, by a union bound across blocks, the overall exceedance probability remains below $\alpha$.
Throughout the series, we compute the maximal observed standardized errors
\begin{align*}
    m_t^{(b)} = \max_{s \in [1, t]} |\wh \delta_{\eta}^{*(b)}(s)| / \wh \sigma_s^{*}, \quad b = B_1 + 1, \dots, B.
\end{align*}
and their empirical quantiles
\begin{align*}
    \wh q_{k, \alpha}^* = \inf\left\{q \colon \frac{1}{B_2} \sum_{b = B_1 + 1}^B \ind\left\{m_{t_1^{(k)}}^{(b)} \le q\right\} \ge 1 - \frac{\alpha}{K}\right\}.
\end{align*}

\paragraph{Runtime and Memory Complexity} The algorithm proceeds sequentially over times $t_0 + 1, \dots, t_2$, iteratively updating a fixed number of objects. More precisely, the algorithm keeps track of the objects listed in the initialization phase, which is $O(B)$ in memory. At every time step, updates cost $O(1)$ for each bootstrap sample $b = 1, \dots, B$ and an additional $O(B)$ for computing the variance of bootstrap replicates. Finally, at $K = O(\log t_2)$ block boundaries, quantiles are computed via sorting, incurring $O(B \log B)$ cost. The overall  complexity is thus $O(B)$ memory and $O(t_2 B + \log(t_2) B \log B)$ computations up to time $t_2$.

\paragraph{Hyperparameter choices.}
The parameter $K$ controls the number of calibration blocks over the
monitoring horizon. For fixed $t_0$ and $t_2$, a larger $K$ leads to earlier and more frequent recalibration. This can improve adaptivity when the local
uncertainty changes abruptly, but it also allocates only $\alpha/K$
error probability to each block, and can therefore increase conservatism
when the distribution of the standardized error process is stable. In
practice, more stable series favor fewer, longer calibration
blocks.

The parameter $\chi \in (0,1/2)$ controls the persistence of the multiplier bootstrap weights through $\rho_\eta = 1-\nu_\eta^{-\chi}$. It does not affect the convergence rate of the estimator, but determines the range over which temporal dependence from the observations is propagated into the bootstrapped estimate. 
Our default choice $\chi=1/3$ worked well across the simulations, and the sensitivity experiments in \cref{sec:app_ablation} indicate that performance is not very sensitive to this choice.

Finally, a larger number of bootstrap replicates $B$ is generally preferred, although our experiments in  \cref{sec:app_ablation} suggest that $B\approx 100$ is already sufficient.

\section{Asymptotics} \label{sec:asymptotics}

Several design choices in the algorithm are motivated by asymptotic considerations.
The core idea is to approximate the sequence of weighted sample averages by a Gaussian process, which becomes accurate as the effective sample size $\nu_\eta$ increases.
Moreover, we assume that the bootstrap quantile $q_{k, \alpha}^*$ is known exactly, which ignores Monte Carlo error and the effect of recalibration boundaries.
In this idealized asymptotic sense, we can formally prove the validity of our approach.
We stress that this argument is primarily motivational; the practical, non-asymptotic performance is assessed through numerical experiments in \Cref{sec:experiments}.


\subsection{Assumptions}

Intuitively, $\nu_\eta$ acts as the relevant time scale, so our asymptotic results are derived in a regime where $\nu_\eta \to \infty$.
We specify the burn-in, start, and end-times as 
$$t_0 = c_0 \nu_\eta, \quad t_1 = c_1 \nu_\eta, \quad t_2 = c_2 \nu_\eta,$$
 where $c_0, c_1, c_2 \in \N$ and $c_0 < c_1 < c_2$.
In this parametrization, $c_0$ is the number of length-$\nu_\eta$ periods we reserve for burn-in, $c_1 - c_0$ is the number of periods for the initial calibration, and $c_2 - c_1$ the number of periods we want to maintain valid inferences on. 

Our main result requires several assumptions on the data-generating process and the weights.
Suppose that there is $\gamma > 2$ such that:
\begin{enumerate}[label=(S\arabic*), leftmargin=0.9cm]
        \item \label{A1:X} $\sup_{i}\E[|X_i|^\gamma]<\infty$.
        \item \label{A2:X}$\sup_n \!\max_{m\leq n}m^{b}\beta_n(m)<\infty$ for $b>\gamma/\chi(\gamma-2).$\footnote{Here, $\beta_n(m)$ is the $\beta$-mixing coefficient of $X_1,\dots,X_n$, see \Cref{sec:assumptions} for a precise definition.}
\end{enumerate}
The first condition is a tail condition and requires the existence of a moment slightly higher than the second. The second condition is a restriction on the serial dependence, which requires that events become increasingly independent as they become more separated in time. The dependence has to fade more quickly if the tails of $X_i$ are heavy (i.e., $\gamma$ is small). These conditions are standard in the literature on time series analysis and relatively mild.

To state our conditions on the weights, we introduce the following pseudo-metric on $[0, 1]$:
$$d^w_\eta(s,t)=\left(\sum_{i=1}^{\infty}| w_{\lfloor sc_2\nu_{\eta}\rfloor, \eta}(i)- w_{\lfloor tc_2\nu_{\eta}\rfloor, \eta}(i)|^{\gamma}\right)^{\!\! 1/\gamma}\!\!.$$
Now assume:
\begin{enumerate}[label=(W\arabic*), leftmargin=1cm]
    \item \label{A0:weights} $\sup_{t,i,\eta}|\nu_{\eta}w_{t, \eta}(i)|<\infty$. 
    \item \label{A5:weights} There is  $C  \in [1, \infty)$ such that for all $s,t\in [1/c_2,1]$,
    $$d_\eta^w(s, t) \le C (| s - t| + \nu_\eta^{-1})^{1/C} < \infty.$$
\end{enumerate}
The first prevents the weights from spiking too heavily, while the second requires the weights to be sufficiently smooth in $s$. The conditions are mild and verified exemplarily for the EWMA weights from \Cref{ex:ewma} in \Cref{lem:weights}. 

\subsection{Main result}

\begin{theorem} \label{thm:true}
    Let $c_{t, \alpha}^* = \sigma_t^* q_{k, \alpha}^*$, $\sigma_t^{*2} = \var[\wh \delta_{\eta}^{*}(t)]$, and suppose assumptions \ref{A1:X}, \ref{A2:X}, \ref{A0:weights}, and \ref{A5:weights} hold.  If 
    \begin{enumerate}[label=(\roman*), leftmargin=0.9cm]
        \item $\mu(i) = \E[X_i]$ is constant, or
        \item $\lim_{\nu_\eta\to \infty}\max_{t\in (t_1,t_2]}\sigma_t/\sigma_t^*=0,$
    \end{enumerate} 
it holds that
        \begin{align*}
        \limsup_{\nu_\eta \to \infty}\Pr\left( \exists t \in (t_1, t_2] \colon |\wh \mu_{\eta}(t) - \mu_{\eta}(t)| > c_{t, \alpha}^*\right) \le \alpha.
    \end{align*}
\end{theorem}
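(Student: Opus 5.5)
The plan is to reduce the claim to a Gaussian approximation of the rescaled error process on the time scale $s = t/(c_2\nu_\eta)\in[1/c_2,1]$. I then compare the original and bootstrap suprema through the same limiting Gaussian process and finish with a union bound over the $K$ blocks.

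\textbf{Step 1: functional CLT for the original error.} Define $G_\eta(s)=\nu_\eta^{1/2}\wh\delta_\eta(\lfloor s c_2\nu_\eta\rfloor)=\nu_\eta^{1/2}\sum_i w_{\lfloor sc_2\nu_\eta\rfloor,\eta}(i)(X_i-\mu(i))$. I would show that $G_\eta$ can be coupled with a centered Gaussian process $\G_\eta$ with the same covariance such that $\sup_s|G_\eta(s)-\G_\eta(s)|\to0$ in probability.
\begin{itemize}
\item Finite-dimensional convergence comes from a Lindeberg/blocking CLT for $\beta$-mixing triangular arrays. The weights are uniformly $O(\nu_\eta^{-1})$ by \ref{A0:weights}, so every summand is negligible. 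Moments are controlled by \ref{A1:X} and the mixing rate by \ref{A2:X}.
\item Tightness comes from a chaining argument in the pseudo-metric $d^w_\eta$. A Rosenthal-type moment inequality for mixing sums gives $\E|G_\eta(s)-G_\eta(t)|^\gamma\lesssim (\nu_\eta^{1/2} d^w_\eta(s,t)\cdot\nu_\eta^{1/2-1/\gamma}\cdots)$, which, after normalisation, is bounded by a power of $|s-t|+\nu_\eta^{-1}$ via \ref{A5:weights}.
\item The $\nu_\eta^{-1}$ floor is harmless because the process is piecewise constant on a grid of mesh $1/(c_2\nu_\eta)$.
\end{itemize}

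\textbf{Step 2: same approximation for the bootstrap error.} Conditionally on the data, $\nu_\eta^{1/2}\wh\delta^*_\eta$ is a multiplier process. Its multipliers are AR with correlation length $\nu_\eta^{\chi}\ll\nu_\eta$, so the conditional covariance approximates a long-run-variance-type covariance, as in \citet{PalmNagler2024}; the condition $b>\gamma/\chi(\gamma-2)$ is what makes this work.
\begin{itemize}
\item The heavy-tail transform $T$ has degrees of freedom $2+\nu_\eta^{1/3}\to\infty$, so $V_t$ is asymptotically Gaussian with unit variance, and the same chaining bound applies.
\item Centering at $\wh\mu_\eta(i-1)$ instead of $\mu(i)$ introduces the extra term $\sum_i w V_i(\mu(i)-\wh\mu_\eta(i-1))$.
\item Under (i), $\mu$ is constant and $\wh\mu_\eta(i-1)-\mu=O_p(\nu_\eta^{-1/2})$. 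Multiplied by mean-zero multipliers, this term is $o_p(\nu_\eta^{-1/2})$ uniformly, so the bootstrap covariance matches that of $\G_\eta$.
\end{itemize}

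\textbf{Step 3: comparing suprema and summing over blocks.}
\begin{itemize}
\item \emph{Case (i).} On each block $\Tcal_k$, the bootstrap quantile $q^*_{k,\alpha}$ is the $(1-\alpha/K)$-quantile of $\sup|\G|/\sigma$ over the calibration window. Stationarity makes that window equal in law to $\Tcal_k$ in the limit, since the two intervals have the same length. The law of $\sup|\G_\eta|/\sigma$ has no atoms: Gaussian anti-concentration (Nazarov/Chernozhukov–Chetverikov–Kato) applies because the variances are bounded below. Together with Steps 1 and 2, this gives $\limsup\Pr(\sup_{\Tcal_k}|\wh\delta_\eta|/\sigma^*_t>q^*_{k,\alpha})\le\alpha/K$. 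A union bound over $k$ then yields $\alpha$.
\item \emph{Case (ii).} Here $\sigma_t/\sigma^*_t\to0$ uniformly, and $\sup_t|\wh\delta_\eta(t)|/\sigma_t=O_p(1)$ by Step 1. Hence $\sup_t|\wh\delta_\eta(t)|/\sigma^*_t\to0$ in probability.
\begin{itemize}
\item Since $q^*_{k,\alpha}$ is bounded below by a positive constant (the quantile of a standardized supremum is at least roughly $z_{1-\alpha/(2K)}$), the exceedance probability tends to $0$.
\item In this case no matching of distributions is needed.
\end{itemize}
\end{itemize}

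\textbf{Main obstacle.} The main obstacle is the tightness and strong approximation in Steps 1–2 under only $\gamma>2$ moments and polynomial mixing. The balance between the moment exponent, the mixing rate $b$, and the multiplier correlation length $\nu_\eta^\chi$ must be tuned carefully. I would first try a Bernstein big-block/small-block coupling (Berbee's lemma for $\beta$-mixing), with blocks of length $\nu_\eta^{\chi}$ and Gaussian coupling of the block sums, followed by chaining on $d^w_\eta$.
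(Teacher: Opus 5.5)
\textbf{Gap in Step 3, case (i).} You take $q^*_{k,\alpha}$ to be the bootstrap quantile over the calibration window $[t_0+1,t_1^{(k)}]$. You then transfer it to $\Tcal_k$ by ``stationarity''. But hypothesis (i) only fixes the mean, and (S1)--(S2) allow variances and autocovariances that change over time. Equal window lengths do not make the two standardized suprema equal in law.

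For example, take EWMA weights and independent noise whose variance is constant on the calibration window but increases, on the time scale $\nu_\eta$, during $\Tcal_k$. The standardized error is then asymptotically a stationary Ornstein--Uhlenbeck process run on an intrinsic clock. That clock advances by $4L+\log(\sigma^2_{\mathrm{end}}/\sigma^2_{\mathrm{start}})$ over a window of length $L$ (in units of $\nu_\eta/2$), so it runs faster during $\Tcal_k$. The supremum on $\Tcal_k$ is therefore stochastically larger, and the calibration-window quantile undercovers even in the limit. Under your reading the conclusion can fail under (i), so this step cannot be justified.

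The theorem is the idealized statement announced in \Cref{sec:asymptotics}: the quantile is ``known exactly'', ignoring recalibration boundaries. So $q^*_{k,\alpha}$ is the $(1-\alpha/K)$-quantile of $\sup_{t\in\Tcal_k}|\wh\delta^*_\eta(t)|/\sigma^*_t$ on the same block. With that definition no stationarity is needed. Your Steps 1--2 make the original and bootstrap standardized suprema over the same block asymptotically equivalent. Anti-concentration then transfers the quantile, and the union bound over $k$ finishes. This is the paper's route. Its appendix uses a single window $(t_1,t_2]$ at level $\alpha$, and it imports the Gaussian approximation and bootstrap consistency from \citet[Thm.~3.7, Prop.~4.2]{palm2025centrallimittheoremsnonstationarity} rather than proving them via Berbee coupling and chaining. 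Your case (ii) coincides with the paper's argument.

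\textbf{A smaller point in Step 2.} The claim that the centering error, ``multiplied by mean-zero multipliers'', is $o_p(\nu_\eta^{-1/2})$ uniformly is only justified pointwise. Because the $V_i$ are correlated over $\asymp\nu_\eta^{\chi}$ lags, the rescaled term has pointwise second moment of order $\nu_\eta^{\chi-1}$. Chebyshev plus a union bound over the $\asymp\nu_\eta$ time points therefore does not close. The paper obtains uniformity from a dedicated maximal inequality (\Cref{lem:multiplier-tightness}), using the coverings implied by (W2). Before applying it, the paper separately discards the first $\asymp\nu_\eta^{\alpha}$ indices, where $\wh\mu_\eta(i)$ is not yet close to $\mu$.

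Moreover, $\wh\mu_\eta(i-1)-\mu=O_p(\nu_\eta^{-1/2})$ presupposes that the smoother is unbiased under a constant mean. The paper makes this explicit by assuming $\sum_i w_{t,\eta}(i)=1$ in \Cref{thm:1}. For an EWMA started at $s_0=0$ with $\mu\neq 0$, the centering bias $\mu(1-\eta)^{i-1}$ is of order one over the first $\asymp\nu_\eta$ steps. It inflates $\sigma^*_t/\sigma_t$ at rate $\nu_\eta^{\chi/2}$, so that situation falls under case (ii), not under the covariance matching you describe.
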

The result covers two important cases: (i) when the trend is constant, and (ii) when the trend varies sufficiently strongly.
If the trend varies, it typically holds that $\sigma_t\ll\sigma_t^*$ as $\nu_\eta\to\infty$; a sufficient condition is given in \Cref{thm:2} in the appendix.
In this case, we obtain valid, but over-conservative bands. This is unavoidable under general nonstationarity; see \citet[Sec.~4]{palm2025centrallimittheoremsnonstationarity}. 
The main intuition is that the bootstrap variance $\sigma_t^{*2}$ captures the variability of the error process $\wh \delta_\eta(t)$, but is also affected by estimation bias under nonstationarity. 
In particular, one can show that
\begin{align*}
    \sigma_t^{*2} & \gtrsim \var\left[\sum_{i=1}^tw_{t,\eta}(i)V_i [\mu_\eta(i)-\mu(i)]\right].
\end{align*}
As a simple example, with linear trend $\mu(i) = a i$, EWMA weights,  independent multipliers $V_i$, and large $t$, this simplifies to
\begin{align*}
     \sigma_t^{*2} & \gtrsim \sum_{i=1}^t w_{t, \eta}^2(i) [\mu_\eta(i) - \mu(i)]^2 \approx  a^2 /2\eta.
\end{align*}
If $a \neq 0$,  $\sigma_t^{*2} $ diverges as  $\nu_\eta \approx 2/\eta \to \infty$, but $\sigma_t^2$ remains bounded, leading to conservative bands.

Intermediate regimes are more delicate: For dependent multipliers, $\sigma_t^{*2}$ may be smaller than $\sigma_t^2$ under specific interactions between the multiplier covariances and bias sequence $\mu_\eta(i) - \mu(i)$.
A complete characterization of these cases would require substantially more complicated assumptions on the trend and dependence structure. Nevertheless, our experimental results suggest that the proposed calibration remains reliable over a broad range of such finite-sample regimes.


\begin{figure*}[!t]  
  \centering
  \includegraphics[width=0.9\textwidth]{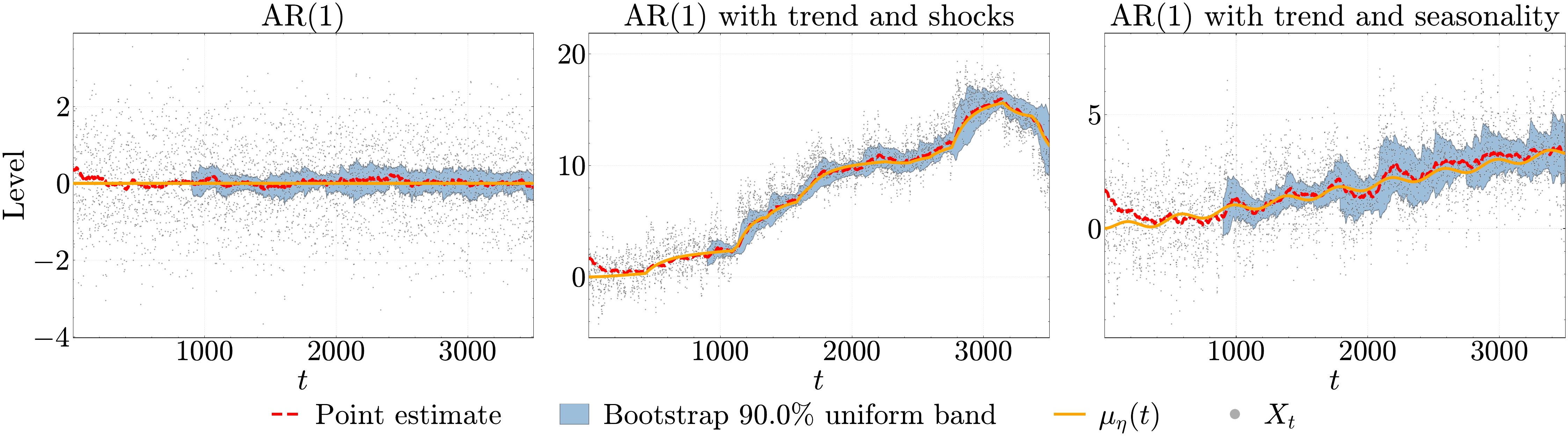}
  \caption{Simulated data, true and estimated smooth mean, and confidence bands obtained from Algorithm \ref{algo:main} with $\alpha=0.1$ and EWMA weights.}
  \label{fig:wide-compare-processes}
\end{figure*}
\section{EXPERIMENTS} \label{sec:experiments}

\begin{figure*}[t]
  \centering
  \includegraphics[width=1\textwidth,keepaspectratio]{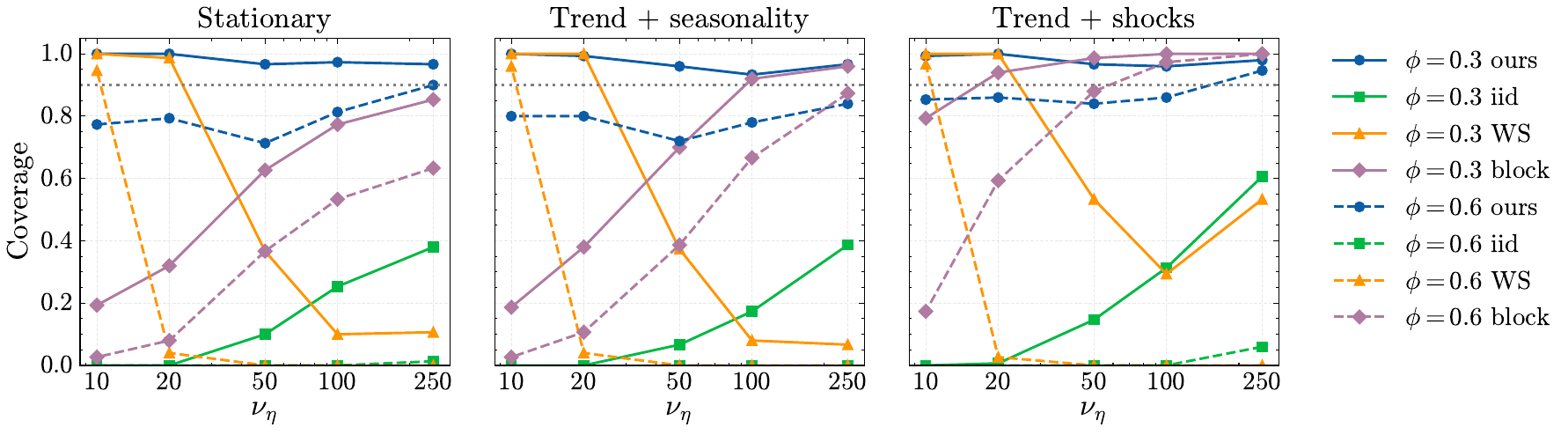}
  \caption{Uniform coverage for various DGPs and $\alpha=0.1$ using EWMA.}
  \label{fig:mainexp}
\end{figure*}

In this section, we assess the performance of Algorithm \ref{algo:main} through simulation experiments. In particular, we verify the theoretical results and the suggested finite-sample corrections, illustrating their necessity in nonstationary settings. 
All experiments can be reproduced using the accompanying repository\footnote{\url{https://github.com/Tobias-Brock/olbootstrapping}}.

\begin{figure*}[t] 
  \centering
  \includegraphics[width=1\textwidth,keepaspectratio]{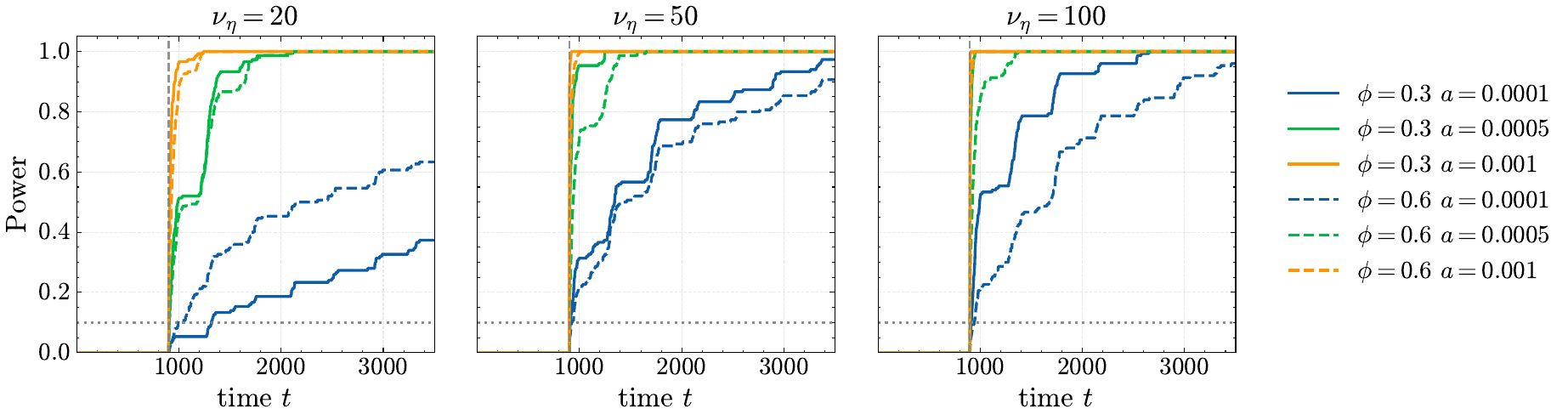}
  \caption{Power for detecting the non-constant trends in the Trend + seasonality DGP under varying trend slopes $a$ using EWMA; $a = 0.001$ is shown in \cref{fig:wide-compare-processes}.}
  \label{fig:power}
\end{figure*}

\subsection{Data Generating Process}


We simulate univariate time series $(X_i)_{i=1}^n$ with mean
\begin{align*}
   \E[X_i] =  m_i = \mu  + \tau_i + s_i+L_i,
\end{align*}
where $\mu \in \mathbb{R}$ is a constant level, $\tau_i$ is a linear trend, $s_i$ is a seasonal component and $L_i$ is a random shock level. The trend and seasonal component are
\begin{align*}
    \tau_i = a\cdot i, \quad s_i = A \sin\left(\frac{2\pi i}{P}+\psi\right)
\end{align*}
for a slope $a\in \mathbb{R}$ and seasonal amplitude $A \in [0,\infty)$, where $P\in(0,\infty)$ is the seasonal period and $\psi\in[0,2\pi)$ the seasonal phase. At each time point $i$, a random shock occurs independently with probability $p \in[0,1]$, with $B_i \sim \text{Bernoulli}(p)$. If a shock occurs, draw $J_i \sim N(0,\sigma_J^2)$ independent across $i$. The accumulated shock level  is
\begin{align*}
    L_0 = 0, \qquad L_i=L_{i-1}+B_iJ_i.
\end{align*}
We consider a first-order autoregressive (AR(1)):
\begin{align*}
    X_i = m_i + \phi(X_{i-1}-m_{i-1}) + \varepsilon_i,
\end{align*}
where $\phi \in \mathbb{R}$ and $\varepsilon_i=\sigma z_i$ with $z_i \overset{iid}{\sim} \mathcal{N}(0,1)$.


\paragraph{Setup} We set $\mu=0$ and unless noted otherwise, $z_i\stackrel{iid}{\sim}N(0,1)$,
$\varepsilon_i=\sigma z_i$ with $\sigma=1$. Three regimes are considered for $\phi\in \{0.3,0.6\}$:
\begin{enumerate}
  \item Stationary:\quad $a = 0 \implies \tau_i=0 \ \forall i, s_i= 0, p=0$.
  \item Trend + seasonality:\quad $a=10^{-3}$, and $A=0.4$, $P=400$, $\psi=0$, $p=0$.
  \item Trend + shocks:\quad With $a=10^{-3}$, $s_i = 0$,
        and shocks occurring with $p=0.005$, $\sigma_J=2$.
\end{enumerate}
Figure \ref{fig:wide-compare-processes} illustrates time series simulated from the DGPs and confidence bands computed by our method.

We compare four methods:
\begin{itemize}[leftmargin=0.5cm]
    \item \textbf{ours}: Algorithm \ref{algo:main} with $\chi=1/3$.
    \item \textbf{iid}: Algorithm \ref{algo:main} with $\chi=0$, which reduces to an $iid$ Gaussian multiplier bootstrap \citep[e.g.,][]{chernozhukov2013gaussian}.
    \item \textbf{WS}: The Lindeberg-type Gaussian mixture martingale asymptotic confidence sequence of \citet{WaudbySmithArbourSinhaKennedyRamdas2024TimeUniformCLTACS} adapted to smoothed means (see \cref{sec:AsympCS}).
    \item \textbf{block}: A circular moving block bootstrap \citep{politis1992circular} adapted to smoothed means (see Appendix \ref{app:block-bootstrap}). 
\end{itemize}
The Gaussian multiplier bootstrap and the asymptotic confidence sequences are included mainly for reference, but should not be expected to give valid inferences because they assume independent data streams. To the contrary, the block bootstrap captures dependence in the data but is computed offline, leading to $O(Bt)$ update cost at every time point $t$.
\paragraph{Evaluation} 
We generate $150$ independent series per configuration of length $n=3500$, with a burn-in of $500$ observations and a calibration period of $t_1-t_0=400$. 
We use EWMA smoothing and set $B=400$ bootstrap replicates with $B_1 = B/5$ for a target confidence level $1-\alpha=0.9$. 
Our primary metric is the \emph{uniform-in-time coverage}, defined as the fraction of Monte Carlo replications in which 
$\sup_{t\in\mathcal{T}} |\wh\mu_\eta(t)-\mu_\eta(t)| \le \wh c_{t,\alpha}$ on the evaluation horizon $\mathcal{T}$.
Additional experiments for Brown smoothing and different $\alpha$ are in Appendix \ref{sec:app_addexp}.


\subsection{Results}


\paragraph{Coverage}
 Figure \ref{fig:mainexp} shows the coverage for the different DGPs using EWMA smoothing. For our proposed method, the AR(1) process with $\phi=0.3$ exhibits overcoverage, while the AR(1) process with $\phi=0.6$ slightly undercovers for small effective sample sizes $\nu_\eta$. This highlights the natural trade-off between the effective sample size and the strength of dependence in the process, as well as the tendency of time-series bootstraps to underrepresent autocovariances when the effective sample size is small. Similar behavior can be observed for the experiments in Appendix \ref{sec:app_addexp}. In all cases, the iid and WS baselines suffer from severe undercoverage because they cannot account for temporal dependence. The block bootstrap achieves nominal coverage only for larger effective sample sizes, since it does not incorporate a small-sample correction. 
Practitioners should therefore be cautious when the dependence range is
large relative to the smoothing window, and in such cases use larger
effective sample sizes or interpret the resulting bands conservatively.

\paragraph{Width}

Figure \ref{fig:stationary_width} shows a decrease in the average interval width similar to $1/\sqrt{\nu_\eta}$. This is in line with theoretical expectations, as the variance is of order $1/\nu_\eta$ asymptotically. This further validates that the intervals are appropriately scaled, even when the uniform coverage approaches 1, as in the case of the AR(1) process with $\phi=0.3$, where the interval strongly overcovers. 
A similar behavior can be observed for other configurations, as shown in Appendix A.1. Naturally, the baseline intervals' widths are substantially smaller than our intervals, as they fail to provide accurate coverage. 

 \begin{figure}[!htbp]
  \centering
  \includegraphics[width=1\columnwidth]{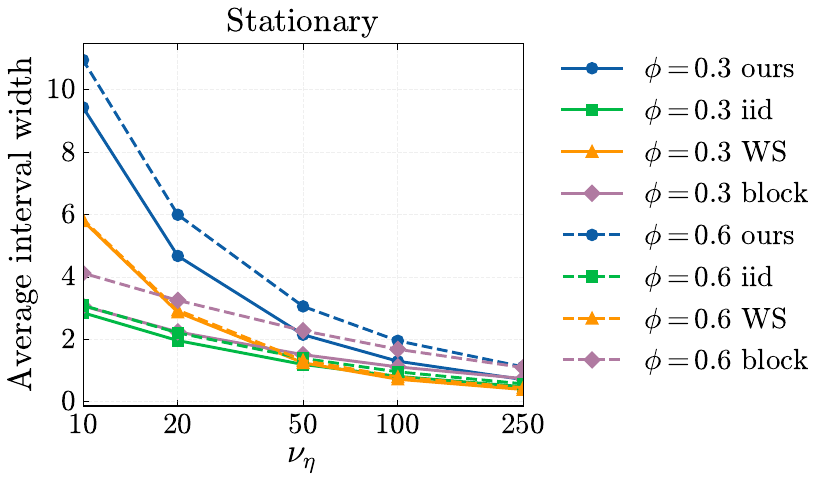}
  \caption{Average interval width for various stationary processes.}
  \label{fig:stationary_width}
\end{figure}







\paragraph{Power}
\cref{fig:power} illustrates the power of tests for the null hypothesis $H_0\colon \mu_\eta(t) = 0 \forall t$. The lines indicate the frequency of rejecting the null by time $t$ (starting after the burn-in period at $t = 900)$.
For easier reference, we simulate time series from the DGP shown in the right panel of \cref{fig:wide-compare-processes} (where $a = 0.001$). As expected, strong trends are detected very quickly, and the power deteriorates when the slope decreases, the effective sample size $\nu_\eta$ is small, or the serial dependence coefficient $\phi$ is large.

\paragraph{Runtime}
A per-update runtime comparison is illustrated in Figure~\ref{fig:rebuttal-runtime}. While the iid multiplier bootstrap and the WS baseline are slightly faster than our method, they fail to account for the dependence in the data. In contrast, the block bootstrap becomes increasingly expensive due to its $O(Bt)$ update cost. This illustrates the main computational advantage of our method: it retains dependence-aware inference while remaining suitable for streaming settings.

\begin{figure}[!htbp]
    \centering
    \includegraphics[width=1\columnwidth]{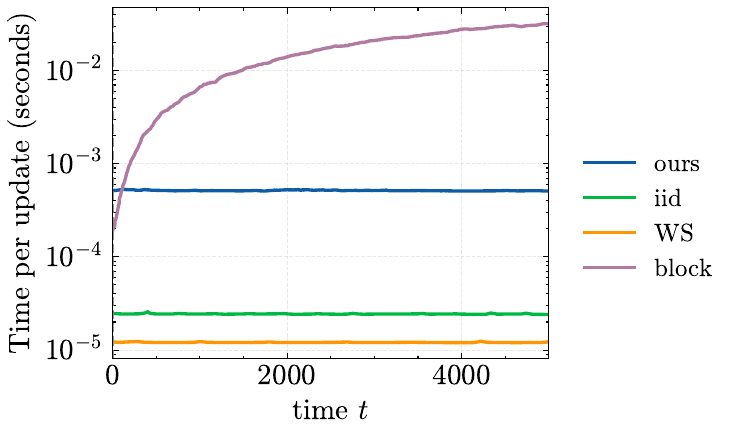}
    \caption{
    Per-update runtime comparison between methods.
    }
    \label{fig:rebuttal-runtime}
\end{figure}

\paragraph{Strongly nonstationary DGPs}
We further evaluate the methods on three strongly nonstationary DGPs that stress-test performance under nonlinear dependence, conditional heteroskedasticity, higher-order serial dependence, and structural breaks. The results in Appendix~\ref{app:strong-nonstationarity} show that our method remains robust across these more challenging settings, whereas the iid and WS baselines again suffer from undercoverage. Similarly, the block bootstrap underperforms for small effective sample sizes, while being significantly more expensive.

\paragraph{Sensitivity and ablation}
In Appendix \ref{sec:app_ablation}, we provide additional results on the sensitivity of our method to several design choices and hyperparameters; these results are summarized as follows.
As a preliminary stability check, we increase the series length to $n=5000$ and the number of replicates to 250, confirming that our method stabilizes with larger samples. We also conduct a set of ablation experiments. Varying the bootstrap size $B$ (keeping $B_1 = B/5$) shows that the method is stable for $B \geq 100$, whereas smaller values lead to noticeable deterioration. Changing the length of the calibration period $t_1 - t_0$ has a much weaker effect, with reliable results already for $t_1 - t_0 \geq 50$. Note that this also changes the number of recalibrations $K$, indicating robustness against this parameter as well. In contrast, the degrees of freedom parameter $\nu$ in the transformation $T$ strongly affects coverage for small effective sample sizes.  Setting $T = \mathrm{id}$ exacerbates the issue, underscoring the need for a finite-sample correction. Moreover, coverage remains stable for appropriate choices of $\chi$. Finally, replacing Gaussian noise with heavier-tailed $t_6$ innovations had only a minor impact. 



\section{CONCLUSION}

This article proposes an online bootstrap method for inference on nonstationary time series trends. It provides uniform-in-time confidence bands for EWMA-type smoothers with $O(1)$ update cost per observation and remains approximately valid under a broad range of conditions, including cases with very small effective sample sizes, trends, jumps, and strong dependence. This advances the toolbox for principled and scalable inference for real-time trend estimation in nonstationary data streams.

Looking ahead, an especially promising direction is integration with modern online learning techniques such as online gradient descent and mirror descent \citep{Hazan2016, cesa2021online}, adaptive regret minimization \citep{Jun2017}, and bandit-style algorithms \citep{Bubeck2012}. Such connections could facilitate a range of high-impact applications, from continuously monitored A/B testing \citep{Johari2022} to anomaly detection in streaming systems \citep{Ahmed2016}, as well as real-time analytics in finance, health, and other dynamic domains.

\bibliography{bibliography}

\newpage

\onecolumn

\title{Online Bootstrap Inference for the Trend of Nonstationary Time Series \\(Supplementary Material)}
\maketitle

\appendix

\section{ADDITIONAL EXPERIMENTS}
\label{sec:app_exp}
Section \ref{sec:app_addexp} extends our main experiments by testing the proposed method at a confidence level of $1-\alpha=0.8$ and under Brown smoothing. Section \ref{app:strong-nonstationarity} stress tests our method with strongly nonstationary processes, and Section \ref{sec:app_ablation} presents additional ablation results. 

\subsection{Results for Smoothing and Confidence Variations}
\label{sec:app_addexp}

Figures \ref{fig:mainexp_ewma_alpha02}, \ref{fig:mainexp_brown_alpha01}, \ref{fig:mainexp_brown_alpha02} illustrate that the method achieves consistent performance under varying $\alpha$ and Brown smoothing.

\begin{figure}[H]
  \centering
  \includegraphics[width=1\textwidth,keepaspectratio]{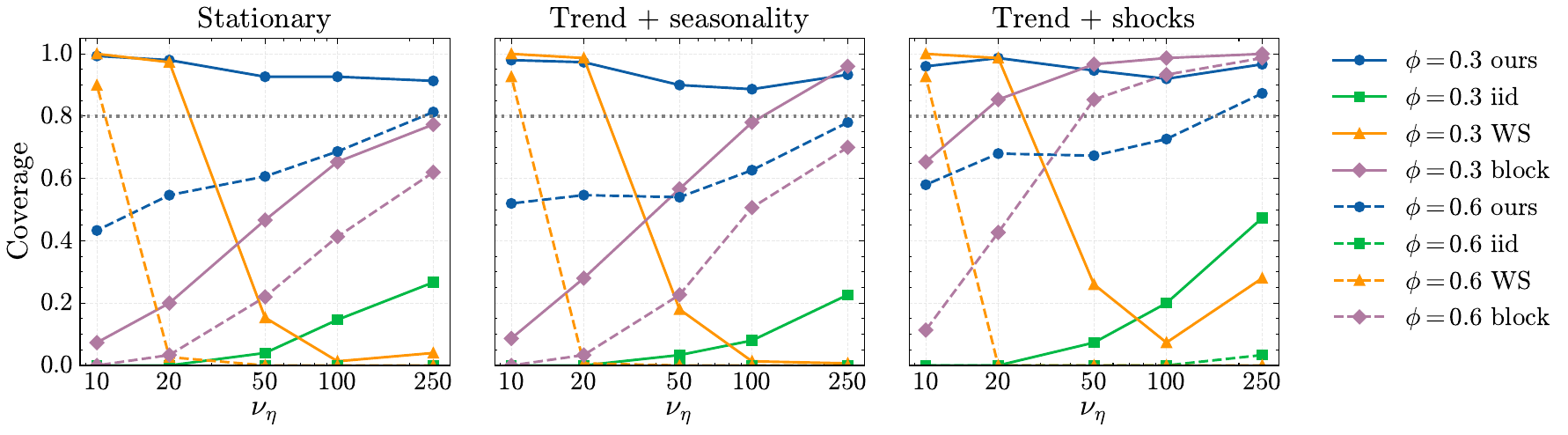}
  \caption{Uniform coverage for various DGPs and $\alpha=0.2$ using EWMA.}
  \label{fig:mainexp_ewma_alpha02}
\end{figure}

\FloatBarrier

\begin{figure}[H]
  \centering
  \includegraphics[width=1\textwidth,keepaspectratio]{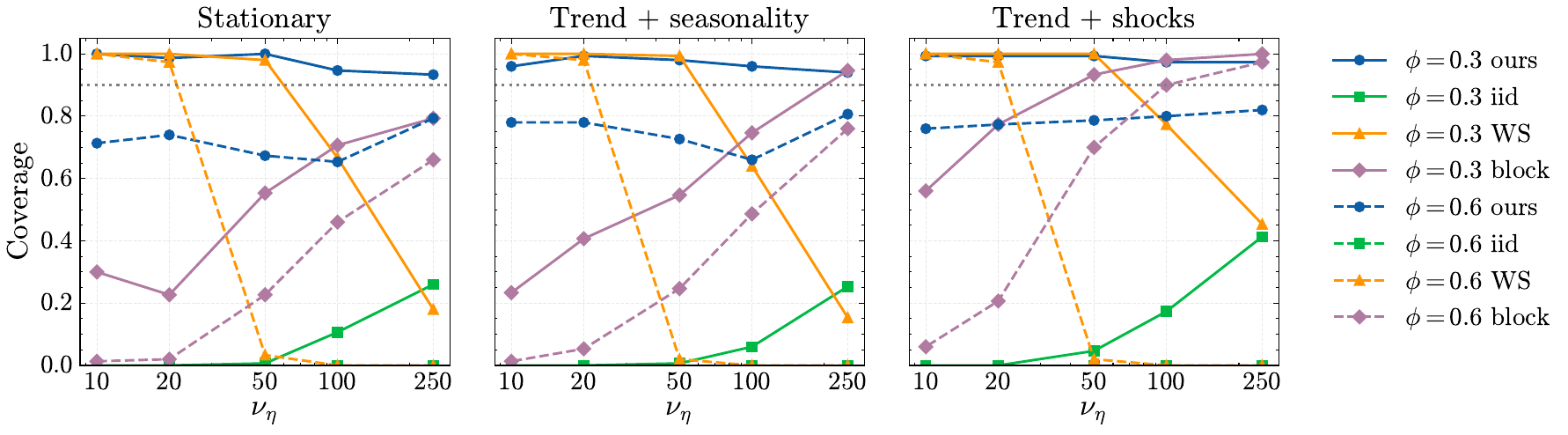}
  \caption{Uniform coverage for various DGPs and $\alpha=0.1$ using Brown.}
  \label{fig:mainexp_brown_alpha01}
\end{figure}

\FloatBarrier
\begin{figure}[H]
  \centering
  \includegraphics[width=1\textwidth,keepaspectratio]{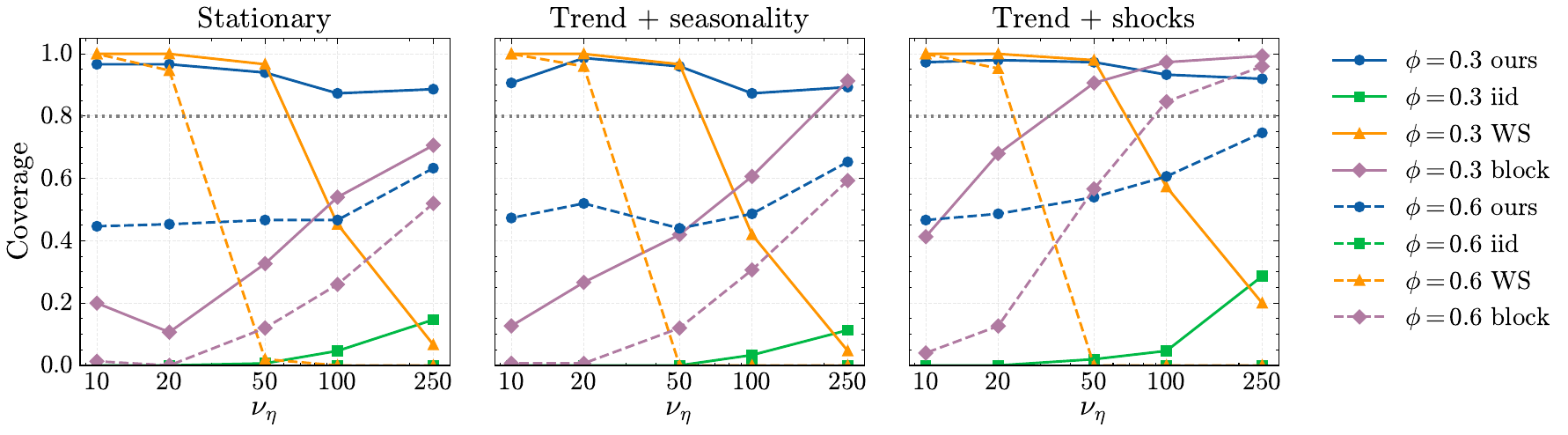}
  \caption{Uniform coverage for various DGPs and $\alpha=0.2$ using Brown.}
  \label{fig:mainexp_brown_alpha02}
\end{figure}


\subsection{Strong Nonstationarity}
\label{app:strong-nonstationarity}

We additionally evaluate the methods on more challenging data-generating processes with stronger forms of nonstationarity. These experiments are designed to stress-test performance under nonlinear dependence, conditional heteroskedasticity, higher-order serial dependence, and structural breaks.

We consider the following three regimes.

\paragraph{Sine-GARCH.}
This process combines a strongly seasonal mean with nonlinear and conditionally heteroskedastic fluctuations. The observed series is
\[
    X_i = m_i + Y_i,
    \qquad
    m_i = A \sin\left( \frac{2\pi i}{P} + \psi \right),
\]
where $A=1.5$, $P=400$, and $\psi=0$. The centered process follows
\[
    Y_i = c \sin(Y_{i-1}) + \sigma_i z_i,
\]
with nonlinear coefficient $c=0.8$ and $z_i \overset{iid}{\sim} N(0,1)$. The conditional variance evolves according to a GARCH(1,1) recursion
\[
    \sigma_i^2
    =
    \omega + \alpha Y_{i-1}^2 + \beta \sigma_{i-1}^2.
\]
We set $\alpha=0.08$, $\beta=0.90$, so that the persistence is $\alpha+\beta=0.98$, and choose $\omega=(1-0.98)$.

\paragraph{Sparse AR(20).}
A sparse higher-order autoregression with a quadratic trend and shocks:
\[
    X_i
    =
    m_i
    +
    \sum_{\ell=1}^{20} \phi_\ell
    \left(X_{i-\ell} - m_{i-\ell}\right)
    +
    \varepsilon_i,
    \qquad
    \varepsilon_i \overset{iid}{\sim} N(0,1).
\]
The lags $\ell \in \{1,2,5,10,20\}$ have nonzero coefficients $(0.1575, 0.1125, 0.09, 0.054, 0.036)$ with total mass $0.45$.
The mean contains a quadratic trend and permanent shocks,
\[
    m_i = a(i-1)^2 + L_i,
    \qquad
    a = 10^{-6},
\]
where
\[
    L_i = L_{i-1} + B_i J_i,
    \qquad
    B_i \sim \operatorname{Bernoulli}(0.01),
    \qquad
    J_i \sim N(0, 2^2).
\]

\paragraph{Structural breaks.}
The third process is an AR(1) process with piecewise linear trends, level jumps, and changes in the innovation scale, with $\phi = 0.45$. 
We sample two structural break points uniformly subject to a margin of 500 observations from the boundaries and a minimum distance of 600 observations between breaks. At each break, the sign of the local trend coefficient is reversed. The initial slope is $a=0.002$, and the two breaks additionally induce level jumps of size $3$ and $-3$, respectively. Within each segment, the mean is locally linear,
\[
    m_i = \ell_j + a_j \tau_i,
\]
where $\tau_i$ denotes the local time since the beginning of the current segment, $a_j \in \{a,-a\}$ is the segment-specific trend coefficient, and $\ell_j$ is the segment-specific level after accumulating the previous trend and jump effects. The innovation scale also changes with the local trend direction: we use standard deviation $0.7$ when the local trend coefficient is nonnegative and standard deviation $1.6$ when it is negative.

\paragraph{Evaluation.}
For all three regimes, we use the same experiment setup as in Section \ref{sec:experiments} with  EWMA smoothing and target confidence level $1-\alpha=0.9$ with $B=200$. Figure~\ref{fig:rebuttal-dgps} illustrates the DGPs with confidence bands, while Figure~\ref{fig:rebuttal-strong-nonstationarity} reports the corresponding uniform coverage.

\begin{figure}[ht]
    \centering
    \includegraphics[width=1\textwidth]{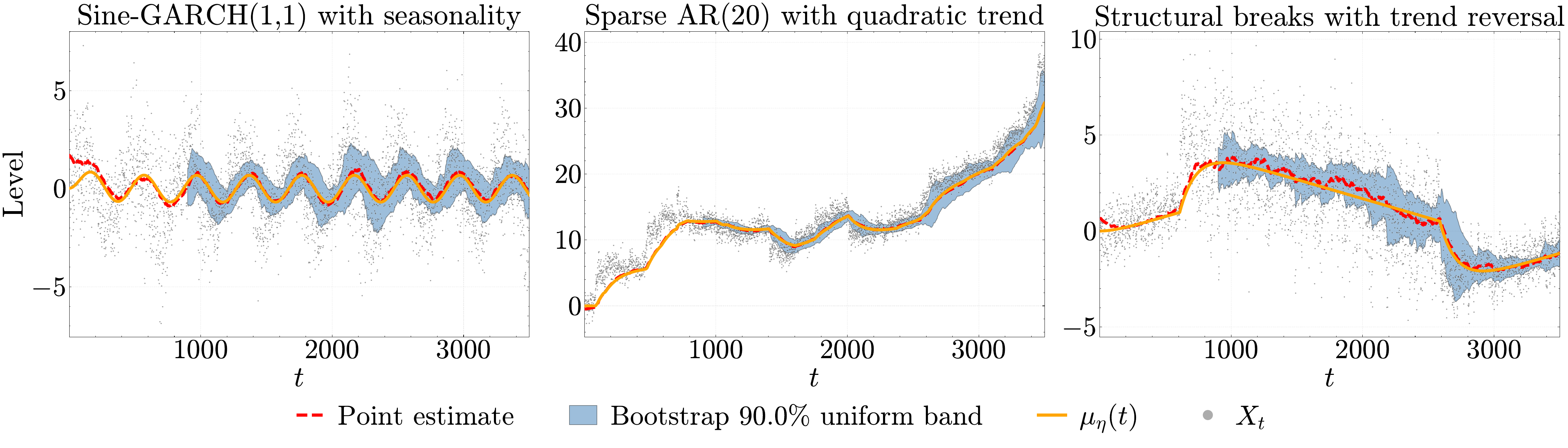}
    \caption{
     Strongly nonstationary data, true and estimated smooth mean, and confidence bands obtained from Algorithm 1 with $\alpha=0.1$ and EWMA weights.
    }
    \label{fig:rebuttal-dgps}
\end{figure}

\begin{figure}[ht]
    \centering
    \includegraphics[width=1\textwidth]{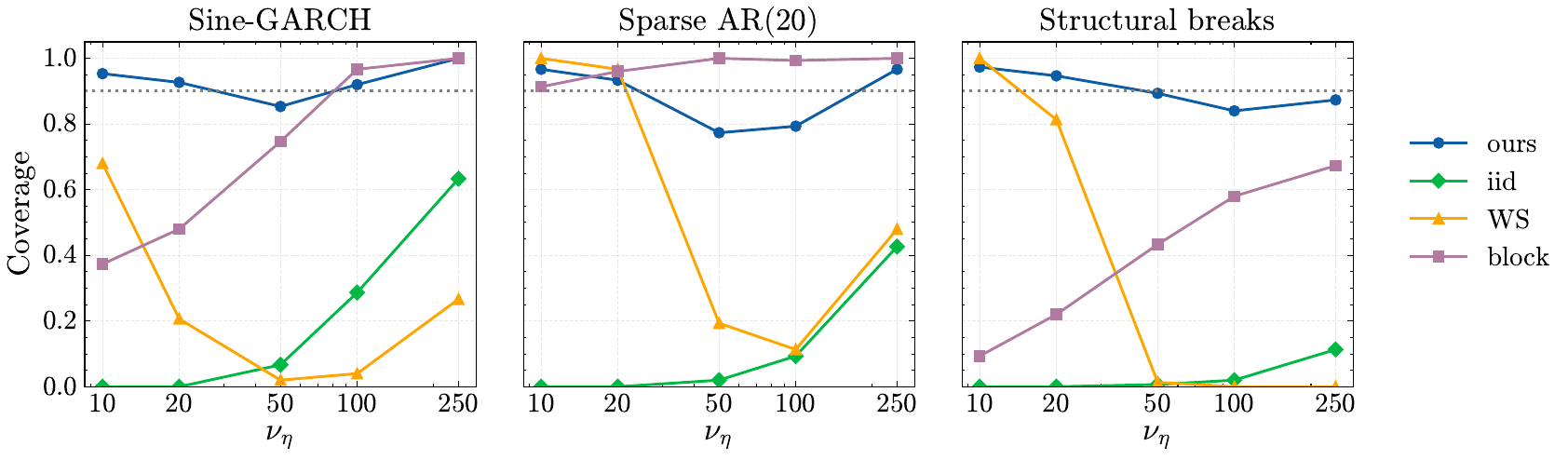}
    \caption{
    Uniform coverage for strongly nonstationary DGPs with $\alpha=0.1$ using EWMA.
    }
    \label{fig:rebuttal-strong-nonstationarity}
\end{figure}

\FloatBarrier
\subsection{Ablation Study}
\label{sec:app_ablation}

Figure \ref{fig:trends_width} shows that interval widths increase consistently also for nonstationary processes. Figure \ref{fig:ablationcompilation} further illustrates that removing the finite-sample correction $T$ drastically reduces coverage. Moreover, using heavier $t_6$ tails for the DGPs does not significantly affect performance. Finally, Figure \ref{fig:ar15000} demonstrates that our method stabilizes with large samples.

\begin{figure}[H]
  \centering
  \includegraphics[width=0.85\textwidth,keepaspectratio]{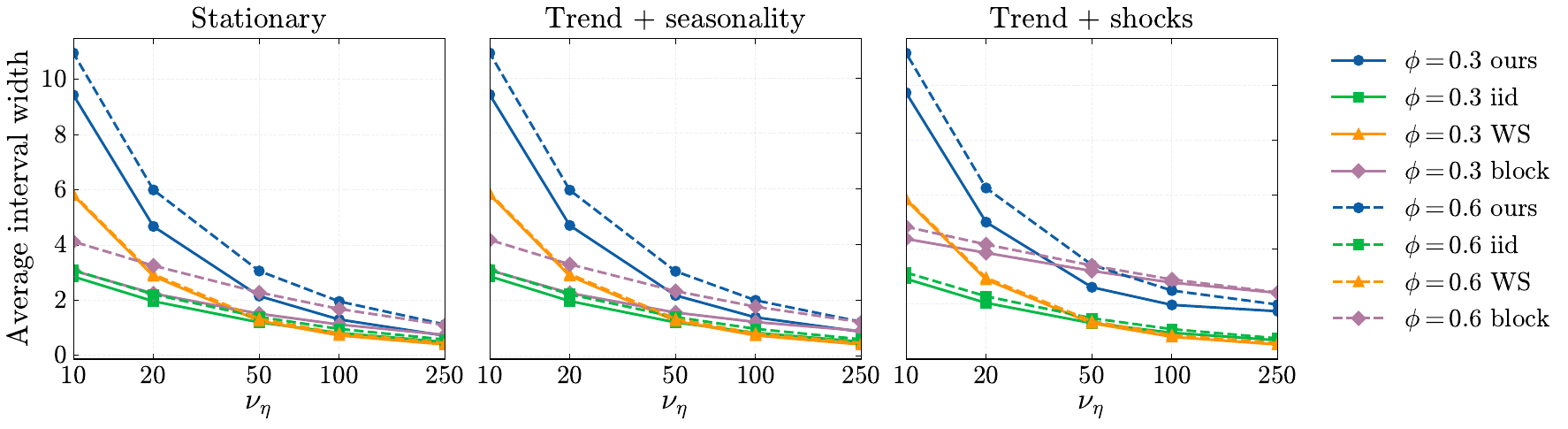}
  \caption{Average interval widths for various DGPs and $\alpha=0.1$ using EWMA.}
  \label{fig:trends_width}
\end{figure}
\FloatBarrier

\begin{figure}[H]   
  \centering
  \includegraphics[width=1\textwidth,keepaspectratio]{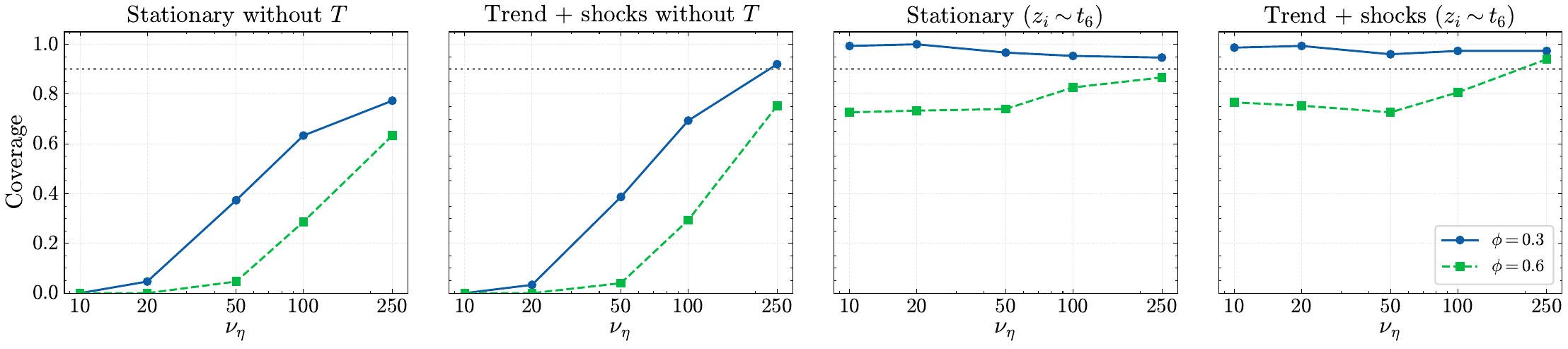}
  \caption{Uniform coverage of various AR(1) processes using EWMA without student transformation $T$ in bootstrap procedure (top) and with heavy $t_6$ tails (bottom).}
  \label{fig:ablationcompilation}
\end{figure}

\begin{figure}[H] 
  \centering
  \includegraphics[width=1\textwidth,keepaspectratio]{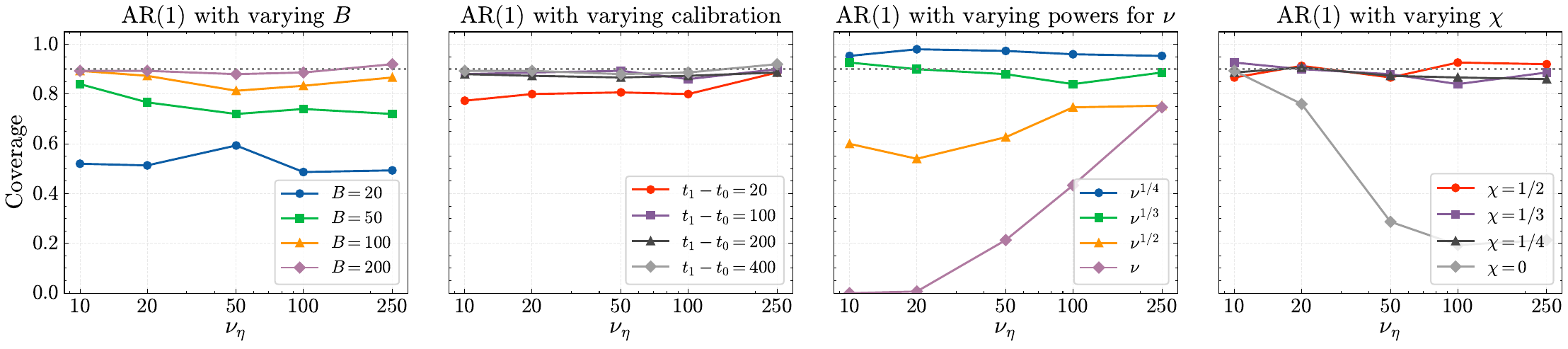}
  \caption{Uniform coverage for stationary AR(1) process with $\phi=0.5$ with varying values of $B$, $t_1-t_0$ and $\nu_\eta$.}
  \label{fig:ablation1}
\end{figure}

\begin{figure}[H]   
  \centering
  \includegraphics[width=0.5\textwidth,keepaspectratio]{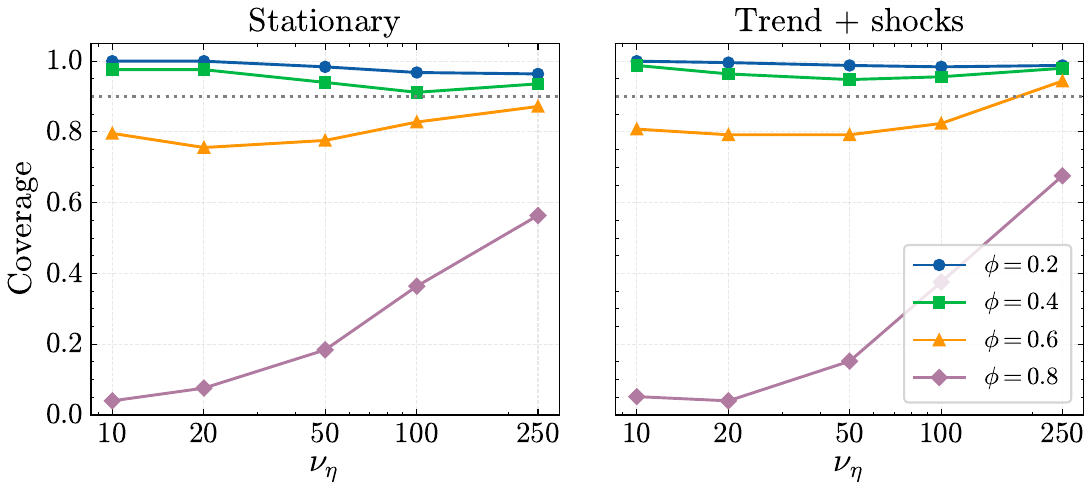}
  \caption{Uniform coverage of various AR(1) processes of size $n=5000$ for 250 time series using EWMA.}
  \label{fig:ar15000}
\end{figure}

\subsection{Asymptotic Confidence Sequences} \label{sec:AsympCS}

We adapt the AsympCS construction of \citet[Proposition 2.5]{WaudbySmithArbourSinhaKennedyRamdas2024TimeUniformCLTACS} to our smoothed mean target. Let $(X_t)_{t\ge1}$ denote the observed time series and let the smoother produce the online estimate $\widehat\mu_{\eta,t}$. A convenient way to connect AsympCS to this target is through one-step-ahead innovations. We define
\[
Y_t=X_t-\widehat\mu_{\eta,t-1},\qquad t\ge2,
\]
where $\widehat\mu_{\eta,t-1}$ is measurable with respect to $\sigma(X_1,\dots,X_{t-1})$. Writing the smoother in weight form,
\[
\widehat\mu_{\eta,t}=\sum_{i=1}^t w_{t,i}X_i,
\]
one can equivalently think of AsympCS as acting on the time-indexed sequence $Y_i^{(t)}=t w_{t,i}X_i$, whose average matches the smoothed target in expectation:
\[
\frac1t\sum_{i=1}^t\mathbb E[Y_i^{(t)}]=\sum_{i=1}^t w_{t,i}\mathbb E[X_i]=\mathbb E[\widehat\mu_{\eta,t}].
\]

In our experiments, we use the Gaussian-mixture AsympCS boundary with miscoverage level $\alpha$ and mixture parameter $\rho>0$, together with an online variance estimate $\widehat\sigma_t^2$. In the parameterization of \citet{WaudbySmithArbourSinhaKennedyRamdas2024TimeUniformCLTACS}, the resulting half-width is
\[
w_t=\sqrt{\frac{2(1+\nu_\eta\widehat\sigma_t^2\rho^2)}{\nu_\eta^2\rho^2}\log\left(\frac{\sqrt{1+\nu_\eta\widehat\sigma_t^2\rho^2}}{\alpha}\right)}.
\]
We report the anytime-valid confidence sequence for the smoothed mean level as
\[
\mathcal C_t=[\widehat\mu_{\eta,t}-w_t,\widehat\mu_{\eta,t}+w_t],\qquad t\ge1.
\]

\subsection{Block Bootstrap}
\label{app:block-bootstrap}

As an additional dependence-aware batch baseline, we implement a moving-block bootstrap for the smoothed mean estimator. At time $t$, let $\widehat \mu_\eta(i)$ denote the smoother applied to the observed series up to time $i$. We first form one-step-ahead innovations
\[
    r_i = X_i - \widehat \mu_\eta(i-1), \qquad i = 1,\ldots,t.
\]
This centering matches the residual construction used in our online bootstrap, while the subsequent resampling step follows a standard moving-block bootstrap. For each bootstrap replicate $b=1,\ldots,B$, we draw a circular moving-block bootstrap sample from the centered residuals and set
\[
    \ell = \left\lceil 4 \nu_\eta^{1/3} \right\rceil,
\]
which is the optimal block length up to constants \citep{politis2004automatic} and worked well in our experiments. Blocks of length $\ell$ are sampled uniformly from the circular residual sequence and concatenated until a bootstrap residual path $r_1^{*(b)},\ldots,r_t^{*(b)}$ of length $t$ is obtained. We then apply the same smoother, with the same initialization as for the original estimator.



Each recomputation resamples and smooths the complete history up to time $t$. Thus, a single recomputation costs $O(Bt)$ time and memory, and recomputing at every time point costs $O(Bt^2)$.

\section{THEORY AND PROOFS} \label{sec:proofs}

To simplify notation in the following, we shall assume that $c_1 = 1$ and denote $c_2 = c$.

\subsection{Assumptions} \label{sec:assumptions}

Recall the construction of the bootstrap weights:
\begin{align*}
    V_{t} = T(Z_t), \quad Z_t = \rho_\eta Z_{t - 1} + \sqrt{1 - \rho_\eta^2} \xi_t,
\end{align*}
where $\xi_1, \xi_2, \ldots \stackrel{iid}\sim \Ncal(0, 1)$, $\rho_\eta = 1 - \nu_\eta^{-\chi}, \chi\in (0,1/2)$. Note that we chose $\chi=1/3$ in our implementation.

\paragraph{Assumptions on the time-series} Our dependence conditions on the time series are formulated in terms of $\beta$-mixing coefficients, which we recall first.
\begin{definition}
    Let $(\Omega,\mathcal{A},P)$ be a probability space and $\mathcal{A}_1,\mathcal{A}_2\subset \mathcal{A}$
    sub-$\sigma$-algebras.
    The \emph{$\beta$-mixing coefficient} is defined as
    $$\beta(\mathcal{F}_1,\mathcal{F}_2)=\delta(P_{\Acal_1}\otimes P_{\Acal_2},P_{\Acal_1\otimes \Acal_2}),$$
    where $\delta$ is the total variation distance and $P_{\Acal_i}$ is the restriction of $P$ to $\Acal_i$.
    For a sequence $X_{i}$ of random variables with common (co)domain, $p<n$ define $$\beta_n^X(p)=\sup_{k\leq n-p}\beta\left(\sigma(X_{1},\ldots,X_{k}),\sigma(X_{k+p},\ldots,X_{n})\right).$$
    We write $\beta_n(p)$ instead of $\beta^X_n(p)$ whenever it is clear from context. 
\end{definition}

Let $X_i\in \R$ be random variables such that there is a $\gamma>2$ satisfying the following conditions:
\begin{enumerate}[label=(S\arabic*), leftmargin=1.3cm]
        \item  $\sup_{i}\E[|X_i|^\gamma]<\infty$
        \item $\sup_n \max_{m\leq n}m^{b}\beta_n(m)<\infty$ with $b \chi >\gamma/(\gamma-2).$
\end{enumerate}

The assumptions require the existence of slightly more than a second moment and a restriction on the serial dependence. The latter requires events to become increasingly independent as they become more separated in time. The dependence has to fade more quickly if the tails of $X_i$ are heavy (i.e., $\gamma$ is small). These conditions are standard in the literature on time series analysis and comparatively mild.

\paragraph{Assumptions on the weights}
Set $w_{t,\eta}(i)=0$ for $t<i$, define
\begin{align*}
    g_{\eta,s}&:\{1, \dots, c\nu_{\eta} \}\to \R, i\mapsto \nu_\eta w_{\lfloor sc\nu_{\eta}\rfloor, \eta}(i)
\end{align*}
and the semi-metric $d^w_\eta$ on $[1/c,1]$ by
$$d^w_\eta(s,t)=\|g_{\eta,s}-g_{\eta,t}\|_{\gamma,\eta}=\left(\frac{1}{\nu_\eta}\sum_{i=1}^{\infty}|\nu_\eta w_{\lfloor sc\nu_{\eta}\rfloor, \eta}(i)-\nu_\eta w_{\lfloor tc\nu_{\eta}\rfloor, \eta}(i)|^{\gamma}\right)^{1/\gamma}.$$


We impose the following conditions on the weights: 
\begin{enumerate}[label=(W\arabic*), leftmargin=1.5cm]
    \item  $\sup_{t,i,\eta}|\nu_{\eta}w_{t, \eta}(i)|<\infty$ 
    \item   There is a constant $C  \in [1, \infty)$ such that $d_\eta^w(s, t) \le C (| s - t| + \nu_\eta^{-1})^{1/C} < \infty$ for all $s,t\in [1/c,1]$.
\end{enumerate}

The first prevents the weights from spiking too heavily, while the second requires the weights to be sufficiently smooth in $s$. The conditions are mild and verified as an example for the EWMA weights in Lemma \ref{lem:weights} below. 

Condition \ref{A5:weights} is a sufficient condition for a set of more technical conditions, which are used in the proofs of our main results below.
Set $\Wcal_\eta = \{g_{\eta,s}\colon s\in [1/c,1]\}$ and let  $N_{\left[\right]}\left( \epsilon,\Wcal_\eta,\|\cdot\|_{\gamma,\eta}\right)$ denote the bracketing number of $\Wcal_\eta$ with respect to the $\|\cdot\|_{\gamma,\eta}$-norm \citep[e.g.,][Definition 2.1.6]{van2023weak}.

\begin{lemma}
    Condition \ref{A5:weights} implies the following:
    \begin{enumerate}[label=(W\arabic*), leftmargin=1.5cm]
        \setcounter{enumi}{2}
        \item  \label{A1:weights} $N_{\left[\right]}\left( \epsilon,\Wcal_\eta,\|\cdot\|_{\gamma,\eta}\right)=\mathcal{O}(\epsilon^{-C})$ for some $C \in (0, \infty)$.
    \item \label{A2:weights} there exists a semi-metric $d^w$ on $[1/c,1]$ such that for all $\delta_\eta\downarrow 0$
    $$\lim_{\nu_\eta\to \infty}\sup_{d^w(s,t)<\delta_\eta}d^w_{\eta}(s,t)=0.$$
    \item \label{A3:weights}$([1/c,1],d^w)$ is totally bounded.
    \end{enumerate}
    The semi-metric $d^w$ can be chosen as $d^w(s,t)=|s-t|$.
\end{lemma}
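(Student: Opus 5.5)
The plan is to take $d^w(s,t)=|s-t|$ and derive (W3)–(W5) directly from the Hölder-type modulus in \ref{A5:weights}, which bounds $d^w_\eta(s,t) \le C(|s-t|+\nu_\eta^{-1})^{1/C}$ on $[1/c,1]$.

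\emph{(W5).} This is immediate: $([1/c,1],|\cdot|)$ is a compact interval. For any $\delta>0$ it is covered by $\lceil (1-1/c)/\delta\rceil$ balls of radius $\delta$, so it is totally bounded.

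\emph{(W4).} Let $\delta_\eta\downarrow 0$. Whenever $|s-t|<\delta_\eta$, \ref{A5:weights} gives
\[
d^w_\eta(s,t)\le C(\delta_\eta+\nu_\eta^{-1})^{1/C}.
\]
The right-hand side does not depend on $s,t$. It tends to zero as $\nu_\eta\to\infty$, since both $\delta_\eta\to 0$ and $\nu_\eta^{-1}\to 0$ and $1/C>0$. Taking the supremum over such pairs proves the claim.

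\emph{(W3).} This is the main step. I would build brackets from a grid. Fix $\epsilon>0$ and pick grid points $1/c=s_0<s_1<\dots<s_M=1$ with spacing $h$. The key observation is that $s\mapsto g_{\eta,s}$ depends on $s$ only through $\lfloor sc\nu_\eta\rfloor$, so it is piecewise constant. As $s$ ranges over $[s_j,s_{j+1}]$, it takes only finitely many values $g_{\eta,u}$, each with $|u-s_j|\le h$. Define the bracket endpoints pointwise in $i$:
\[
l_j=\min_{u\in[s_j,s_{j+1}]}g_{\eta,u}, \qquad u_j=\max_{u\in[s_j,s_{j+1}]}g_{\eta,u}.
\]
Then every $g_{\eta,s}$ with $s\in[s_j,s_{j+1}]$ lies in $[l_j,u_j]$. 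What remains is to control the bracket size $\|u_j-l_j\|_{\gamma,\eta}$.

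The pointwise bound $|u_j-l_j|(i)\le\sup_{u}|g_{\eta,u}(i)-g_{\eta,s_j}(i)|$ does not directly give an $L^\gamma$ bound from \ref{A5:weights}, because the supremum sits inside the norm. This interchange of supremum and norm is the obstacle. I would handle it in one of two ways.

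\begin{itemize}
\item[(a)] Bound the supremum by a sum over the at most $hc\nu_\eta+1$ distinct step values. This works cleanly when $h\lesssim\nu_\eta^{-1}$, where only $O(1)$ values occur. In that regime
\[
\|u_j-l_j\|_{\gamma,\eta}\lesssim C(2\nu_\eta^{-1})^{1/C}.
\]
\item[(b)] For coarser $h$, exploit monotonicity of $t\mapsto w_{t,\eta}(i)$ in $t$, where it holds (for example for EWMA, piecewise). With monotonicity the pointwise envelope is attained at the interval endpoints, and \ref{A5:weights} applies directly to give
\[
\|u_j-l_j\|_{\gamma,\eta}\le 2C(h+\nu_\eta^{-1})^{1/C}.
\]
\end{itemize}

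Granting this, choose $h\asymp\epsilon^{C}$ so that the bracket size is at most $\epsilon$, assuming $\nu_\eta^{-1}\lesssim\epsilon^C$. The number of brackets is then $M\asymp\epsilon^{-C}$.

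When instead $\epsilon^C\lesssim\nu_\eta^{-1}$, I would use the trivial brackets $[g,g]$ for each of the at most $c\nu_\eta$ distinct functions in $\Wcal_\eta$. Their number is $\lesssim\nu_\eta\lesssim\epsilon^{-C}$. Combining the two regimes gives $N_{[\,]}(\epsilon,\Wcal_\eta,\|\cdot\|_{\gamma,\eta})=O(\epsilon^{-C})$ uniformly in $\eta$, possibly after enlarging $C$.

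I expect justifying the envelope bound in the intermediate regime, that is, controlling the supremum inside the norm without a monotonicity assumption, to be the delicate part.
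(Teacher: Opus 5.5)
Your treatment of (W4) and (W5), and your two-regime structure for (W3), coincide with the paper's proof. The two regimes are a grid of mesh $\asymp\epsilon^{C}$ when $\epsilon\gtrsim\nu_\eta^{-1/C}$, and trivial brackets $[g,g]$ on the at most $c\nu_\eta$ distinct elements of $\Wcal_\eta$ otherwise. The only difference is at the step you flag, and you are right to flag it. The paper simply declares $[g_{\eta,s_{i-1}},g_{\eta,s_i}]$ to be brackets of size $d^w_\eta(s_{i-1},s_i)\le 4C\epsilon$. That tacitly assumes $s\mapsto g_{\eta,s}(i)$ is monotone for every $i$; otherwise the intermediate $g_{\eta,s}$ need not lie between the endpoints, which need not even be ordered. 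So your proposal is incomplete in the intermediate regime. Option (a) only works for mesh $h\lesssim\nu_\eta^{-1}$, which gives $\asymp\nu_\eta$ brackets and adds nothing beyond the trivial ones. Option (b) uses a hypothesis that is not part of (W2). Note that (W2) does give the \emph{covering} bound $N(\epsilon,\Wcal_\eta,\|\cdot\|_{\gamma,\eta})=O(\epsilon^{-C})$ via your grid, since balls need no envelopes; only the passage to brackets fails.

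This gap cannot be closed from (W2) alone. Read uniformly in $\eta$ (the only nontrivial reading, since each $\Wcal_\eta$ is finite), (W3) can fail. Take $c=2$, $\nu_\eta=q^2$ with $q\ge 5$ prime, and identify $\{1,\dots,\nu_\eta\}$ with $\mathbb{Z}_q^2$. For $m=\lfloor sc\nu_\eta\rfloor\in\{\nu_\eta,\dots,2\nu_\eta\}$ put $\nu_\eta w_{m,\eta}=\ind_{S_m}$, where $m\mapsto S_m$ runs through the $q^2$ lines $\{(x,ax+b)\colon x\in\mathbb{Z}_q\}$ (one of them twice). Then (W1) holds. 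Since $|S_m\triangle S_{m'}|\le 2q$, we get $d^w_\eta(s,t)^\gamma\le 2\nu_\eta^{-1/2}$, so (W2) holds with $C=2\gamma$. However, a bracket $[l,u]$ containing $k\ge 2$ distinct members satisfies $u-l\ge \ind_{\bigcup S_m\setminus\bigcap S_m}$. Because distinct lines meet at most once, $\|u-l\|_{\gamma,\eta}^\gamma\ge (kq-k^2/2-1)/q^2$. Hence every $8^{-1/\gamma}$-bracket contains at most $q/4+1$ of the $q^2$ functions, and $N_{[\,]}(8^{-1/\gamma},\Wcal_\eta,\|\cdot\|_{\gamma,\eta})\ge 2\sqrt{\nu_\eta}\to\infty$.

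An extra hypothesis is therefore needed. One option is an envelope form of (W2): $\|\sup_{|u-s|\le h}|g_{\eta,u}-g_{\eta,s}|\|_{\gamma,\eta}\le C(h+\nu_\eta^{-1})^{1/C}$. The other is your monotonicity, but be careful there: even EWMA weights are monotone in $t$ only for $t\ge i$. The rows $i\in(\lfloor s_jc\nu_\eta\rfloor,\lfloor s_{j+1}c\nu_\eta\rfloor]$ switch on inside a cell and are not controlled by the endpoints. By (W1) they add at most $2\sup|g|\,(ch+\nu_\eta^{-1})^{1/\gamma}$ to the bracket size. This yields $N_{[\,]}=O(\epsilon^{-\max\{C,\gamma\}})$, which suffices for the EWMA weights of \Cref{lem:weights}.
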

\begin{proof} 
    Condition \ref{A3:weights} is immediate for $d^w(s,t)=|s-t|$, and 
    \ref{A2:weights} follows from 
    \begin{align*}
        \sup_{|s-t|<\delta_\eta}d_\eta^w(s,t)\le C (\delta_\eta + \nu_\eta^{-1})^{1/C}\to 0.
    \end{align*}
    For \ref{A1:weights}, we distinguish two cases:
    
       \begin{itemize}
        \item Case $\epsilon \ge \nu_\eta^{-1/C}$: Let $s_0 = 1/c$, and $s_1,\ldots,s_N\in [1/c,1 + \eps^C]$ be such that $s_{i}-s_{i - 1}= \epsilon^{C}$, $i = 1, \dots, N$. Then by \ref{A5:weights}, 
        \begin{align*}
            d_\eta^w(s_i,s_{i - 1}) \le C (\epsilon^C + \nu_\eta^{-1})^{1/C} \le 2C (\epsilon + \nu_\eta^{-1/C}) \le 4C\epsilon.
        \end{align*}
        Thus, the brackets $[g_{\eta,s_{i - 1}}, g_{\eta,s_i}]$ have $\|\cdot\|_{\gamma,\eta}$-size at most $4C\epsilon$. Since $N = \lceil (1 + \epsilon^C - 1/c) / \epsilon^C\rceil = O(\epsilon^{-C})$, we have 
        $$N_{\left[\right]}\left( 4C\epsilon,\Wcal_\eta,\|\cdot\|_{\gamma,\eta}\right)=O(\epsilon^{-C}).$$ 
        \item Case $\epsilon < \nu_\eta^{-1/C}$: Due to the rounding operator, the class $\Wcal_\eta$ has at most $c\nu_\eta$ distinct elements. Thus, 
        $$ N\left( 4C\epsilon,\Wcal_\eta,\|\cdot\|_{\gamma,\eta}\right)\le |W_\eta| = c\nu_\eta = O((\nu_\eta^{-1/C})^C)= O(\epsilon^{-C}).$$
       \end{itemize}
       Finally, since $N( 4C\epsilon,\Wcal_\eta,\|\cdot\|_{\gamma,\eta})= O(\epsilon^{-C})$ in both cases, it must hold that 
       \begin{align*}
        N_{\left[\right]}( \epsilon,\Wcal_\eta,\|\cdot\|_{\gamma,\eta})= O\left((\epsilon/4 C)^{-C}\right) = O(\epsilon^{-C}). \tag*{\qedhere}
       \end{align*}
\end{proof}

\begin{lemma}[Verifying assumptions for EWMA weights]\label{lem:weights}
Let 
\begin{align*}
\nu_\eta:=\frac{2-\eta}{\eta},\qquad
w_{t,\eta}(i):=\eta(1-\eta)^{t-i}\mathbf 1_{i\le t }.
\end{align*}
Then for all $K\ge 1$, $\eta\in(0,1)$, $s,t \in [0, 1]$, 
it holds that
\begin{align*}
   \sup_{t, i, \eta} | \nu_\eta w_{t,\eta}(i)| \le 2 \qquad \text{ and }\qquad 
   d_\eta^w(s, t) \le 4K^{1/\gamma}(|s  - t|+\eta)^{1/\gamma}.
\end{align*}
\end{lemma}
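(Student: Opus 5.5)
Two separate estimates are needed: the sup bound follows directly from the formulas, and for the metric bound I split the weight difference into an overlap part and a tail part.

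\emph{Sup bound.} We have $0 \le (1-\eta)^{t-i} \le 1$, so
\[
\nu_\eta w_{t,\eta}(i) \le \frac{2-\eta}{\eta}\,\eta = 2-\eta \le 2,
\]
which is the first claim.

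\emph{Setup for the metric bound.} Fix $s \le t$ and write $a = \lfloor sK\nu_\eta\rfloor \le b = \lfloor tK\nu_\eta\rfloor$. Here $K$ plays the role of $c$ in the definition of $d^w_\eta$, and $m := b - a \le K\nu_\eta|s-t| + 1$. Put $g_a(i) = \nu_\eta w_{a,\eta}(i)$ and $g_b(i) = \nu_\eta w_{b,\eta}(i)$. Split the sum defining $(d_\eta^w)^\gamma$ into two ranges.
\begin{itemize}
\item On $i \le a$ we have
\[
g_a(i) - g_b(i) = (2-\eta)(1-\eta)^{a-i}\bigl(1-(1-\eta)^m\bigr).
\]
\item On $a < i \le b$ only $g_b(i) = (2-\eta)(1-\eta)^{b-i}$ survives.
\end{itemize}

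\emph{Bounding the two ranges.} On $a<i\le b$ there are at most $m$ terms, each bounded by $2^\gamma$. After multiplying by $1/\nu_\eta \le \eta$, this range contributes at most
\[
2^\gamma m\eta \le 2^\gamma\bigl(K(2-\eta)|s-t| + \eta\bigr) \le 2^{\gamma+1}K(|s-t|+\eta).
\]
On $i \le a$, bound $1 - (1-\eta)^m$ by $\min(1, m\eta)$. Since $\min(1,x)^\gamma \le \min(1,x)$, and $\sum_{j\ge0}(1-\eta)^{\gamma j} \le 1/(1-(1-\eta)) = 1/\eta$ (because $\gamma>1$), this range contributes at most
\[
2^\gamma \cdot \frac{1}{\nu_\eta}\cdot \frac{1}{\eta}\,\min(1,m\eta) \le 2^\gamma\, m\eta.
\]
This is the same bound as the first range. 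Adding the two ranges gives
\[
(d_\eta^w)^\gamma \le 2^{\gamma+2}K(|s-t|+\eta),
\]
and taking $\gamma$-th roots yields
\[
d_\eta^w(s,t) \le 2^{1+2/\gamma}K^{1/\gamma}(|s-t|+\eta)^{1/\gamma} \le 4K^{1/\gamma}(|s-t|+\eta)^{1/\gamma},
\]
using $\gamma>2$, so that $2^{1+2/\gamma} \le 4$.

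\emph{Main obstacle.} The only delicate point is tracking constants so that they land at exactly $4$. This requires three things:
\begin{itemize}
\item the normalization $1/\nu_\eta$ in $\|\cdot\|_{\gamma,\eta}$, together with the inequalities $1/\nu_\eta \le \eta$ and $\nu_\eta\eta \le 2$;
\item the geometric sum bound $1/\eta$;
\item absorbing the rounding error from $\lfloor\cdot\rfloor$ via $m \le K\nu_\eta|s-t|+1$, where the extra $\eta$ in the bound comes from the $+1$.
\end{itemize}
If $\gamma$ is only assumed to exceed $1$, the constant becomes $2^{1+2/\gamma}$ instead of $4$, so the exact constant uses $\gamma>2$. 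Since \ref{A5:weights} only needs some constant, this is harmless in any case.
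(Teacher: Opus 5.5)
Your proposal is correct and follows essentially the same route as the paper: the same split at $\lfloor sK\nu_\eta\rfloor$ into the overlap range and the tail range, the bound $1-(1-\eta)^m\le\min\{1,m\eta\}$, and absorbing the floor via $m\le K\nu_\eta|s-t|+1$. Your two bounds, $1-(1-\eta)^\gamma\ge\eta$ and the term-count bound on the tail range, are if anything cleaner than the paper's ``$\ge\frac{\gamma}{2}\eta$ for small $\eta$'' step: they hold for every $\eta\in(0,1)$ and make the constant $2^{1+2/\gamma}\le 4$ come out exactly.
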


\begin{proof}
For the first inequality, note that $\nu_\eta w_{t,\eta}(i)\le \nu_\eta \eta = 2-\eta\le 2$ for all $t,i,\eta$. For the second inequality, assume w.l.o.g.\ $m_t\ge m_s$ and define $\Delta:=\big|m_t-m_s\big|.$
For $i\le m_s$,
\begin{align*}
|g_{\eta,s}(i)-g_{\eta,t}(i)|
=\nu_\eta\eta(1-\eta)^{m_s-i}\big(1-(1-\eta)^{\Delta}\big),
\end{align*}
whence
\begin{align*}
S_1
:=\frac{1}{\nu_\eta}\sum_{i=1}^{m_s}|g_{\eta,s}(i)-g_{\eta,t}(i)|^\gamma
\le \nu_\eta^{\gamma-1}\eta^\gamma
\frac{\big(1-(1-\eta)^\Delta\big)^\gamma}{1-(1-\eta)^\gamma}.
\end{align*}
For $m_s<i\le m_t$,
\begin{align*}
|g_{\eta,s}(i)-g_{\eta,t}(i)|
=\nu_\eta\eta(1-\eta)^{m_t-i},\qquad
S_2:=\frac{1}{\nu_\eta}\sum_{i=m_s+1}^{m_t}|g_{\eta,s}-g_{\eta,t}|^\gamma
=\nu_\eta^{\gamma-1}\eta^\gamma\frac{1-(1-\eta)^{\gamma\Delta}}{1-(1-\eta)^\gamma}.
\end{align*}
Thus
\begin{align*}
\|g_{\eta,s}-g_{\eta,t}\|_{\gamma,\eta}^\gamma
&\le \nu_\eta^{\gamma-1}\eta^\gamma
\frac{\left(1-(1-\eta)^\Delta\right)^\gamma+\left(1-(1-\eta)^{\gamma\Delta}\right)}{1-(1-\eta)^\gamma}.
\end{align*}
Use $1-(1-\eta)^x\le \min\{1,\eta x\}$ for $x\ge 0$, and
$1-(1-\eta)^\gamma\ge \frac{\gamma}{2}\eta$ for small $\eta$, and note
$\nu_\eta^{\gamma-1}\eta^{\gamma-1}=(2-\eta)^{\gamma-1} \le 2^{\gamma - 1}$, giving
\begin{align*}
\|g_{\eta,s}-g_{\eta,t}\|_{\gamma,\eta}^\gamma
\le \frac{2^\gamma}{\gamma} 
\left(\min\{1,(\eta\Delta)^\gamma\}+\min\{1,\gamma\eta\Delta\}\right).
\end{align*}
Since $\Delta, \eta \in [0, 1]$, $\gamma > 1$, we further have $\eta^\gamma \le \eta$ and $\Delta^\gamma \le \Delta$, yielding
\begin{align*}
\|g_{\eta,s}-g_{\eta,t}\|_{\gamma,\eta}^\gamma
&\le 2^\gamma \eta \Delta.
\end{align*}
Finally,
\begin{align*}
\eta\Delta=\eta|m_t-m_s|
\le \eta(c\nu_\eta|t-s|+1)
\le 2K|t-s|+\eta,
\end{align*}
which yields the stated consequences.
\end{proof}

\subsection{Main Results}

Set
\begin{align*}
    \wh \mu_\eta(t) = \sum_{i = 1}^t w_{t, \eta}(i) X_i, 
    \qquad \mu_\eta(t) = \E[\wh \mu_\eta(t)] = \sum_{i = 1}^t w_{t, \eta}(i) \mu(i), 
    \qquad \wh \delta_{\eta}^{*}(t) = \sum_{i = 1}^t w_{t, \eta}(i) V_i(X_i - \wh \mu_{\eta}(i)),
\end{align*}
and define $c_{t, \alpha}^* = \sigma_t^* q_{\alpha}^*$ where $q_{\alpha}^*$ is such that 
$$\Pr\left( \exists t \in (t_1, t_2] \colon |\wh \delta_{\eta}^{*}(t)|/\sigma_t^* >q_{\alpha}^*\right) = 
\Pr\left( \sup_{t \in (t_1, t_2]} |\wh \delta_{\eta}^{*}(t)|/\sigma_t^* >q_{\alpha}^*\right) =\alpha.$$


\begin{theorem}\label{thm:1}
    Assume that \ref{A0:weights}--\ref{A3:weights} and \ref{A1:X}--\ref{A2:X} hold. 
Assume further that $\E[X_i]$ is constant in $i$ and
$\sum_{i=1}^tw_{t,\eta}(i)=1$.
Then,   
\begin{align*}
        \limsup_{\nu_\eta \to \infty}\Pr\left( \exists t \in (t_1, t_2] \colon |\wh \mu_{\eta}(t) - \mu_{\eta}(t)| > c_{t, \alpha}^*\right) = \alpha.
    \end{align*}
\end{theorem}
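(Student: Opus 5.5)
We plan to reduce the statement to a comparison of two Gaussian processes indexed by $s\in[1/c,1]$. Rescale time by $t=\lfloor sc\nu_\eta\rfloor$ and consider
\[
\mathbb{G}_\eta(s)=\sqrt{\nu_\eta}\,\wh\delta_\eta(\lfloor sc\nu_\eta\rfloor)=\frac{1}{\sqrt{\nu_\eta}}\sum_i g_{\eta,s}(i)(X_i-\mu(i)),
\qquad
\mathbb{G}^*_\eta(s)=\sqrt{\nu_\eta}\,\wh\delta^*_\eta(\lfloor sc\nu_\eta\rfloor).
\]
Under a constant mean and $\sum_i w_{t,\eta}(i)=1$, we have $\mu_\eta(t)=\mu$. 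Hence $X_i-\wh\mu_\eta(i)=(X_i-\mu)-\wh\delta_\eta(i)$. The second term is $O_P(\nu_\eta^{-1/2})$, uniformly on the relevant range once the burn-in is passed.

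\textbf{Step 1 (functional CLT for the data).} We show $\mathbb{G}_\eta\rightsquigarrow\mathbb{G}$ in $\ell^\infty([1/c,1])$ to a tight centered Gaussian process. This uses a CLT for weighted sums of $\beta$-mixing arrays. We use (S1)–(S2) with Lyapunov moments $\gamma>2$, and (W1) to control the individual weights, since $\nu_\eta^{-1}\sum_i g^2=O(1)$. For asymptotic equicontinuity we use a bracketing maximal inequality for $\beta$-mixing sequences, as in Doukhan–Massart–Rio type results, stated in the $\|\cdot\|_{\gamma,\eta}$ norm. The bracketing entropy is polynomial by (W3), and (W4)–(W5) supply the semimetric $|s-t|$ for tightness. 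We would rely on the weighted-empirical-process results of \citet{palm2025centrallimittheoremsnonstationarity}, essentially as a black box.

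\textbf{Step 2 (bootstrap CLT).} We show that, conditionally on the data, $\mathbb{G}^*_\eta\rightsquigarrow\mathbb{G}$ in probability. The same limit appears because the multipliers $V_i$ are independent of $X$. They are stationary with $\E V=0$ (the map $T$ is odd) and $\E V^2\to1$ (the degrees of freedom $\to\infty$). Their correlation $\E[V_iV_j]\approx\rho_\eta^{|i-j|}$ decays over a range $\nu_\eta^{\chi}\ll\nu_\eta$. Consequently, the conditional covariance
\[
\nu_\eta^{-1}\sum_{i,j}g_{\eta,s}(i)g_{\eta,t}(j)\,\E[V_iV_j]\,(X_i-\mu)(X_j-\mu)
\]
is a lag-window estimator of $\cov(\mathbb{G}(s),\mathbb{G}(t))$. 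We would prove its consistency using the mixing rate. The condition $b\chi>\gamma/(\gamma-2)$ is exactly what makes the bias from neglected lags beyond $\nu_\eta^\chi$ vanish, combined with a covariance inequality for $\beta$-mixing. The smoothness of $g$ over windows of length $\nu_\eta^\chi=o(\nu_\eta)$ is also needed here. The remainder term $\sum_i w V_i\wh\delta_\eta(i)$ has conditional variance $O_P(\nu_\eta^{-1}\cdot\nu_\eta^{-1}\nu_\eta^{\chi})$ and is therefore negligible at rate $\sqrt{\nu_\eta}$. Conditional tightness follows from Gaussianity of the multiplier process given $X$ together with a chaining bound that reuses (W3).

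\textbf{Step 3 (conclusion).} The argument finishes in three moves.
\begin{itemize}
\item Since $\sigma^*_t\sqrt{\nu_\eta}\to\sigma(s):=\var[\mathbb{G}(s)]^{1/2}$ uniformly, with $\sigma$ bounded away from $0$ on the compact interval (nondegeneracy, which we would need to assume or derive), the standardized supremum $\sup_s|\mathbb{G}^*_\eta(s)|/\sigma(s)$ converges conditionally to $\sup_s|\mathbb{G}(s)|/\sigma(s)$.
\item By the continuous mapping theorem, the same limit holds for $\sup_s|\mathbb{G}_\eta|/\sqrt{\nu_\eta}\sigma^*$.
\item The law of the supremum of a nondegenerate Gaussian process is continuous at every point except possibly its lower endpoint (Tsirelson). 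Hence $q^*_\alpha$ converges to the limiting $(1-\alpha)$-quantile, and the exceedance probability converges to $\alpha$.
\end{itemize}

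We expect the main obstacle to be Step 2. Specifically, showing uniform-in-$s$ consistency of the lag-window covariance under only $\gamma$ moments, while handling the nonzero within-block multiplier correlations, requires balancing the truncation lag against the mixing rate and the weight smoothness.
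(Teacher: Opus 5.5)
\textbf{Verdict: genuine gap.} Your overall architecture matches the paper's. It has the same parts: a black-box nonstationary FCLT for $\G_\eta$; matching of bootstrap and data covariances through the trade-off between the multiplier correlation length $\nu_\eta^{\chi}$ and the mixing rate (this is indeed where $b\chi>\gamma/(\gamma-2)$ enters); negligibility of the level-estimation remainder via $\nu_\eta^{-1}\sum_{i,j}|\cov[V_i,V_j]|=O(\nu_\eta^{\chi})$; and a final standardization.

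\textbf{Main gap: Step 2 assumes Gaussian multipliers.} The multipliers are $V_i=T(Z_i)$ with the nonlinear map $T=t_d^{-1}\circ\Phi$. So, conditionally on the data, $\G^*_\eta$ is a linear form in a non-Gaussian, $\nu_\eta^{\chi}$-dependent sequence, not a Gaussian process. Consistency of the conditional covariance therefore gives you neither finite-dimensional convergence nor tightness. Also, $\E[V_iV_j]=\rho_\eta^{|i-j|}$ holds only for linear $T$. What is missing is a CLT and maximal inequality for weighted sums with dependent, non-Gaussian multipliers. The paper supplies this in three steps:
\begin{itemize}
\item it shows $(V_i)$ is $\beta$-mixing with $\beta^V_{c\nu_\eta}(\nu_\eta^{\alpha})=o(\nu_\eta^{-\gamma/(\gamma-2)})$ for $\alpha>\chi$, using the Markov property of $Z$ and Pinsker (\Cref{cor:mixing-multiplier});
\item it bounds $|1-\cov[V_i,V_j]|\lesssim|1-\rho_\eta^{|i-j|}|$ via (T2)--(T3) (\Cref{lem:transformation-1-covariance});
\item it then invokes Proposition~4.2 and Theorem~3.7 of \citet{palm2025centrallimittheoremsnonstationarity} (\Cref{thm:main}).
\end{itemize}
Coupling $V_i$ to $Z_i$ because $d=2+\nu_\eta^{1/3}\to\infty$ might rescue your route for the main-text choice. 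That works only if the coupling error is small relative to the correlation length $\nu_\eta^{\chi}$, and it is unavailable for fixed $d$, which (T1)--(T3) allow.

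The obstacle you single out can in fact be sidestepped. Both $\sigma_t^*$ and $q_\alpha^*$ are defined through the \emph{unconditional} law of $\wh\delta^*_\eta$. So, besides a Gaussian approximation for the bootstrap process itself, you only need the deterministic kernel bound of \Cref{lem:eps-of-n}. For the known-level process $\tilde\G^*_\eta(s)=\nu_\eta^{-1/2}\sum_i g_{\eta,s}(i)V_i(X_i-\mu)$, it reads
\[
\cov[\tilde\G^*_\eta(s),\tilde\G^*_\eta(t)]-\cov[\G_\eta(s),\G_\eta(t)]=\frac{1}{\nu_\eta}\sum_{i,j}g_{\eta,s}(i)\,g_{\eta,t}(j)\bigl(\cov[V_i,V_j]-1\bigr)\cov[X_i,X_j]\to 0 .
\]
No law of large numbers for a quadratic form in $X$ under only $\gamma>2$ moments is needed.

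\textbf{Second gap: no fixed Gaussian limit.} Steps 1 and 3 presuppose a fixed limit $\G$. Under (S1)--(S2) the second-order structure of $X$ may be genuinely nonstationary, and (W1)--(W2) do not force $g_{\eta,s}$ to converge as $\eta$ varies. Hence $\cov[\G_\eta(s),\G_\eta(t)]$ need not converge, and neither ``$\sqrt{\nu_\eta}\,\sigma_t^*\to\sigma(s)$'' nor ``$q_\alpha^*\to$ the limiting quantile'' is available as stated.

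The paper argues by asymptotic equivalence instead. $\G_\eta$ and the bootstrap process have Gaussian approximations with asymptotically equal kernels, by \Cref{lem:eps-of-n} and by \Cref{lem:asy-var-with-mean-estimate} with $\var[\Delta_\eta(s)]=0$ under a constant mean.

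You can repair Step 3 with a subsequence argument. The kernels are uniformly bounded and asymptotically equicontinuous under (S1)--(S2) and (W1)--(W2). So every subsequence has a further subsequence along which they converge uniformly, and your Tsirelson argument applies there. This needs a lower bound on $\var[\G_\eta(s)]$, uniform in $s$ and $\eta$; you rightly flag this, and the paper uses it tacitly.

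\textbf{Smaller point: the remainder.} Your remainder bound has the right order, but it is pointwise in $s$. It also treats $\wh\delta_\eta(i)$ as $O_P(\nu_\eta^{-1/2})$ for every $i$ in the sum. However, $\wh\delta^*_\eta(t)$ sums over all $i\le t$, including early indices that can still carry weight of order $\nu_\eta^{-1}$. The paper splits off the first $n^{\alpha}$ indices ($n=c\nu_\eta$, $\alpha$ close to $1$) and controls the rest uniformly in $s$ with \Cref{lem:multiplier-tightness}.
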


\begin{theorem}\label{thm:2}
    Assume that \ref{A0:weights}--\ref{A3:weights} and \ref{A1:X}--\ref{A2:X} hold. 
    Assume further 
$$\liminf_{\nu_\eta\to \infty} \min_{t\in [t_0,t_1]}\nu_\eta\var\left[\sum_{i=1}^tw_{t,\eta}(i)V_i\E[\wh \mu_\eta(i)-X_i]\right]=\infty.$$
Then, $\lim_{\nu_\eta\to \infty}\max_{t\in (t_1,t_2]}\sigma_t/\sigma_t^*=0$ and
\begin{align*}
        \limsup_{\nu_\eta \to \infty}\Pr\left( \exists t \in (t_1, t_2] \colon |\wh \mu_{\eta}(t) - \mu_{\eta}(t)| > c_{t, \alpha}^*\right) = 0.
    \end{align*}
\end{theorem}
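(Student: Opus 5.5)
The plan is to prove the two claims of Theorem~\ref{thm:2} in order. First, $\sigma_t^{*}$ grows faster than $\nu_\eta^{-1/2}$ uniformly in $t$, while $\sigma_t$ stays of order $\nu_\eta^{-1/2}$. Second, combining this with tightness of the rescaled error process $\sqrt{\nu_\eta}\,\wh\delta_\eta$ and a lower bound on $q_\alpha^*$ gives the vanishing miss probability.

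\emph{Step 1: lower bound on $\sigma_t^{*2}$.} The multipliers $V_i=T(Z_i)$ are independent of $(X_i)$, and $T$ is odd, so $\E[V_i]=0$ and $\E[\wh\delta^*_\eta(t)\mid X]=0$. Write $Y_i=X_i-\wh\mu_\eta(i)=b_i+e_i$ with deterministic bias $b_i=\E[X_i-\wh\mu_\eta(i)]$ and centered part $e_i$. Let $\Gamma_V(i,j)=\cov[V_i,V_j]$. Then
\begin{align*}
\sigma_t^{*2}=\sum_{i,j\le t} w_{t,\eta}(i)w_{t,\eta}(j)\,\Gamma_V(i,j)\,\bigl(b_ib_j+\cov[e_i,e_j]\bigr).
\end{align*}
By the Schur product theorem, the Hadamard product of the positive semidefinite matrices $\Gamma_V$ and $\cov[e]$ is positive semidefinite. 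The $e$-part is therefore nonnegative, and
\begin{align*}
\sigma_t^{*2}\ge \var\Bigl[\sum_{i\le t}w_{t,\eta}(i)V_i\,\E[\wh\mu_\eta(i)-X_i]\Bigr].
\end{align*}
The hypothesis then gives $\nu_\eta\sigma_t^{*2}\to\infty$. The hypothesis is stated for $t\in[t_0,t_1]$; I would either read it as intended for the monitoring range, or transfer it to $(t_1,t_2]$ using the smoothness of the weights from \ref{A5:weights}.

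\emph{Step 2: upper bound on $\sigma_t$.} Expand $\sigma_t^2=\sum_{i,j}w_{t,\eta}(i)w_{t,\eta}(j)\cov[X_i,X_j]$. Davydov's covariance inequality with \ref{A1:X} and \ref{A2:X} gives $|\cov[X_i,X_j]|\lesssim \beta(|i-j|)^{1-2/\gamma}$, which is summable. Combined with \ref{A0:weights} and $\sum_i|w_{t,\eta}(i)|=O(1)$, this yields $\max_t \nu_\eta\sigma_t^2=O(1)$. Hence $\max_{t\in(t_1,t_2]}\sigma_t/\sigma_t^*\to0$, which is the first claim.

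\emph{Step 3: the miss probability.} The event $\{\exists t\colon |\wh\delta_\eta(t)|>\sigma_t^*q_\alpha^*\}$ is contained in
\begin{align*}
\Bigl\{\sup_{t}\sqrt{\nu_\eta}|\wh\delta_\eta(t)|>q_\alpha^*\min_t\sqrt{\nu_\eta}\sigma_t^*\Bigr\}.
\end{align*}
The left side is $O_P(1)$. This follows from a maximal inequality for the weighted empirical process indexed by $\Wcal_\eta$: the bracketing entropy bound \ref{A1:weights}, the $\beta$-mixing rate \ref{A2:X}, and blocking with block length of order $\nu_\eta^{\chi}$, as in the nonstationary CLT of \citet{palm2025centrallimittheoremsnonstationarity}. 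The factor $\min_t\sqrt{\nu_\eta}\sigma_t^*$ diverges by Step~1. It remains to show $\liminf q_\alpha^*>0$.

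Trivially $q_\alpha^*$ dominates the $(1-\alpha)$-quantile of $|\wh\delta^*_\eta(t)|/\sigma_t^*$ at any single fixed $t$, and this variable has unit variance. I expect bounding that quantile below to be the main obstacle. My first attempt would be a conditional Gaussian approximation of $\wh\delta^*_\eta(t)/\sigma_t^*$. Conditionally on $X$, the $V_i$ are smooth functions of a Gaussian AR process with correlation range $\nu_\eta^{\chi}\ll\nu_\eta$, so a CLT for this $m$-dependent-like sum should give a limiting law with no atom at $0$. This gives $q^*_\alpha\ge z_{1-\alpha/2}-o(1)$, or at least a positive constant. As a fallback, Paley--Zygmund gives $\Pr(|W|>q)\ge(1-q^2)^2/\E[W^4]$ for unit-variance $W$, so a uniform fourth-moment bound on the standardized bootstrap error would suffice. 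The fourth moment could be controlled via $\E[V_i^4]<\infty$ for $\nu_\eta$ large, using the degrees of freedom $2+\nu_\eta^{1/3}$ of $T$, together with moment bounds on $Y_i$.

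With $q_\alpha^*$ bounded below, the threshold in the contained event diverges while $\sup_t\sqrt{\nu_\eta}|\wh\delta_\eta(t)|$ stays tight. The probability of the event therefore tends to $0$.
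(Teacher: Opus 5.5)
Your proof is correct in structure, but it reaches both claims by a different route than the paper.

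For the first claim, the paper uses \Cref{lem:asy-var-with-mean-estimate}. This is an asymptotic two-sided decomposition $\var[\wh\G^*_\eta(s)]\approx\var[\G_\eta(s)]+\var[\Delta_\eta(s)]$, obtained by showing that the cross term and the $\cov[\wh\mu_\eta(i),\wh\mu_\eta(j)]$ term vanish via the multiplier mixing bounds of \Cref{cor:mixing-multiplier}. Your Schur-product argument instead gives the exact, non-asymptotic inequality $\sigma_t^{*2}\ge\var[\sum_i w_{t,\eta}(i)V_ib_i]$. It needs only $\E[V_i]=0$ and independence of $V$ and $X$. The paper's lemma also yields the matching upper bound, which it reuses in \Cref{thm:1}, where $\var[\Delta_\eta]=0$. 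For \Cref{thm:2} only the lower bound is needed, so your route is more economical.

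For the second claim, the paper combines tightness of $\sup_t|\wh\mu_\eta(t)-\mu_\eta(t)|/\sigma_t$ with $\max_t\sigma_t/\sigma_t^*\to0$. You combine tightness of the unstandardized $\sup_t\sqrt{\nu_\eta}|\wh\delta_\eta(t)|$ with $\min_t\sqrt{\nu_\eta}\sigma_t^*\to\infty$. Because you never divide by $\sigma_t$, you need no nondegeneracy of $\nu_\eta\sigma_t^2$. The paper simply imports $\liminf q_\alpha^*>0$ from \citet{palm2025centrallimittheoremsnonstationarity}, so your Gaussian-approximation sketch is already more explicit than the paper at that step.

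There are two caveats.

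\emph{The hypothesis range.} Only your first reading of the hypothesis works; \ref{A5:weights} cannot transfer it from $[t_0,t_1]$ to $(t_1,t_2]$. Moving-average weights $w_{t,\eta}(i)=\nu_\eta^{-1}\ind\{t-\nu_\eta<i\le t\}$ satisfy \ref{A0:weights}--\ref{A5:weights}. Take a trend that is linear up to $t_1$ and flat afterwards. The literal hypothesis then holds, yet $b_i=0$ for $i\ge t_1+\nu_\eta$, so the bias contribution to $\sigma_t^*$ vanishes for $t\ge t_1+2\nu_\eta$. The paper's proof tacitly uses $\var[\Delta_\eta(s)]\to\infty$ for all $s\in[1/c,1]$, i.e.\ on the monitoring range, which is exactly your first reading.

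\emph{The Paley--Zygmund fallback.} It fails as stated. The bound $\Pr(|W|>q)\ge(1-q^2)^2/\E[W^4]$ exceeds $\alpha$ only if $\sup\E[W^4]<1/\alpha$, so a merely uniform fourth-moment bound is not enough. Moreover, \ref{A1:X} gives only $\gamma>2$ moments, so $\E[Y_i^4]$ may be infinite. To repair it, split off the noise part, whose second moment is $O(\nu_\eta^{-1})/\sigma_t^{*2}\to0$ by your Steps 1--2. Then bound the fourth moment of $\sum_i w_{t,\eta}(i)b_iV_i/\sigma_t^*$ alone by something below $1/\alpha$. Your primary Gaussian-approximation route avoids this and is the one to pursue.
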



\subsection{Proofs}

\paragraph{Outline}

The proofs are divided into two parts: consistency of the bootstrap procedure and a comparison of 
the bootstrapped and actual asymptotic variance. 
Both rest on the statistical properties of the transformed multipliers, which are derived last (\cref{sec:transformation}). 
To prove consistency, we will first
assume that the level of the time series is known.
In this setting, the bootstrap 
estimates the actual variance consistently over the whole time-domain (\Cref{lem:eps-of-n}) and the bootstrap is consistent (\Cref{thm:main}). 
The proof of the latter is closely related to Proposition 4.2 of \cite{palm2025centrallimittheoremsnonstationarity}, and we refer to their proofs for further detail. 
Because $\wh \mu_\eta$ estimates the level consistently under the assumptions of \Cref{thm:1}, we obtain our first result.
For general level estimates,
we calculate the asymptotic bias of the bootstrapped variances.
We will show that the bootstrap overestimates the true variance asymptotically (\Cref{lem:asy-var-with-mean-estimate}), which yields \Cref{thm:2}.

\subsubsection{Bootstrap Consistency for Known Level}
Define 
\begin{align*}
    \G_\eta(s)&=\sqrt{\lfloor c\nu_\eta\rfloor}\sum_{i=1}^{\lfloor sc\nu_\eta\rfloor}w_{\lfloor sc\nu_\eta\rfloor,\eta}(i)(X_i-\E[X_i]),
    \qquad 
    \G_\eta^*(s)=\sqrt{\lfloor c\nu_\eta\rfloor}\sum_{i=1}^{\lfloor sc\nu_\eta\rfloor}V_iw_{\lfloor sc\nu_\eta\rfloor,\eta}(i)(X_i-\E[X_i]),
\end{align*}
for $s\in [1/c,1]$.
Note that 
$$\G_\eta(s)=\sqrt{\lfloor sc\nu_\eta \rfloor} (\wh \mu_\eta(sc\nu_\eta)-\mu_\eta(sc\nu_\eta)).$$

\begin{lemma}\label{lem:eps-of-n}
    Assume \ref{A0:weights}--\ref{A3:weights} and \ref{A1:X}--\ref{A2:X} hold.
    Then,
    $$\sup_{s,t\in [1/c,1]}|\cov[\G_\eta^*(s),\G_\eta^*(t)]-\cov[\G_\eta(s),\G_\eta(t)]|\to 0$$
    for $\nu_\eta\to \infty$.
\end{lemma}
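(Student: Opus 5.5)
The plan is to compute both covariances explicitly and compare them lag by lag. Write $n_s=\lfloor sc\nu_\eta\rfloor$, $w_s(i)=w_{n_s,\eta}(i)$ and $\tilde X_i=X_i-\E[X_i]$. The multipliers $(V_i)$ are independent of $(X_i)$. Moreover $T=t^{-1}_{2+\nu_\eta^{1/3}}\circ\Phi$ is odd and $Z_i\sim\Ncal(0,1)$, so $\E[V_i]=0$. Hence
\begin{align*}
\cov[\G^*_\eta(s),\G^*_\eta(t)]-\cov[\G_\eta(s),\G_\eta(t)]
=\lfloor c\nu_\eta\rfloor\sum_{i,j} w_s(i)w_t(j)\bigl(r_\eta(|i-j|)-1\bigr)\cov[X_i,X_j],
\end{align*}
where $r_\eta(k)=\E[V_0V_k]$ by stationarity of the Gaussian AR(1) sequence $(Z_i)$. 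The whole proof then reduces to two ingredients: bounds on $1-r_\eta(k)$, and summability of $|\cov[X_i,X_j]|$ in the lag.

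\emph{Step 1: the multipliers.} This is the main obstacle, and I would treat it in the section on the transformation. $T(Z)$ has a $t$-distribution with $d_\eta=2+\nu_\eta^{1/3}$ degrees of freedom, so $r_\eta(0)=d_\eta/(d_\eta-2)\to1$. For $k\ge1$ I would expand $T$ in Hermite polynomials, $T=\sum_{m\text{ odd}}a_mH_m$. Then $\E[T(Z_0)T(Z_k)]=\sum_m a_m^2 m!\,\rho_\eta^{km}$, and therefore
\begin{align*}
0\le r_\eta(0)-r_\eta(k)\le r_\eta(0)\bigl(1-\rho_\eta^{k}\bigr)\le r_\eta(0)\,k\,\nu_\eta^{-\chi}.
\end{align*}
This gives $|1-r_\eta(k)|\le |1-r_\eta(0)|+2k\nu_\eta^{-\chi}$ for small lags, and $|1-r_\eta(k)|\le 1+r_\eta(0)\le 3$ uniformly.

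\emph{Step 2: the data covariances.} By Davydov's inequality together with \ref{A1:X},
\begin{align*}
|\cov[X_i,X_j]|\le C\,\beta(|i-j|)^{1-2/\gamma}\le C'|i-j|^{-b(1-2/\gamma)}.
\end{align*}
The exponent satisfies $b(1-2/\gamma)>1/\chi>2$ by \ref{A2:X}, so these covariances are summable in the lag. From \ref{A0:weights} we have $\max_i|w_s(i)|\lesssim\nu_\eta^{-1}$. Since the support of $w_s$ has at most $c\nu_\eta$ points, also $\sum_i|w_s(i)|\lesssim1$.

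\emph{Step 3: splitting by lag.} Fix $m_\eta=\nu_\eta^{\chi'}$ with $0<\chi'<\chi$.

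For lags $|i-j|\le m_\eta$, the contribution is at most
\begin{align*}
\lfloor c\nu_\eta\rfloor\max_j|w_t(j)|\sum_i|w_s(i)|\sum_{k\le m_\eta}\bigl(|1-r_\eta(0)|+2k\nu_\eta^{-\chi}\bigr)C'k^{-b(1-2/\gamma)}.
\end{align*}
This is $O\bigl(|1-r_\eta(0)|+\nu_\eta^{-\chi}\sum_k k^{1-b(1-2/\gamma)}\bigr)\to0$, because $b(1-2/\gamma)>2$ makes the last series finite.

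For lags $|i-j|>m_\eta$, the same prefactor gives a bound of
\begin{align*}
O\Bigl(\sum_{k>m_\eta}k^{-b(1-2/\gamma)}\Bigr)=O\bigl(m_\eta^{1-b(1-2/\gamma)}\bigr)\to0.
\end{align*}

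All bounds are uniform in $s,t\in[1/c,1]$, which yields the claimed supremum convergence. The smoothness conditions \ref{A5:weights}--\ref{A3:weights} are not needed for this lemma itself; they enter only in the tightness argument later.
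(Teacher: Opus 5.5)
Your Step 1 contains a false inequality, and this is the step that carries the multiplier-specific content. From Mehler's formula you correctly obtain
\[
r_\eta(0)-r_\eta(k)=\sum_{m\ \mathrm{odd}} a_m^2\, m!\,\bigl(1-\rho_\eta^{km}\bigr).
\]
For $m\ge 1$ and $\rho_\eta^k\in[0,1)$, however, $\rho_\eta^{km}\le\rho_\eta^k$. So every summand satisfies $1-\rho_\eta^{km}\ge 1-\rho_\eta^k$, and this yields $r_\eta(0)-r_\eta(k)\ge r_\eta(0)(1-\rho_\eta^k)$. That is the \emph{reverse} of what you claim, with equality only for linear $T$. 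For instance, $T=H_3/\sqrt{6}$ gives $r(0)-r(k)=1-\rho^{3k}\approx 3k\nu_\eta^{-\chi}$.

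The correct upper bound uses $1-x^m\le m(1-x)$:
\[
r_\eta(0)-r_\eta(k)\le \bigl(1-\rho_\eta^k\bigr)\sum_m m\,a_m^2\,m! = \bigl(1-\rho_\eta^k\bigr)\,\E\bigl[T_\eta'(Z)^2\bigr].
\]
You therefore additionally need $\E[T_\eta'(Z)^2]$ to be bounded uniformly for large $\nu_\eta$, i.e., uniformly in the degrees of freedom $d_\eta$. This does not come for free, because $T_\eta'(z)$ grows roughly like $|z|e^{z^2/(2d_\eta)}$. It is exactly what the paper supplies through Lemmas~\ref{lem:transformation-1-covariance} and~\ref{lem:transformation-1-covariance-satisfied}. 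There, a bound $|T(x)-T(y)|\le K(x,y)|x-y|$ with $\sup\E[K(Z_i,Z_j)^4]<\infty$ for $d\ge 8$ yields $|1-\cov[V_i,V_j]|\lesssim 1-\rho_\eta^{|i-j|}$.

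A derivative-free repair also exists inside your framework. All Mehler terms are nonnegative, so $a_{1,\eta}^2\rho_\eta^k\le r_\eta(k)\le r_\eta(0)$, and hence
\[
|1-r_\eta(k)|\le |r_\eta(0)-1|+|1-a_{1,\eta}^2|+r_\eta(0)\,k\nu_\eta^{-\chi}.
\]
It then remains to show $a_{1,\eta}=\E[ZT_\eta(Z)]\to 1$. This follows from $T_\eta(z)\to z$ pointwise together with $\E[T_\eta(Z)^2]\to 1$, which gives $L^2$ convergence. Either way, you must prove some asymptotic-linearity property of $T_\eta$; the Hermite expansion alone does not deliver it.

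Once Step 1 is repaired, your Steps 2--3 are correct and in fact tighter than the paper's argument. The paper uses the same total-covariance identity and the same covariance inequality. It then splits at lag of order $\nu_\eta^{\chi}$ and bounds the far lags crudely by $n\,\beta_n(\nu_\eta^{\chi})^{(\gamma-2)/\gamma}$, which consumes the full strength of \ref{A2:X}. Your use of $\sum_k k\,|\cov[X_i,X_{i+k}]|<\infty$ needs only $b(1-2/\gamma)>2$. Moreover, since the repaired bound on $|1-r_\eta(k)|$ is linear in $k$ for all lags, you do not even need the split at $m_\eta$.
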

\begin{proof}
    Denote $c\nu_\eta=n$ for simplicity. 
    Write $\omega_{n,i}(s)=nw_{\lfloor sn\rfloor,\eta}(i)\ind\{i\leq sn\}$ and 
    note that $\sup_{n,i,s}|\omega_{n,i}(s)|<\infty$ by assumption.
    Then, 
    $$\G_\eta(s)=\frac{1}{\sqrt{n}}\sum_{i=1}^n \omega_{n,i}(s)(X_i-\E[X_i])$$
    and similar for $\G_\eta^*$.
    By the law of total covariances, 
    \begin{align*}
        \cov[\G_\eta^*(s),\G_\eta^*(t)]-\cov[\G_\eta(s),\G_\eta(t)]
        =\frac{1}{n}\sum_{i,j=1}^{n}
        \omega_{n,i}(s)\omega_{n,j}(t)\left[\cov[V_{i},V_{j}]-1\right]\cov[X_{i},X_{j}]
    \end{align*}
    By Theorem 3 of \cite{doukhan2012mixing}, it holds that
    $$|\cov[X_{i},X_{j}]|\lesssim \sup_{n,i}\E[|X_{i}|_\gamma]^{2/\gamma}\beta_n(|i-j|)^{\frac{\gamma-2}{\gamma}}.$$
    

    A straightforward calculation shows 
    $\cov[Z_i,Z_{i+|h|}]=\rho_\eta^{|h|}=(1-\nu_\eta^{-\chi})^{|h|}$ and
    $$|1-\cov[Z_i,Z_{i+h}]|<\eps$$
    for all $h$ such that 
    $(1-\sqrt[h]{1-\eps})^{-1/\chi}<\nu_\eta$.
    In that case, \Cref{lem:transformation-1-covariance} and \Cref{lem:transformation-1-covariance-satisfied} imply that also 
    $$|\cov[V_i,V_{i+h}]-1|\lesssim |\cov[Z_i,Z_{i+h}]-1|\leq \eps.$$
    A first-order Taylor-approximation around $\eps=0$ yields $$(1-\sqrt[h]{1-\eps})^{-1/\chi}\approx |h|^{1/\chi}.$$
    Thus, 
    \begin{align*}
        \sum_{\substack{i,j\leq n\\ |i-j|\leq n^{\chi}}}|(\cov[V_i,V_{i+|h|}]-1)\cov[X_i,X_{i+|h|}]|
        &\lesssim \eps n,
        \\
        \sum_{\substack{i,j\leq n\\ |i-j|> \nu_\eta^{\chi}}}|(\cov[V_i,V_{i+|h|}]-1)\cov[X_i,X_{i+|h|}]|
        &\lesssim 
        n^2 \beta_n^X(\nu_\eta^{\chi})^{(\gamma-2)/\gamma},
    \end{align*}
    for all $\eps>0$ small. 
    Conclude that 
    \begin{align*}
        \frac{1}{n}\sum_{i,j=1}^n |(\cov[V_i,V_j]-1)\cov[X_i,X_j]|
        \lesssim \eps+ n\beta_n^X(\nu_\eta^{\chi})^{(\gamma-2)/\gamma}
        \to \eps,
    \end{align*}
    by \ref{A2:X}.
    Together, this shows
    \begin{align*}
        &\quad \, \limsup_{n\to \infty}\sup_{s,t\in [1/c,1]}|\cov[\G_\eta^*(s),\G_\eta^*(t)]-\cov[\G_\eta(s),\G_\eta(t)]|
        \\
        &\leq
        \sup_{n,i,s}|w_{n,i}(s)|^2\left[\limsup_n\frac{1}{n}\sum_{i,j=1}^{n}\bigl|\left[\cov[V_{i},V_{j}]-1\right]\cov[X_{i},X_{j}]\bigr| \right]
        \\
        &\lesssim \epsilon.
    \end{align*}
    Since this is true for all $\epsilon>0$ small, taking $\epsilon \to 0$ yields the claim.
\end{proof}

\begin{theorem}\label{thm:main}
    Assume \ref{A0:weights}--\ref{A3:weights} and \ref{A1:X}--\ref{A2:X} hold.
    Then, $\G_\eta^*$ is a consistent bootstrap for $\G_\eta$. 
\end{theorem}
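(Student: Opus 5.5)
Since the paper says the proof follows Proposition 4.2 of \cite{palm2025centrallimittheoremsnonstationarity}, we treat ``consistent bootstrap'' in the usual conditional sense. Concretely, we aim to show
\begin{align*}
\sup_{h\in BL_1(\ell^\infty[1/c,1])}\bigl|\E[h(\G_\eta^*)\mid X]-\E[h(\G_\eta)]\bigr|\to 0
\end{align*}
in outer probability. We also want both $\G_\eta$ and $\G^*_\eta$ to be approximated by the same centered Gaussian process on $[1/c,1]$. The plan has three steps: (a) a weak-convergence (or strong-approximation) result for $\G_\eta$ in $\ell^\infty[1/c,1]$; (b) the analogous statement for $\G_\eta^*$, both unconditionally and conditionally on the data; (c) matching of covariance kernels via \Cref{lem:eps-of-n}. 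Because the nonstationary setting need not give a limiting covariance, we phrase everything along subsequences. On a subsequence where $\cov[\G_\eta(s),\G_\eta(t)]$ converges uniformly, \Cref{lem:eps-of-n} shows that the bootstrap kernel has the same limit. A subsequence argument then upgrades this to the full sequence.

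For step (a), we write $\G_\eta(s)=n^{-1/2}\sum_i \omega_{n,i}(s)(X_i-\E X_i)$, an empirical process indexed by the weight class $\Wcal_\eta$. We apply the CLT for $\beta$-mixing, nonstationary triangular arrays indexed by function classes from \cite{palm2025centrallimittheoremsnonstationarity}. This requires the following ingredients:
\begin{itemize}
\item finite-dimensional convergence, via a Lindeberg/Bernstein-block argument using \ref{A1:X}, \ref{A2:X} and the bounded weights \ref{A0:weights};
\item asymptotic equicontinuity with respect to $d^w$, obtained from the bracketing entropy bound \ref{A1:weights} with the $\|\cdot\|_{\gamma,\eta}$ norm (which controls mixing moments through the Doukhan covariance inequality), together with \ref{A2:weights} and \ref{A3:weights}.
\end{itemize}

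For step (b), we use that the multipliers $V_i=T(Z_i)$ are independent of $X$. Since $Z$ is a Gaussian AR(1) with $\rho_\eta=1-\nu_\eta^{-\chi}$, it is $\beta$-mixing with geometric rate on time scale $\nu_\eta^{\chi}$. Hence the product sequence $V_i(X_i-\E X_i)$ is $\beta$-mixing with coefficients bounded by $\beta^X_n(m)+\beta^V(m)$. Its moments satisfy $\E|V_iX_i|^{\gamma'}<\infty$ for $\gamma'$ slightly below $\gamma$; this uses the $t_{2+\nu_\eta^{1/3}}$ tails of $V$ (from \cref{sec:transformation}), whose degrees of freedom diverge. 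The block length must be of order $\nu_\eta^{\chi}$, and this is where the rate condition $b\chi>\gamma/(\gamma-2)$ enters. The same entropy argument as in (a) then yields unconditional convergence of $\G_\eta^*$ along the subsequence to a Gaussian process with the same kernel.

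Passing from unconditional to conditional convergence is the main obstacle. The route we would try first is the one in Palm--Nagler / Bücher--Kojadinovic-type multiplier theory. We show that two independent multiplier copies $(\G^{*(1)}_\eta,\G^{*(2)}_\eta)$ built from the same data converge jointly with $\G_\eta$ to three independent copies of the limit. This reduces to verifying that the cross-covariances $\frac1n\sum_{i,j}\omega\omega\cov[V^{(1)}_i,V^{(2)}_j]\cov[X_i,X_j]$ and $\frac1n\sum\omega\omega\E[V_i]\cov[X_i,X_j]$ vanish, which holds because the $V^{(b)}$ are mean zero and mutually independent. Joint tightness follows from the marginal tightness already established. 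Lemma 2.2 of Bücher and Kojadinovic then converts this joint unconditional convergence into conditional weak convergence in probability. Delicate points remain: the variance of $T(Z)$ must be close to $1$ and its covariances close to $\cov[Z_i,Z_j]$, which is exactly where \Cref{lem:transformation-1-covariance} is needed, as in \Cref{lem:eps-of-n}.
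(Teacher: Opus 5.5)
Your steps (a) and (c) match the paper. It applies Theorem~3.7 of \cite{palm2025centrallimittheoremsnonstationarity} to $\G_\eta$ (with $k_n=c\nu_\eta$, $S=[1/c,1]$, weights $\nu_\eta w_{\lfloor sc\nu_\eta\rfloor,\eta}(i)\ind\{i\le\lfloor sc\nu_\eta\rfloor\}$) and matches kernels as in \Cref{lem:eps-of-n}. For the conditional part, the paper rebuilds nothing: it checks the hypotheses of Proposition~4.2 of the same reference and concludes. Those hypotheses are the multiplier mixing rate and covariance sum from \Cref{cor:mixing-multiplier}, plus kernel matching. Your route (subsequence extraction, joint convergence of $(\G_\eta,\G^{*(1)}_\eta,\G^{*(2)}_\eta)$, then Lemma~2.2 of B\"ucher--Kojadinovic) is a legitimate, more self-contained substitute for that citation. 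It also makes explicit how the missing limit kernel is handled. You should add why every subsequence has a further one with uniformly convergent kernels: they are uniformly bounded and asymptotically equicontinuous by (W1), (W2) and $\sup_i\sum_j|\cov[X_i,X_j]|<\infty$.

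The substantive gap is the one check the paper's proof actually performs and yours never mentions. It is the compatibility between multiplier persistence and moments, $\gamma>1+(1-2\chi)^{-1}$ from (T1), equivalently $\chi<(\gamma-2)/(2(\gamma-1))$. It is needed twice.

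First, (S2) only gives $b>\gamma/(\chi(\gamma-2))$. This yields the rate $b>2\gamma(\gamma-1)/(\gamma-2)^2$ that the paper uses for Theorem~3.7 only through the above inequality. So your step (a) does not follow from (S1)--(S2) alone via that theorem.

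Second, and this is what ``the same entropy argument as in (a)'' hides, the array $V_i(X_i-\E[X_i])$ is not polynomially $\beta$-mixing uniformly in $\eta$. Its coefficients are not small before lag $\nu_\eta^{\chi}$ (indeed $\rho_\eta^{\nu_\eta^{\chi}}\to e^{-1}$), so the theorem from (a) cannot simply be reapplied. Blocks must have length $\nu_\eta^{\alpha}$ with $\alpha$ strictly larger than $\chi$, not merely of order $\nu_\eta^{\chi}$. But such long blocks can be controlled with only $\gamma$ moments only when $\alpha<(\gamma-2)/(2(\gamma-1))$; this is why Proposition~4.2 is invoked with $\chi<\alpha<(\gamma-2)/(2(\gamma-1))$. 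That window is nonempty exactly under (T1); for $\chi=1/3$ it forces $\gamma>4$. The condition $b\chi>\gamma/(\gamma-2)$ you single out only governs the lower end, i.e.\ decoupling $X$ at lag $\nu_\eta^{\alpha}$ as in \Cref{lem:eps-of-n}.

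Two minor points: by independence $\E|V_iX_i|^{\gamma}=\E|V_i|^{\gamma}\E|X_i|^{\gamma}$, so no moment loss is needed. Also, $\var[T(Z_1)]=1$ holds exactly by (T1).
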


\begin{proof}
    We will conclude by Proposition 4.2 of
    \cite{palm2025centrallimittheoremsnonstationarity}.
    By assumption, $V_i$ are identically distributed, independent of $X_i$ with 
    $$\E[V_i]=0,\quad \var[V_i]=1, \quad \E[|T(V_i)|^\gamma]<\infty.$$

    Note that $2<1+(1-2\chi)^{-1}<\gamma$ is equivalent to 
    $\chi<(\gamma-2)/2(\gamma-1)$, hence, 
    $2\gamma(\gamma-1)/(\gamma-2)^2<\gamma/\chi(\gamma-2)<b$. 
    Thus,
    the assumptions of Theorem 3.7 of
    \cite{palm2025centrallimittheoremsnonstationarity}
    with $k_n=c\nu_\eta$, $S=[1/c,1]$ and $w_{n,i}(s)=\nu_\eta w_{\lfloor sc\nu_{\eta}\rfloor, \eta}(i)\ind\{i\leq \lfloor sc\nu_{\eta}\rfloor\} $
    are satisfied.
    Note that the entropy assumption (iii) of Theorem 3.7
    holds since $X_i$ is univariate. 

    By \Cref{cor:mixing-multiplier} and a similar argument 
    as in the proof of \Cref{lem:eps-of-n}, 
    the conditions of Proposition 4.2 of
    \cite{palm2025centrallimittheoremsnonstationarity}
    are satisfied for some $\chi<\alpha<(\gamma-2)/2(\gamma-1)$, which
    yields the claim. 
\end{proof}

\subsubsection{Bootstrap Consistency for Estimated Level}

For $s\in [1/c,1]$, define
\begin{align*}
    \wh \G_\eta^*(s)&=
    \sqrt{\lfloor sc\nu_\eta \rfloor} \wh\delta_\eta^*(sc\nu_\eta)
    =\sqrt{\lfloor c\nu_\eta\rfloor}\sum_{i=1}^{\lfloor sc\nu_\eta\rfloor}V_iw_{\lfloor sc\nu_\eta \rfloor,\eta}(i)(X_i-\wh \mu_\eta(i)),
    \\
    \Delta_\eta(s)
    &=\sqrt{\lfloor c\nu_\eta\rfloor}\sum_{i=1}^{\lfloor sc\nu_\eta\rfloor}V_iw_{\lfloor sc\nu_\eta \rfloor,\eta}(i)\E[X_i-\wh \mu_\eta(i)].
\end{align*}

\begin{lemma}\label{lem:asy-var-with-mean-estimate}
    Assume \ref{A0:weights} and \ref{A1:X}--\ref{A2:X} hold.
    Then,
    \begin{align*}
        \lim_{\nu_\eta\to \infty}\sup_{s\in [1/c,1]}\bigl|\var[\wh \G^*_\eta(s)]-\bigl(\var[ \G_\eta(s)]+\var[\Delta_\eta(s)]\bigr)\bigr|&= 0.
    \end{align*}
\end{lemma}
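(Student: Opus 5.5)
Write $X_i-\wh\mu_\eta(i)=(X_i-\E[X_i])-(\wh\mu_\eta(i)-\E[\wh\mu_\eta(i)])+\E[X_i-\wh\mu_\eta(i)]$. This splits
\[
\wh\G^*_\eta(s)=\G^*_\eta(s)-R_\eta(s)+\Delta_\eta(s),\qquad R_\eta(s)=\sqrt{\lfloor c\nu_\eta\rfloor}\sum_{i=1}^{\lfloor sc\nu_\eta\rfloor}V_iw_{\lfloor sc\nu_\eta\rfloor,\eta}(i)\bigl(\wh\mu_\eta(i)-\E[\wh\mu_\eta(i)]\bigr).
\]
The multipliers $(V_i)$ are independent of $(X_i)$ and centred. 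Hence $\Delta_\eta(s)$ is uncorrelated with the other two terms: the cross covariances have the form $\sum_{i,j}(\cdot)\cov[V_i,V_j]\,\E[\text{centred }X\text{-term}]\,\E[X_j-\wh\mu_\eta(j)]=0$. Expanding the variance then gives
\[
\var[\wh\G^*_\eta(s)]=\var[\G^*_\eta(s)]+\var[R_\eta(s)]-2\cov[\G^*_\eta(s),R_\eta(s)]+\var[\Delta_\eta(s)].
\]
The first term is handled by \Cref{lem:eps-of-n}, which gives $\sup_s|\var[\G^*_\eta(s)]-\var[\G_\eta(s)]|\to0$. By Cauchy--Schwarz, and because $\sup_s\var[\G^*_\eta(s)]$ is bounded (again via \Cref{lem:eps-of-n} and the uniform bound on $\var[\G_\eta(s)]$ coming from \ref{A0:weights} and the covariance inequality), it therefore suffices to show $\sup_s\var[R_\eta(s)]\to0$.

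For $R_\eta$, conditioning on $X$ gives
\[
\var[R_\eta(s)]=\frac{1}{n}\sum_{i,j\le n}\omega_{n,i}(s)\omega_{n,j}(s)\cov[V_i,V_j]\,\cov\bigl[\wh\mu_\eta(i),\wh\mu_\eta(j)\bigr],
\]
with $n=c\nu_\eta$ and $\omega_{n,i}$ as in the proof of \Cref{lem:eps-of-n}. I would bound $|\cov[V_i,V_j]|\le1$ and $|\omega_{n,i}|\lesssim1$ by \ref{A0:weights}, and use Cauchy--Schwarz on $\cov[\wh\mu_\eta(i),\wh\mu_\eta(j)]\le\var[\wh\mu_\eta(i)]^{1/2}\var[\wh\mu_\eta(j)]^{1/2}$. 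This reduces the problem to $\frac1n\bigl(\sum_{i\le n}\var[\wh\mu_\eta(i)]^{1/2}\bigr)^2$.

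The difficulty is that $\var[\wh\mu_\eta(i)]\approx\nu_\eta^{-1}$ only for large $i$. Plugging this in naively gives $\frac1n\cdot n^2\cdot\nu_\eta^{-1}=O(1)$, which does not vanish. So the crude bound is insufficient and the plan must exploit that the $V$-correlations decay on scale $\nu_\eta^{\chi}\ll\nu_\eta$. I would split the double sum by $|i-j|\le\nu_\eta^{\chi'}$ for some $\chi<\chi'<1$ and the complement.

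On the near-diagonal part, $|\cov[\wh\mu_\eta(i),\wh\mu_\eta(j)]|\lesssim\nu_\eta^{-1}$. This follows from \ref{A0:weights}, \ref{A1:X}, and summability of $\beta_n(m)^{(\gamma-2)/\gamma}$ under \ref{A2:X}, as in \Cref{lem:eps-of-n}. The contribution is then at most $\frac1n\cdot n\nu_\eta^{\chi'}\cdot\nu_\eta^{-1}\to0$.

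On the far part, the multiplier correlation is $|\cov[V_i,V_j]|\lesssim\rho_\eta^{|i-j|}\le\exp(-\nu_\eta^{\chi'-\chi})$; bounding $|\cov[V_i,V_j]|$ by the $Z$-correlation is what I would try to extract from the transformation lemmas of \cref{sec:transformation}. Together with $|\cov[\wh\mu_\eta(i),\wh\mu_\eta(j)]|\lesssim1$ (from \ref{A1:X}), this contributes at most $n\exp(-\nu_\eta^{\chi'-\chi})\to0$.

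The main obstacle is exactly this decay bound $|\cov[T(Z_i),T(Z_j)]|\lesssim|\cov[Z_i,Z_j]|$ for the transformed multipliers, and making it uniform given that $T$ depends on $\eta$. If only a weaker polynomial bound is available, I would instead choose $\chi'$ closer to $1$ so that the far part still vanishes. All bounds are uniform in $s$ because they depend on $s$ only through $\sup|\omega_{n,i}(s)|$.
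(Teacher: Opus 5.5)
Your proposal is correct and is essentially the paper's argument. It uses the same three-way split of $X_i-\wh\mu_\eta(i)$, exact orthogonality of $\Delta_\eta(s)$ to the other two pieces, \Cref{lem:eps-of-n} for $\G^*_\eta$, and $\sup_s\var[R_\eta(s)]\to0$ via $|\cov[\wh\mu_\eta(i),\wh\mu_\eta(j)]|\lesssim\nu_\eta^{-1}$ plus decay of the multiplier covariances; the only difference is that the paper bounds $\cov[\G^*_\eta(s),R_\eta(s)]$ directly through $|\cov[X_i,\wh\mu_\eta(j)]|\lesssim\nu_\eta^{-1}$ rather than by Cauchy--Schwarz.

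The obstacle you flag is already handled, for two reasons. First, your bound $|\cov[\wh\mu_\eta(i),\wh\mu_\eta(j)]|\lesssim\nu_\eta^{-1}$ holds for all $i,j\le c\nu_\eta$, not just near the diagonal, so \Cref{cor:mixing-multiplier} ($\nu_\eta^{-1}\sum_{i,j\le c\nu_\eta}|\cov[V_i,V_j]|=O(\nu_\eta^{\alpha})$ for $\alpha\in(\chi,1)$) yields $\sup_s\var[R_\eta(s)]\lesssim\nu_\eta^{\alpha-1}\to0$ with no near/far split. Second, the pointwise decay you wanted follows from Gebelein's inequality, $|\cov[T(Z_i),T(Z_j)]|\le|\cov[Z_i,Z_j]|=\rho_\eta^{|i-j|}$, which holds for every $T$ with $\E[T(Z_1)]=0$ and $\var[T(Z_1)]=1$ and is therefore uniform in $\eta$.
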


\begin{proof}
    We first prove 
    \begin{align*}
        \sup_{s\in [1/c,1]}\bigl|\var[\wh \G^*_\eta(s)]-\bigl(\var[ \G_\eta^*(s)]+\var[\wh \G^*_\eta(s)-\G^*_\eta(s)]\bigr)\bigr|&\to 0.
    \end{align*}
    Since 
    \begin{align*}
        \var[\wh \G^*_\eta(s)]
        =\var[ \G_\eta^*(s)]+\var[\wh \G^*_\eta(s)-\G^*_\eta(s)]
        +2\cov[\G_\eta^*(s),\wh \G^*_\eta(s)-\G^*_\eta(s)]
    \end{align*}
    it suffices to show 
    $$\sup_{s\in [1/c,1]}|\cov[\G^*_\eta(s),\wh \G_\eta^*(s)-\G^*_\eta(s)]|\to 0$$
    for $\nu_\eta\to \infty$.

    Again, set $c\nu_\eta=n$ for simplicity.
    We calculate 
    \begin{align*}
        \wh \G^*_\eta(s)-\G_\eta^*(s)
        &= \sqrt{n}\sum_{i=1}^{\lfloor sn\rfloor}V_iw_{\lfloor sn \rfloor ,\eta}(i)(\E[X_i]-\wh \mu_\eta(i)).
    \end{align*}
    Thus,
    \begin{align*}
        \cov[\G^*_\eta(s),\wh \G_\eta^*(s)-\G^*_\eta(s)]
        &=
        n\sum_{i=1}^{\lfloor sn\rfloor}w_{\lfloor sn\rfloor,\eta}(i)w_{\lfloor sn\rfloor,\eta}(j)\cov[V_i,V_j]\cov[X_i,\wh \mu_\eta(j)]
    \end{align*}
    by the law of total covariances.
    For all $i$
    \begin{align*}
        \sum_{j=1}^n|\cov[X_{i},X_{j}]|&\lesssim \sum_{j=1}^n \beta_n(|i-j|)^{\frac{\gamma-2}{\gamma}}
        \lesssim 
        \sum_{j=1}^n |i-j|^{-b \frac{\gamma-2}{\gamma}}
        \lesssim 
        \sum_{h=0}^\infty h^{-b \frac{\gamma-2}{\gamma}}
        <\infty
    \end{align*}
    since $\chi<1$ implies $1<\chi b(\gamma-2)/\gamma<b(\gamma-2)/\gamma$ by \ref{A2:X}.
    Conclude that
 \begin{align*}
        |\cov[X_i,\wh \mu_\eta(j)]|
        &=\left|\sum_{k=1}^jw_{j,\eta}(k)\cov[X_i,X_k]\right|
        \\
        &\leq \frac{1}{n}\sum_{k=1}^jw_{j,\eta}(k)n|\cov[X_i,X_k]|
        \\
        &\lesssim \frac{1}{n}\sup_i\sum_{k=1}^\infty|\cov[X_i,X_k]|
        \\
        &\lesssim \frac{1}{n}.
    \end{align*}
    Combined with  \Cref{cor:mixing-multiplier}, this gives
    \begin{align*}
        \left|\cov[\G^*_\eta(s),\wh \G_\eta^*(s)-\G^*_\eta(s)]\right|
        &=
        \left|n\sum_{i=1}^{\lfloor sn\rfloor}w_{\lfloor sn\rfloor,\eta}(i)w_{\lfloor sn\rfloor,\eta}(j)\cov[V_i,V_j]\cov[X_i,\wh \mu_\eta(j)]\right|
        \lesssim \frac{1}{n^2}\sum_{i,j=1}^n|\cov[V_i,V_j]|
        =o(1).
    \end{align*}
    Thus, 
    $$\sup_{s\in [1/c,1]}|\cov[\G^*_\eta(s),\wh \G_\eta^*(s)-\G^*_\eta(s)]|\to 0$$
    for $\nu_\eta\to \infty$ which concludes the first part of the proof.

    Since $\var[\G_\eta^*(s)]-\var[\G_\eta(s)]$
    converges uniformly to zero by \Cref{lem:eps-of-n},
    it remains to show 
    $$\sup_{s\in [1/c,1]}\bigl|\var[\wh \G^*_\eta(s)-\G^*_\eta(s)]-\var[\Delta_\eta(s)]\bigr|\to 0.$$
    By the law of total covariances
    \begin{align*}
    &\quad \, \var[\wh \G^*_\eta(s)-\G^*_\eta(s)] \\
    &= n\sum_{i=1}^{\lfloor sn\rfloor}w_{\lfloor sn\rfloor,\eta}(i)w_{\lfloor sn\rfloor,\eta}(j)\cov[V_i,V_j]\E\biggr[\bigl(\E[X_i]-\wh \mu_\eta(i)\bigr)\bigl(\E[X_j]-\wh \mu_\eta(j)\bigr)\biggr]
    \\
    &=n\sum_{i=1}^{\lfloor sn\rfloor}w_{\lfloor sn\rfloor,\eta}(i)w_{\lfloor sn\rfloor,\eta}(j)\cov[V_i,V_j]\biggl(\cov[\wh \mu_\eta(i),\wh \mu_\eta(j)]+\E[X_i-\wh \mu_\eta(i)]\E[X_j-\wh \mu_\eta(j)]\biggr)
    \\
    &=\left[n\sum_{i=1}^{\lfloor sn\rfloor}w_{\lfloor sn\rfloor,\eta}(i)w_{\lfloor sn\rfloor,\eta}(j)\cov[V_i,V_j]\cov[\wh \mu_\eta(i),\wh \mu_\eta(j)]\right]+\var\left[\sqrt{n}\sum_{i=1}^nV_iw_{\lfloor sn\rfloor,\eta}\E[X_i-\wh \mu_\eta(i)]\right]
    \\
    &=\left[n\sum_{i=1}^{\lfloor sn\rfloor}w_{\lfloor sn\rfloor,\eta}(i)w_{\lfloor sn\rfloor,\eta}(j)\cov[V_i,V_j]\cov[\wh \mu_\eta(i),\wh \mu_\eta(j)]\right]+\var[\Delta_\eta(s)].
\end{align*}
By a similar argument as before, 
conclude that the first summand converges to zero uniformly in $s\in [1/c,1]$.
This concludes the proof.
\end{proof}

\begin{lemma}\label{lem:multiplier-tightness}
        Let $(W_i)_{i \in \N}, (Y_i)_{i \in \N} $ be two real-valued sequences of random variables with finite second moment, independent from another and $\E[W_i]=0$. Let $w_{i}:S\to \R$ be functions with $\sup_{i,s}|w_i(s)| \le C_w \in (1, \infty)$ for some set $S$.
        Define 
        $$d(s,t)=\left(\frac{1}{n}\sum_{i=1}^n(w_i(s)-w_i(t))^2\right)^{1/2}.$$
        For $\eps>0$ assume that there exist $N_\eps\in \N$ and $s_1,\ldots,s_{N_\eps}$ such that for all $s\in S$ there exists $i$ with $d(s,s_i)\leq \eps$.
        Then,
        \begin{align*}
            \E\left[\sup_{s\in S}\left|\frac{1}{\sqrt{n}}\sum_{i=1}^nw_i(s)W_iY_i\right|\right]
            \lesssim \max_{i=1,\ldots, n}\E[Y_i^2]^{1/2}\left(\sigma_n N_\eps^{1/2}+\eps n^{1/2}\right)
        \end{align*}
        where 
        $\sigma_n^2=n^{-1}\sum_{i,j=1}^n|\cov[W_i,W_j]|$ and the constant only depends on $C_w$ and $\max_{i} \E[W_i^2]$.
    \end{lemma}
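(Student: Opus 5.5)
The plan is a standard chaining-free discretization: reduce the supremum over $S$ to a maximum over the finite $\eps$-net $\{s_1,\dots,s_{N_\eps}\}$ plus a remainder controlled by the net's resolution. For every $s\in S$ pick $\pi(s)\in\{s_1,\dots,s_{N_\eps}\}$ with $d(s,\pi(s))\le\eps$, and split
\begin{align*}
\sup_{s\in S}\Bigl|\tfrac{1}{\sqrt n}\textstyle\sum_i w_i(s)W_iY_i\Bigr|
\le \max_{k\le N_\eps}\Bigl|\tfrac{1}{\sqrt n}\textstyle\sum_i w_i(s_k)W_iY_i\Bigr|
+\sup_{s\in S}\Bigl|\tfrac{1}{\sqrt n}\textstyle\sum_i (w_i(s)-w_i(\pi(s)))W_iY_i\Bigr|.
\end{align*}
I bound the two terms separately, obtaining the $\sigma_nN_\eps^{1/2}$ and the $\eps n^{1/2}$ contributions respectively.

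\textbf{Finite maximum.} Bound the maximum crudely by the $\ell^2$-sum, $\E[\max_k|A_k|]\le(\sum_k\E[A_k^2])^{1/2}$. For fixed $k$, condition on $(Y_i)$. Independence and $\E[W_i]=0$ give
\[
\E[A_k^2]=\frac1n\sum_{i,j}w_i(s_k)w_j(s_k)\cov[W_i,W_j]\,\E[Y_iY_j].
\]
Cauchy--Schwarz gives $|\E[Y_iY_j]|\le\max_i\E[Y_i^2]$, and $|w_i|\le C_w$. Hence $\E[A_k^2]\le C_w^2\max_i\E[Y_i^2]\,\sigma_n^2$. Summing over $k$ and taking the square root yields $C_w\max_i\E[Y_i^2]^{1/2}\sigma_nN_\eps^{1/2}$.

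\textbf{Remainder.} Here I do not exploit cancellation. Apply Cauchy--Schwarz in $i$ pathwise:
\[
\Bigl|\tfrac{1}{\sqrt n}\textstyle\sum_i (w_i(s)-w_i(\pi(s)))W_iY_i\Bigr|
\le \sqrt n\, d(s,\pi(s))\Bigl(\tfrac1n\textstyle\sum_i W_i^2Y_i^2\Bigr)^{1/2}
\le \eps\sqrt n\Bigl(\tfrac1n\textstyle\sum_i W_i^2Y_i^2\Bigr)^{1/2}.
\]
The right-hand side no longer depends on $s$, so the supremum is harmless. Taking expectations with Jensen and independence, $\E[W_i^2Y_i^2]=\E[W_i^2]\E[Y_i^2]$, bounds the term by $\max_i\E[W_i^2]^{1/2}\max_i\E[Y_i^2]^{1/2}\,\eps n^{1/2}$. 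Adding the two bounds gives the claim, with a constant depending only on $C_w$ and $\max_i\E[W_i^2]$.

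\textbf{Obstacles.} The only delicate points are measurability of the supremum, which can be handled via outer expectation or by assuming $S$ countable/separable, and the choice of $\pi(s)$. Neither is substantive. The key step is the conditioning argument in the finite-maximum part, where the covariance of the multipliers enters only through $\sigma_n$.
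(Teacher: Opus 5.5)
Your proof is correct and follows the paper's argument. Both split into a maximum over the $\eps$-net plus a remainder, handle the remainder with the same pathwise Cauchy--Schwarz bound $\eps\,(\sum_i W_i^2Y_i^2)^{1/2}$, and control the net term through $\E[A_k^2]\lesssim \max_i\E[Y_i^2]\,\sigma_n^2$; the only difference is that you pass to the expected maximum via $\E\max_k|A_k|\le(\sum_k\E[A_k^2])^{1/2}$, whereas the paper uses a union bound with Chebyshev's inequality and integrates the tail, giving the same $\sigma_n N_\eps^{1/2}$ term.
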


    \begin{proof}
    It holds
        \begin{align*}
            \sup_{s\in S}\left|\frac{1}{\sqrt{n}}\sum_{j=1}^nw_i(s_j)W_iY_i\right|
            &\leq 
            \max_{i=1,\ldots, N_\eps}\left|\frac{1}{\sqrt{n}}\sum_{i=1}^nw_i(s)W_iY_i\right|
            +\sup_{d(s,t)\leq \eps}\left|\frac{1}{\sqrt{n}}\sum_{i=1}^n(w_i(s)-w_{i}(t))W_iY_i\right|
        \end{align*}
        We deal with the maximum first. By independence and the law of total covariances
        \begin{align*}
            \sup_{s \in S} \E\left[\left|\frac{1}{\sqrt{n}}\sum_{i=1}^nw_i(s)W_iY_i\right|^2\right] = \sup_{s \in S}  \var\left[\frac{1}{\sqrt{n}}\sum_{i=1}^nw_i(s)W_iY_i\right]
            &\lesssim \max_{i=1,\ldots, n}\E[Y_i^2]\sigma_n^2,
        \end{align*}
        where the constant only depends on $\max_{i,s}|w_i(s)|$.
        The union bound and Markov's inequality yield
        \begin{align*}
            \Pr\left(\max_{1 \le j \le N_\eps}\left|\frac{1}{\sqrt{n}}\sum_{i=1}^nw_i(s_j)W_iY_i\right| > N_\eps^{1/2}\max_{i=1,\ldots, n}\E[Y_i^2]^{1/2}\sigma_n \delta\right) \leq \frac{1}{\delta^2}.
        \end{align*}
        Converting this tail bound to a bound on the expectation yields
        \begin{align*}
            \E\left[\max_{1 \le j \le N_\eps}\left|\frac{1}{\sqrt{n}}\sum_{i=1}^nw_i(s_j)W_iY_i\right|\right]
            &\lesssim N_\eps^{1/2} \max_{i=1,\ldots, n}\E[Y_i^2]^{1/2} \sigma_n.
        \end{align*}

        The Cauchy-Schwarz inequality gives
        \begin{align*}
            \sup_{d(s,t)\leq \eps}\left|\frac{1}{\sqrt{n}}\sum_{i=1}^n(w_i(s)-w_{i}(t))W_iY_i\right|
            &\leq 
            \sup_{d(s,t)\leq \eps}\left(\frac{1}{n}\sum_{i=1}^n(w_i(s)-w_{i}(t))^2\right)^{1/2}
            \left(\sum_{i=1}^nW_i^2Y_i^2\right)^{1/2}
            \\ 
            \leq 
            \eps \left(\sum_{i=1}^nW_i^2Y_i^2\right)^{1/2}.
        \end{align*}
        It follows
        \begin{align*}
            \E\left[\sup_{d(s,t)\leq \eps}\left|\frac{1}{\sqrt{n}}\sum_{i=1}^n(w_i(s)-w_{i}(t))W_iY_i\right|^2\right]
            &\lesssim
            \eps^2 n \max_{i=1,\ldots, n}\E[Y_i^2],
        \end{align*}
        where the constant only depends on 
        $\max_{i} \E[W_i^2]$.
        Combined, this yields
        \begin{align*}
            \E\left[\sup_{s\in S}\bigl|\frac{1}{\sqrt{n}}\sum_{i=1}^nw_i(s)W_iY_i\bigr|\right]
            \lesssim \max_{i=1,\ldots, n}\E[Y_i^2]^{1/2}\left(N_\eps^{1/2} \sigma_n +\eps n^{1/2}\right). \tag*{\qedhere}
        \end{align*}
\end{proof}

\begin{proof}[Proof of \Cref{thm:1}]
    We will first prove that $\mu_\eta(t)$ consistently estimates $\E[X_t]=c$ for $t,\nu_\eta\to \infty$.
    Calculate 
    \begin{align*}
        c-\E[\wh \mu_\eta(t)]
        =c\left(1-\sum_{i=1}^tw_{t,\eta}(i)\right)=0,
    \end{align*}
    and
    \begin{align*}
        \var[\wh \mu_\eta(t)]
        &=\sum_{i,j=1}^tw_{t,\eta}(i)w_{t,\eta}(j)\cov[X_i,X_j]
        =\mathcal{O}(t^{-1}).
    \end{align*}

    Conclude that $$\E[(\E[X_t]-\wh \mu_\eta(t))^2]=\mathcal{O}(t^{-1}).$$
    Next we prove that $\sup_{s\in [1/c,1]}|\wh \G_\eta^*(s) - \G_\eta^*(s)|$
    converges to zero in probability.
    Again, define $c\nu_\eta = n$ and write 
    $\omega_{n,i}(s)=w_{sn,\eta}(i)$.
    Then, 
    \begin{align*}
        \wh \G_\eta^*(s) - \G_\eta^*(s) 
        & = 
        \frac{1}{\sqrt{n}}\sum_{i=1}^n \omega_{n,i}(s)V_i(\E[X_i]-\wh \mu_\eta(i))
    \end{align*}
     \Cref{cor:mixing-multiplier} yields $n^{-1}\sum_{i,j=1}^n|\cov[V_i,V_j]|=\mathcal{O}(n^{\chi})$.
       We first show that the first $n^{\alpha}$ summands are negligible for any $\alpha \in (0, 1)$. Observe that since $\sup_{i, s}|\omega_{n,i}(s)|\lesssim 1$ and $\sup_i \E[|V_i|] \le 2$ by assumption
    \begin{align*}
         \E\left[\sup_{s\in [1/c,1]}\left|\frac{1}{\sqrt{n}}\sum_{i=1}^{n^{\alpha}} \omega_{n,i}(s)V_i(\E[X_i]-\wh \mu_\eta(i))\right|\right] &\lesssim  \E\left[\frac{1}{\sqrt{n}}\sum_{i=1}^{n^{\alpha}} |V_i||\E[X_i]-\wh \mu_\eta(i)| \right] \\
         &\lesssim  n^{-1/2}  \sum_{i=1}^{n^{\alpha}} \E[|\E[X_i]-\wh \mu_\eta(i)|^2]^{1/2} \\
        &\lesssim  n^{-1/2}  \sum_{i=1}^{n^{\alpha}} i^{-1/2} \\
        &\lesssim n^{-1/2} n^{\alpha/2} \to 0.
     \end{align*}

     Next, we deal with the remaining summands.  By \ref{A0:weights}--\ref{A3:weights}, the conditions on the weights $w_{n,i}(s)=w_i(s)$ in 
    \Cref{lem:multiplier-tightness} are satisfied for some $N_\eps\leq \eps^{-C}$.
    Now pick 
    $\eps=\tilde \eps n^{-(1 - \alpha) / 2}$.
    Applying \Cref{lem:multiplier-tightness} with $Y_i=\E[X_i]-\wh \mu_\eta(i)$ yields
    \begin{align*}
        \E\left[\sup_{s\in [1/c,1]}\bigl|\frac{1}{\sqrt{n}}\sum_{i=n^{\alpha}+1}^{n} \omega_{n,i}(s)V_i(\E[X_i]-\wh \mu_\eta(i))\bigr|\right]
        &\lesssim 
        \max_{n^{\alpha}\leq i\leq n}\E[(\E[X_i]-\wh \mu_\eta(i))^2]^{1/2}(n^{\chi/2}N_\eps^{1/2}+\eps n^{1/2})
        \\
        &\lesssim 
        n^{-\alpha/2}(n^{\chi/2}N_\eps^{1/2}+\eps n^{1/2}) \\
          &\lesssim 
        n^{-\alpha/2}(n^{\chi/2} \tilde  n^{C(1 - \alpha) / 4} \eps^{-C/2} + \tilde \eps n^{-(1 - \alpha) / 2} n^{1/2}) \\
        &= n^{-\alpha/2 + \chi / 2 + C(1 - \alpha) / 4} \tilde \eps^{-C/2}+ \tilde \eps \\
        &\to \tilde \eps,
  \end{align*}
    since  $\chi < 1/2$ by assumption and $\alpha \in (0, 1)$ can be taken arbitrarily close to $1$.
    Since this is true for all $\tilde{\eps}>0$, we obtain 
    that the expectation converges to zero.
    Conclude that 
    $$\E\left[\sup_{s\in [1/c,1]}|\wh \G_\eta^*(s) - \G_\eta^*(s)|\right]\to 0.$$
    
    Lastly, 
    $$\sup_{t\in (t_1,t_2]}|\wh \mu_\eta(t)-\mu_\eta(t)|/\sigma_t=\sup_{s\in (1/c,1]}|\G_\eta(s)|/\var[\G_\eta(s)]^{1/2} \quad \text{and} \quad \sup_{s\in (1/c,1]}|\G^*_\eta(s)|/\var[\G^*_\eta(s)]^{1/2}$$
    are asymptotically equivalent by
    \Cref{thm:main} and \Cref{lem:eps-of-n}.
    By the former and \Cref{lem:asy-var-with-mean-estimate},
    $$\sup_{s\in (1/c,1]}|\G^*_\eta(s)|/\var[\G^*_\eta(s)]^{1/2} \quad \text{and} \quad \sup_{s\in (1/c,1]}|\wh \G^*_\eta(s)|/\var[\wh \G^*_\eta(s)]^{1/2}=\sup_{t\in (t_1,t_2]}|\wh \delta^*_\eta(t)|/\sigma_t^*$$
    are asymptotically equivalent.
    Note that $\var[\Delta_\eta(s)]=0$
    since $\E[X_t-\wh \mu_\eta(t)]=0$.
    Combined, we obtain the claim. 
\end{proof}

\begin{proof}[Proof of \Cref{thm:2}]
    We first prove
    $\lim_{\nu_\eta\to \infty}\max_{t\in (t_1,t_2]}\sigma_t/\sigma_t^*=0.$
    Recall $t_i=c_i\nu_\eta$.
    By construction, for all $t\in (t_1,t_2]$ there exists 
    $s\in [1/c,1]$ such that $$\sigma_t^{2}/\sigma_t^{*2}=\var[\G_\eta(s)]/\var[\wh \G^*_\eta(s)].$$
    Thus, it suffices to prove 
        $$\lim_{\nu_\eta\to \infty}\max_{s\in [1/c,1]}\var[\G_\eta(s)]/\var[\wh \G^*_\eta(s)]=0$$
    By \Cref{lem:asy-var-with-mean-estimate}, $\var[\G^*_\eta(s)]$ is asymptotically given by 
    $\var[ \G_\eta(s)]+\var[\Delta_\eta(s)]$, uniformly in $s$.
    Thus, for $\nu_\eta\to \infty$
    $$\var[ \G_\eta(s)]/(\var[ \G_\eta(s)]+\var[\Delta_\eta(s)]),$$
    (hence, $\sigma_t/\sigma_t^{*}$) converges uniformly to zero since $\var[\Delta_\eta(s)]\to \infty$ and $\var[ \G_\eta(s)]$ is bounded uniformly in $s$. 

    Lastly, as in the proof of Corollary 5.2 of \cite{palm2025centrallimittheoremsnonstationarity}, we derive that $\liminf_{\nu_\eta\to \infty}q_\alpha^*>0$.
    From the proof of \Cref{thm:main} resp. Theorem 3.2 of \cite{palm2025centrallimittheoremsnonstationarity}, we derive that
    $\sup_{t\in (t_1,t_2]}|\wh \mu_\eta(t)-\mu_\eta(t)|/\sigma_t$ is bounded in probability. 
    Conclude that $\sup_{t\in (t_1,t_2]}|\wh \mu_\eta(t)-\mu_\eta(t)|/\sigma_t^*$ converges uniformly to zero in probability. Combined, this yields the second claim.
\end{proof}

\subsubsection{Statistical Properties of Transformed Multipliers} \label{sec:transformation}

Recall the construction of the bootstrap weights:
\begin{align}\label{ex:construction}
    V_{t} = T(Z_t), \quad Z_t = \rho_\eta Z_{t - 1} + \sqrt{1 - \rho_\eta^2} \xi_t,
\end{align}
where $\xi_1, \xi_2, \ldots \stackrel{iid}\sim \Ncal(0, 1)$, $\rho_\eta = 1 - \nu_\eta^{-\chi}, \chi\in (0,1/2)$ and $T:\R\to \R$ is some transformation. Note that we chose $\chi=1/3$ in our implementation. 
Assume that there exists $K:\R^2\to \R$ measurable almost everywhere such that 
    \begin{enumerate}[label=(T\arabic*)]
        \item\label{A1:T} $\E[T(Z_1)]=0,\var[T(Z_1)]=1$ and $\E[|T(Z_1)|^{\gamma}]<\infty$ for some $1+(1-2\chi)^{-1}< \gamma.$
        \item\label{A2:T} $|T(x)-T(y)|\leq K(x,y)|x-y|$ almost everywhere. 
        \item\label{A3:T} $\sup_{i,j}\E[K(Z_i,Z_j)^4]<\infty$.
    \end{enumerate}

\begin{remark}\label{lem:variational-distance}
    Let $P,Q$ be two probability measures defined on the same probability space $(\Omega,\mathcal{F})$ with density function $p$ resp. $q$.
    Recall that the total variation distance is given by
    $$\delta(P,Q)=\sup_{A\in \mathcal{F}}|P(A)-Q(A)| = \frac{1}{2}\|p-q\|_{1}.$$
\end{remark}


\begin{corollary}\label{lem:KL}
    Let $(Z_i)_{i\in \N}$ be a stochastic process. 
    Assume all $P_{(Z_i,\dots,Z_{i+k})}$ admit density functions $f_{i,\dots,i+k}$.
    Then, 
    \begin{align*}
        \beta_n(p)
        &= \frac{1}{2}\sup_{k\leq n-p}\|f_{1,\dots,k,k+p,\ldots n}-f_{1,\dots,k}f_{k+p,\dots,n}\|_1.
    \end{align*}
\end{corollary}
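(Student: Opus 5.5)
This follows by combining the definition of the $\beta$-mixing coefficient with \Cref{lem:variational-distance}; no further machinery is needed. Fix $n$, $p<n$ and $k\le n-p$. Set $\Acal_1=\sigma(Z_1,\dots,Z_k)$ and $\Acal_2=\sigma(Z_{k+p},\dots,Z_n)$.

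\textbf{Step 1: pass from $\sigma$-algebras to laws.}
The total variation distance between $P_{\Acal_1}\otimes P_{\Acal_2}$ and $P$ restricted to $\Acal_1\vee\Acal_2$ is a supremum over events. Every event in $\Acal_1\vee\Acal_2$ is a preimage of a Borel set under the vector $(Z_1,\dots,Z_k,Z_{k+p},\dots,Z_n)$. Under the product measure, the same vector has law $P_{(Z_1,\dots,Z_k)}\otimes P_{(Z_{k+p},\dots,Z_n)}$. Hence
\[
\beta(\Acal_1,\Acal_2)=\delta\bigl(P_{(Z_1,\dots,Z_k)}\otimes P_{(Z_{k+p},\dots,Z_n)},\,P_{(Z_1,\dots,Z_k,Z_{k+p},\dots,Z_n)}\bigr).
\]
Both sides are now total variation distances between measures on $\R^{n-p+1}$. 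The identification is a monotone-class (or $\pi$–$\lambda$) argument on rectangles. One small point: the definition really compares $P_{\Acal_1}\otimes P_{\Acal_2}$ with $P$ on $\Acal_1\vee\Acal_2$, which I read as the intended meaning of $P_{\Acal_1\otimes\Acal_2}$.

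\textbf{Step 2: identify the densities.}
By assumption, the joint law $P_{(Z_1,\dots,Z_k,Z_{k+p},\dots,Z_n)}$ has density $f_{1,\dots,k,k+p,\dots,n}$. The product law has density $f_{1,\dots,k}f_{k+p,\dots,n}$ by Fubini. The hypothesis only states densities for contiguous blocks. For the gapped vector, I would obtain the density by integrating the contiguous density $f_{1,\dots,n}$ over the coordinates $k+1,\dots,k+p-1$; alternatively, the statement can be read as assuming these densities exist.

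\textbf{Step 3: apply the remark and take the supremum.}
\Cref{lem:variational-distance} turns each total variation distance into $\tfrac12\|\cdot\|_1$ of the density difference. Taking the supremum over $k\le n-p$ gives the stated formula.

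The only real obstacle is Step 1, namely justifying that the supremum over all measurable events of the abstract space reduces to Borel sets of the coordinate space. This holds because both measures live on the $\sigma$-algebra generated by the vector, so every relevant event is of the form $\{(Z_1,\dots,Z_k,Z_{k+p},\dots,Z_n)\in B\}$.
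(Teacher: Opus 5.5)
Your argument is correct and is the intended one: the paper gives no separate proof and treats the statement as an immediate consequence of the definition of $\beta$ and \Cref{lem:variational-distance}, and your Steps 1--3 spell this out (reduce to laws, marginalize $f_{1,\dots,n}$ to get the gapped density, apply the $\tfrac12\|\cdot\|_1$ identity, take the supremum over $k$). The one thing to fix is your reading of the definition in Step 1: $P_{\Acal_1}\otimes P_{\Acal_2}$ and $P_{\Acal_1\otimes\Acal_2}$ (the image of $P$ under $\omega\mapsto(\omega,\omega)$) both live on $(\Omega\times\Omega,\Acal_1\otimes\Acal_2)$, not on $(\Omega,\Acal_1\vee\Acal_2)$, where a measure with $Q(A\cap B)=P(A)P(B)$ need not exist; so push both measures forward through $(\omega,\omega')\mapsto(Z_1(\omega),\dots,Z_k(\omega),Z_{k+p}(\omega'),\dots,Z_n(\omega'))$, which generates $\Acal_1\otimes\Acal_2$ and therefore preserves total variation.
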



\begin{lemma}\label{lem:l1-markov}
    Let $(Z_i)_{i\in \N}$ be a real-valued Markov process such that each $P_{(Z_{k_1},\ldots,Z_{k_l})}$
    admits a density function $f_{k_1,\ldots,k_l}$.
    Then, 
    $$\|f_{1,\ldots,k,k+h,\ldots,n}-f_{1,\ldots,k}f_{k+h,\ldots,n}\|_1= \|f_{k,k+h}-f_{k}f_{k+h}\|_1.$$
\end{lemma}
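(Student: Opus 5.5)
We will prove \Cref{lem:l1-markov} by writing both joint densities through the Markov factorization and then integrating out everything except the two "boundary" coordinates $Z_k$ and $Z_{k+h}$.

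Write $x_{1:k}=(x_1,\dots,x_k)$ and $x_{k+h:n}=(x_{k+h},\dots,x_n)$. By the Markov property, the conditional density of $Z_{k+h:n}$ given $Z_{1:k}$ depends only on $Z_k$. Hence
$$f_{1,\dots,k,k+h,\dots,n}(x_{1:k},x_{k+h:n}) = f_{1,\dots,k}(x_{1:k})\, f_{k+h\mid k}(x_{k+h}\mid x_k)\, g(x_{k+h:n}\mid x_{k+h}),$$
where $g$ denotes the conditional density of $(Z_{k+h+1},\dots,Z_n)$ given $Z_{k+h}$. The corresponding decomposition of the product of marginals is
$$f_{1,\dots,k}(x_{1:k})\,f_{k+h,\dots,n}(x_{k+h:n}) = f_{1,\dots,k}(x_{1:k})\, f_{k+h}(x_{k+h})\, g(x_{k+h:n}\mid x_{k+h}).$$
For the second identity we use that $g$ also describes the future after $k+h$ unconditionally on the past before $k+h$. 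This holds again by the Markov property, because $f_{k+h,\dots,n}=f_{k+h}\,g$.

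Subtracting the two expressions, the difference is
$$f_{1,\dots,k}(x_{1:k})\,g(x_{k+h:n}\mid x_{k+h})\bigl[f_{k+h\mid k}(x_{k+h}\mid x_k)-f_{k+h}(x_{k+h})\bigr].$$
Take absolute values; the factors $f_{1,\dots,k}$ and $g$ are nonnegative. Integrate first over $x_{k+h+1},\dots,x_n$: the factor $g$ integrates to one. Next integrate over $x_1,\dots,x_{k-1}$: the factor $f_{1,\dots,k}$ integrates to $f_k(x_k)$, and the bracket does not depend on these variables, so this step is legitimate. Fubini/Tonelli justifies both steps, since all integrands are nonnegative. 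What remains is
$$\int\!\!\int f_k(x_k)\,\bigl|f_{k+h\mid k}(x_{k+h}\mid x_k)-f_{k+h}(x_{k+h})\bigr|\,dx_k\,dx_{k+h}=\|f_{k,k+h}-f_kf_{k+h}\|_1,$$
using $f_k f_{k+h\mid k}=f_{k,k+h}$.

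The main obstacle is measure-theoretic care rather than computation. We must ensure the conditional densities exist and are defined almost everywhere where $f_k>0$, and the Markov property has to be phrased at the level of densities. We handle this by defining conditional densities as ratios of joint to marginal densities on the set $\{f_k>0\}$, setting them to $0$ elsewhere, and noting that null sets do not affect the $L^1$ norms. The edge cases $k+h=n$ (no future block, so $g\equiv1$) and $k=1$ are covered by the same argument with empty products.
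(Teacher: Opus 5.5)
Your proof is correct and follows the paper's argument: both use the Markov factorization so that the difference of the two densities becomes nonnegative conditional densities times a factor depending only on $(x_k,x_{k+h})$, and then integrate out the remaining coordinates by Fubini/Tonelli. The only difference is bookkeeping. The paper normalizes the past block as $f_{1,\dots,k\mid k}$ and keeps $f_{k,k+h}-f_kf_{k+h}$ directly, whereas you keep $f_{1,\dots,k}$ together with the transition density $f_{k+h\mid k}$ and recover $f_{k,k+h}$ only at the final step.
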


\begin{proof}
    Markov's property yields
    \begin{align*}
            f_{1,\ldots,k}f_{k+h,\ldots,n}
        &=f_{k+h,\ldots,n|k+h}f_{1,\ldots,k}f_{k+h}
        \\
        &=f_{k+h,\ldots,n|k+h}f_{1,\ldots,k|k}f_kf_{k+h}
        \\
        f_{1,\ldots,k,k+h,\ldots,n}
        &=f_{k+h,\ldots,n|k+h}f_{1,\ldots,k,k+h}
        \\
        &=f_{k+h,\ldots,n|k+h}f_{k,k+h}f_{1,\ldots,k|k}.
    \end{align*}
    In summary, 
    \begin{align*}
        f_{1,\ldots,k,k+h,\ldots,n}-f_{1,\ldots,k}f_{k+h,\ldots,n}
        &= f_{k+h,\ldots,n|k+h}f_{1,\ldots,k|k}(f_{k,k+h}-f_kf_{k+h})
    \end{align*}
    which yields 
    \begin{align*}
        \|f_{1,\ldots,k,k+h,\ldots,n}-f_{1,\ldots,k}f_{k+h,\ldots,n}\|_1
        &= \int |f_{1,\ldots,k,k+h,\ldots,n}-f_{1,\ldots,k}f_{k+h,\ldots,n}|dx_{1,\ldots, n}
        \\
        &=\int \left(\int f_{k+h,\ldots,n|k+h}dx_{k+h+1,\ldots,n}\int f_{1,\ldots,k|k}dx_{1,\ldots k-1}\right)
        |f_{k,k+h}-f_kf_{k+h}|dx_{k,k+h}
        \\
        &= \int |f_{k,k+h}-f_kf_{k+h}|dx_{k,k+h}
        \\
        &=\|f_{k,k+h}-f_kf_{k+h}\|_1
    \end{align*}
    by Fubini's theorem.

\end{proof}

\begin{corollary}\label{cor:alpha-transformation}
    Assume that $T:\R\to \R$ is measurable almost everywhere.
    Then, 
    \begin{align*}
    \beta(\sigma(V_1,\ldots,V_k),\sigma(V_{k+h},\ldots,V_n))
        &\le \frac{1}{2}\left(-\frac{1}{2}\ln\left(1-\cov[Z_k,Z_{k+h}]^2\right)\right)^{\frac{1}{2}},
    \end{align*}
    with $V_i,Z_i$ defined as in \eqref{ex:construction}.
\end{corollary}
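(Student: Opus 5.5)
The plan is to pass from $(V_i)$ to the Gaussian chain $(Z_i)$, reduce to a single pair $(Z_k,Z_{k+h})$, and then bound a total variation distance between two explicit bivariate Gaussian laws.

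\textbf{Step 1 (from $V$ to $Z$).} Since $V_i=T(Z_i)$ with $T$ measurable, $\sigma(V_1,\dots,V_k)\subset\sigma(Z_1,\dots,Z_k)$ and $\sigma(V_{k+h},\dots,V_n)\subset\sigma(Z_{k+h},\dots,Z_n)$. The coefficient $\beta(\mathcal A_1,\mathcal A_2)$ is a total variation distance between the joint law and the product of marginals. Restricting both measures to smaller $\sigma$-algebras can only decrease this supremum (data processing). Hence it suffices to bound $\beta(\sigma(Z_1,\dots,Z_k),\sigma(Z_{k+h},\dots,Z_n))$.

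\textbf{Step 2 (reduction to one pair).} The recursion \eqref{ex:construction} makes $(Z_i)$ a Gaussian Markov chain with nondegenerate finite-dimensional densities. By \Cref{lem:variational-distance} and \Cref{lem:KL}, the coefficient equals $\tfrac12\|f_{1,\dots,k,k+h,\dots,n}-f_{1,\dots,k}f_{k+h,\dots,n}\|_1$. \Cref{lem:l1-markov} collapses this to $\tfrac12\|f_{k,k+h}-f_kf_{k+h}\|_1=\delta(P,Q)$. Standardising gives $P=\Ncal(0,\Sigma_r)$ with $\Sigma_r=\begin{pmatrix}1&r\\ r&1\end{pmatrix}$, where $r=\cov[Z_k,Z_{k+h}]$ and $Z_i$ has unit variance (at stationarity, $r=\rho_\eta^h$), and $Q=\Ncal(0,I_2)$. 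The total variation distance is invariant under this bijective scaling.

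\textbf{Step 3 (the Gaussian bound).} It remains to show
\[
\delta(P,Q)\le \tfrac12\bigl(\mathrm{KL}(P\,\|\,Q)\bigr)^{1/2},\qquad \mathrm{KL}(P\,\|\,Q)=-\tfrac12\ln(1-r^2).
\]
The Kullback--Leibler formula is the standard one, $\tfrac12[\operatorname{tr}\Sigma_r-2-\ln\det\Sigma_r]$, with $\det\Sigma_r=1-r^2$. This step is the main obstacle. Plain Pinsker, $\delta\le(\mathrm{KL}/2)^{1/2}$, loses exactly a factor $\sqrt2$ relative to the claim, so it cannot be used as is.

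The first thing I would try is conditioning on the first coordinate. Write $P$ as the law of $(x,\,rx+\sqrt{1-r^2}\,W)$. Since $P$ and $Q$ share the first marginal,
\[
\delta(P,Q)=\E_X\bigl[\delta\bigl(\Ncal(rX,1-r^2),\Ncal(0,1)\bigr)\bigr].
\]
I would split each inner distance by the triangle inequality through $\Ncal(0,1-r^2)$:
\begin{enumerate}
\item a mean-shift part, bounded exactly by $\delta(\Ncal(m,s^2),\Ncal(0,s^2))=2\Phi(|m|/2s)-1\le |m|/(s\sqrt{2\pi})$;
\item a pure variance part, handled by a one-dimensional Pinsker bound.
\end{enumerate}
Then I would take $\E_X$ using $\E|X|=\sqrt{2/\pi}$. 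For small $r$ this yields a leading term of order $r/\pi$, which is below the target $r/(2\sqrt2)$. The remaining work is an elementary one-variable inequality in $r\in[0,1)$, namely that the resulting function of $r$ stays below $\sqrt{-\ln(1-r^2)/8}$.

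The regime where $\sqrt{-\ln(1-r^2)/8}\ge 1$ is trivial, since $\delta\le 1$. So only a compact range of $r$ needs checking, by monotonicity or calculus. If this splitting proves too lossy at moderate $r$, the fallback is to diagonalise instead. In the coordinates $(x\pm y)/\sqrt2$, $P$ becomes $\Ncal(0,1+r)\otimes\Ncal(0,1-r)$ and $Q$ stays $\Ncal(0,1)^{\otimes2}$. I would then apply the exact one-dimensional variance-only total variation formula to each factor, combined by the product (subadditivity) bound, and compare the result with the target function of $r$.
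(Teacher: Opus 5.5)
Your Steps 1 and 2 are exactly the paper's argument: pass to $\sigma(Z_\cdot)$, then use \Cref{lem:KL} and \Cref{lem:l1-markov}. You are also right that Pinsker only gives $\delta(P,Q)\le(\mathrm{KL}/2)^{1/2}=\bigl(-\tfrac14\ln(1-r^2)\bigr)^{1/2}$, a factor $\sqrt2$ above the claim. The paper's proof reaches its constant through a misstated Pinsker inequality, $\|f_{Z_k,Z_{k+h}}-f_{Z_k}f_{Z_{k+h}}\|_1^2\le\tfrac12 D_{KL}$. The correct form is $\|\cdot\|_1^2\le 2D_{KL}$, which yields exactly your $(\mathrm{KL}/2)^{1/2}$.

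The genuine gap in your plan is that this factor cannot be recovered: the inequality you set out to prove in Step 3 is false for moderately large $r$. Take $h=1$, $\rho_\eta=0.9$, $Z$ stationary with $\var[Z_i]=1$ as the paper assumes, and $T$ strictly monotone (the paper's $T$, or the identity). Then the left side is exactly $\delta(P,Q)$ with $r=0.9$. The map $(x,y)\mapsto(x-y)/\sqrt2$ pushes $P$ to $\Ncal(0,1-r)=\Ncal(0,0.1)$ and $Q$ to $\Ncal(0,1)$. Their densities cross at $x_*^2=\ln(10)/9$, so by data processing
\[
\delta(P,Q)\ \ge\ 2\bigl[\Phi(x_*/\sqrt{0.1})-\Phi(x_*)\bigr]\approx 2\,(0.9451-0.6935)\approx 0.503 .
\]
The claimed bound is only $\tfrac12\bigl(-\tfrac12\ln 0.19\bigr)^{1/2}\approx 0.456$. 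At $r=0.8$ the same computation gives $0.370$ against $0.357$. So the one-variable inequality you intend to verify on the compact range fails on part of it, and neither of your two routes can be completed. The corollary as stated is false; it does hold for small $r$, where $\delta(P,Q)\sim r/\pi<r/(2\sqrt2)$.

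Even where the inequality is true, your bounds are too lossy. At $r=0.5$ the exact distance is about $0.185$ against a target of about $0.190$. Yet the triangle-inequality split through $\Ncal(0,1-r^2)$ already gives about $0.25$ even with exact one-dimensional pieces. The product-subadditivity bound in the coordinates $(x\pm y)/\sqrt2$ gives about $0.26$. The latter also fails as $r\to0$, since its leading term is $2\phi(1)\,r\approx0.48\,r>r/(2\sqrt2)$.

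Your conditioning route is sharp to first order (its leading term $r/\pi$ is the true one), but with your bounds it establishes the stated constant only for roughly $r\lesssim0.1$. What can be proved for all $r$ is the bound with $\tfrac{1}{\sqrt2}$ in place of $\tfrac12$. That is your Steps 1--2 followed by plain Pinsker, i.e.\ the paper's route with the constant corrected. This is all that \Cref{cor:mixing-multiplier} needs, since there $r=\rho_\eta^h\to0$ and only $\beta\lesssim r$ is used.
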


\begin{proof}
    Observe 
    \begin{align*}
        \beta(\sigma(V_1,\ldots,V_k),\sigma(V_{k+h},\ldots,V_n))
        &\leq \beta(\sigma(Z_1,\ldots,Z_k),\sigma(Z_{k+h},\ldots,Z_n))
    \end{align*}
    since each $T$
    is measurable almost everywhere, hence, 
    $\sigma(V_{i},\ldots,V_{i+k})\subset \sigma(Z_{i},\ldots,Z_{i+k})$.

    Next,
    $(Z_i)_{i\in \N}$ is a Markov process, hence,
    \begin{align*}
        \beta_p(n)
        &= \frac{1}{2}\sup_{k\leq n-p}\|f_{Z_1,\dots,Z_n}-f_{Z_1,\dots,Z_{k}}f_{Z_{k+p},\dots,Z_n}\|_1 \\
        &= \frac{1}{2}\sup_{k\leq n-p}\|f_{Z_k,Z_{k+p}}-f_{Z_k}f_{Z_{k+p}}\|_1
    \end{align*}
    by \Cref{lem:l1-markov} combined with \Cref{lem:KL}.
    Applying Pinsker's inequality, we obtain 
    \begin{align*}
        \|f_{Z_k,Z_{k+h}}-f_{Z_k}f_{Z_{k+h}}\|_1^2
        &\le \frac{1}{2}D_{KL}(P_{(Z_k,Z_{k+h})}\|P_{Z_k}\otimes P_{Z_{k+h}})
    \end{align*}
    for all $k,h$ where the right-hand side denotes the mutual information.
    Furthermore, the mutual information between Gaussian distributions is given by 
    $$D_{KL}(P_{(Z_k,Z_{k+h})}\|P_{Z_k}\otimes P_{Z_{k+h}})=-\frac{1}{2}\ln\bigl(1-\corr[Z_k,Z_{k+h}]^2\bigr).$$
    Recall $\var[Z_i]=1$, hence, $\corr[Z_k,Z_{k+h}]=\cov[Z_k,Z_{k+h}]$.
    Together, we obtain the claim.
\end{proof}

\begin{corollary}\label{cor:mixing-multiplier}
    For all $\alpha>\chi$ it holds
    \begin{align*}
        \beta_{c\nu_\eta}^V(\nu_\eta^\alpha)=o\left(\nu_\eta^{-\gamma/(\gamma-2)}\right)
        , \qquad \frac{1}{\nu_{\eta}}\sum_{i,j=1}^{c\nu_\eta}|\cov[V_i,V_j]|=\mathcal{O}(\nu_\eta^{\alpha}),
    \end{align*}
     with $V_i,Z_i$ defined as in \eqref{ex:construction}.
\end{corollary}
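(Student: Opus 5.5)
The first claim follows from \Cref{cor:alpha-transformation}; the second then follows by a split into near and far lags. For the mixing bound, \Cref{cor:alpha-transformation} gives, uniformly in $k$ and $n$,
\[
\beta^V_{c\nu_\eta}(h)\le \tfrac12\bigl(-\tfrac12\ln(1-\rho_\eta^{2h})\bigr)^{1/2},
\]
because $\cov[Z_k,Z_{k+h}]=\rho_\eta^{h}$ for the stationary Gaussian AR(1) chain with unit variance. Take $h=\nu_\eta^{\alpha}$ with $\alpha>\chi$. Then
\[
\rho_\eta^{h}=(1-\nu_\eta^{-\chi})^{\nu_\eta^{\alpha}}\le \exp\bigl(-\nu_\eta^{\alpha-\chi}\bigr),
\]
which decays faster than any power of $\nu_\eta$. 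Using $-\ln(1-x)\le 2x$ for $x\le 1/2$, we get $\beta^V_{c\nu_\eta}(\nu_\eta^\alpha)\lesssim \rho_\eta^{h}\le \exp(-\nu_\eta^{\alpha-\chi}/1)$ up to constants. This is $o(\nu_\eta^{-p})$ for every $p$, and in particular $o(\nu_\eta^{-\gamma/(\gamma-2)})$.

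For the covariance sum, split the pairs $(i,j)$ according to whether $|i-j|\le \nu_\eta^\alpha$ or $|i-j|>\nu_\eta^\alpha$.

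\textbf{Near pairs.} By \ref{A1:T} and Cauchy--Schwarz, $|\cov[V_i,V_j]|\le \var[V_1]=1$. There are at most $c\nu_\eta(2\nu_\eta^\alpha+1)$ such pairs, so after dividing by $\nu_\eta$ their contribution is $\mathcal{O}(\nu_\eta^{\alpha})$.

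\textbf{Far pairs.} Apply the covariance inequality for $\beta$-mixing sequences (Theorem 3 of \cite{doukhan2012mixing}, as in the proof of \Cref{lem:eps-of-n}). With the moment $\gamma$ from \ref{A1:T} this gives
\[
|\cov[V_i,V_j]|\lesssim \E[|V_1|^\gamma]^{2/\gamma}\,\beta^V(|i-j|)^{(\gamma-2)/\gamma}.
\]
The coefficient $\beta^V(|i-j|)$ is bounded by the quantity at $h=\nu_\eta^\alpha$, since the bound above is monotone in $h$. The mixing part therefore gives each far pair a covariance of $o(\nu_\eta^{-1})$. Summing over at most $(c\nu_\eta)^2$ pairs and dividing by $\nu_\eta$ leaves $o(1)$.

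An alternative route for the far pairs avoids mixing altogether. The Lipschitz-type conditions \ref{A2:T}--\ref{A3:T} give $|\cov[V_i,V_j]|\lesssim \cov[Z_i,Z_j]=\rho_\eta^{|i-j|}$, the same bound invoked in \Cref{lem:eps-of-n} via the transformation lemmas. Summing the geometric series gives
\[
\frac1{\nu_\eta}\sum_{i,j}\rho_\eta^{|i-j|}\lesssim (1-\rho_\eta)^{-1}=\nu_\eta^{\chi}\le\nu_\eta^\alpha .
\]
This bound holds directly and needs no splitting at all.

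The only delicate point is the bookkeeping in the mixing route: the $\beta$-coefficient must beat the factor $\nu_\eta^{2}$ coming from the number of pairs, after being raised to the power $(\gamma-2)/\gamma$. This is not a real obstacle, because the decay is stretched-exponential in $\nu_\eta$ and hence beats every polynomial. Combining the near and far contributions gives $\mathcal{O}(\nu_\eta^\alpha)$, which is the second claim.
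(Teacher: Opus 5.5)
Your main argument is correct and is essentially the paper's proof. The mixing rate comes from \Cref{cor:alpha-transformation} with $\cov[Z_k,Z_{k+h}]=\rho_\eta^{h}\le\exp(-\nu_\eta^{\alpha-\chi})$, and the covariance sum is split at lag $\nu_\eta^{\alpha}$, using $|\cov[V_i,V_j]|\le 1$ for near pairs and the $\beta$-mixing covariance inequality of \cite{doukhan2012mixing} for far pairs. You also make explicit bookkeeping the paper leaves implicit: the inequality $-\ln(1-x)\le 2x$ replaces its Taylor approximation, and the $(c\nu_\eta)^2$ far pairs are beaten by $\beta^{(\gamma-2)/\gamma}=o(\nu_\eta^{-1})$.

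One correction to your alternative route. \Cref{lem:transformation-1-covariance} only bounds $|1-\cov[V_i,V_j]|$ by a multiple of $|1-\cov[Z_i,Z_j]|$, which says nothing when $\rho_\eta^{|i-j|}$ is small, so \ref{A2:T}--\ref{A3:T} do not yield $|\cov[V_i,V_j]|\lesssim\rho_\eta^{|i-j|}$. That inequality does hold, however, by Gebelein's maximal-correlation bound for Gaussian pairs: expanding $T$ in Hermite polynomials gives $|\cov[T(Z_i),T(Z_j)]|\le\rho_\eta^{|i-j|}\var[T(Z_1)]$. With this justification your route gives the sharper rate $\mathcal{O}(\nu_\eta^{\chi})$.
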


\begin{proof}
    It holds 
    \begin{align*}
    \beta(\sigma(V_1,\ldots,V_k),\sigma(V_{k+h},\ldots,V_n))
        &\le \frac{1}{2}\left(-\frac{1}{2}\ln\left(1-\cov[Z_k,Z_{k+h}]^2\right)\right)^{\frac{1}{2}}.
    \end{align*}
    by \Cref{cor:alpha-transformation}.
    For $\alpha$ such that $\chi<\alpha$ we obtain
    $$\cov[Z_k,Z_{k+h}]\leq (1-\nu_\eta^{-\chi})^{\nu_\eta^\alpha}=o\left(\nu_\eta^{-\gamma/(\gamma-2)}\right),$$
    for all $n^{\alpha}\leq h$.
    A first-order Taylor approximation of $\ln(1-x^2)$ around zero yields 
    \begin{align*}
       \frac{1}{2}\left(-\frac{1}{2}\ln\left(1-\cov[Z_k,Z_{k+h}]^2\right)\right)^{\frac{1}{2}}
       \approx \frac{1}{\sqrt{2}}\cov[Z_k,Z_{k+h}],
    \end{align*}
    for $\cov[Z_k,Z_{k+h}]$ close to zero.
    Combined we obtain 
    \begin{align*}
        \beta_{c\nu_\eta}^V(\nu_\eta^\alpha)\lesssim 
        \sup_{k\leq \nu_\eta-\nu_\eta^\alpha}\cov[Z_k,Z_{k+\nu_\eta^\alpha}]=o\left(\nu_\eta^{-\gamma/(\gamma-2)}\right).
    \end{align*}

    By Theorem 3 in \cite{doukhan2012mixing},
    \begin{align*}
        \frac{1}{\nu_\eta}\sum_{\substack{i,j\leq c\nu_\eta\\ |i-j|>\nu_\eta^{\alpha}}}|\cov[V_i,V_j]|
        &\lesssim \beta_{c\nu_\eta}^V(\nu_\eta^{\alpha})^{\frac{\gamma-2}{\gamma}} 
        \to 0,
        \qquad
        \frac{1}{\nu_\eta}\sum_{\substack{i,j\leq c\nu_\eta\\ |i-j|\leq\nu_\eta^{\alpha}}}|\cov[V_i,V_j]|
         \lesssim \E[V_1^2] \nu_\eta^{\alpha}
        =\mathcal{O}(\nu_\eta^{\alpha}).
    \end{align*}
\end{proof}

\begin{lemma}\label{lem:transformation-1-covariance}
    Let $X,Y\sim \mathcal{N}(0,1)$ and $T:\R\to \R$.
    Assume that there exists $K:\R^2\to \R$ measurable almost everywhere such that 
    \begin{enumerate}
        \item\label{A1:transformation-1-covariance} $\E[T(Z)]=0$ and $\var[T(Z)]=1$ for $Z=X,Y$.
        \item $|T(x)-T(y)|\leq K(x,y)|x-y|$ almost everywhere. 
        \item $\E[K(X,Y)^4]\leq L$. 
    \end{enumerate}
    Then, 
    \begin{align*}
        |1-\cov[T(X),T(Y)]|
        &\leq \sqrt{3L}|1-\cov[X,Y]|.
    \end{align*}

\end{lemma}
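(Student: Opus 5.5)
The plan is to write $1-\cov[T(X),T(Y)]$ as a mean squared difference, which can then be controlled by the local Lipschitz bound on $T$. Throughout I take $(X,Y)$ to be jointly Gaussian with correlation $\rho=\cov[X,Y]$. This is how the lemma is used, with $X=Z_i$, $Y=Z_j$ from the Gaussian AR recursion. Joint Gaussianity is needed so that $X-Y$ is itself Gaussian.

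\textbf{Step 1: an identity.} By condition~\ref{A1:transformation-1-covariance}, $\E[T(X)]=\E[T(Y)]=0$ and $\E[T(X)^2]=\E[T(Y)^2]=1$. Hence
\begin{align*}
\E\bigl[(T(X)-T(Y))^2\bigr] = 2 - 2\cov[T(X),T(Y)],
\end{align*}
so that $1-\cov[T(X),T(Y)] = \tfrac12 \E[(T(X)-T(Y))^2]$. By Cauchy--Schwarz, $\cov[T(X),T(Y)]\le 1$, so the left-hand side is nonnegative. For the same reason $1-\rho\ge 0$. Both absolute values in the claim can therefore be dropped.

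\textbf{Step 2: Lipschitz bound and H\"older.} Using $|T(x)-T(y)|\le K(x,y)|x-y|$ almost everywhere, we get
\begin{align*}
\tfrac12\E\bigl[(T(X)-T(Y))^2\bigr] \le \tfrac12 \E\bigl[K(X,Y)^2 (X-Y)^2\bigr] \le \tfrac12 \E\bigl[K(X,Y)^4\bigr]^{1/2}\,\E\bigl[(X-Y)^4\bigr]^{1/2}.
\end{align*}
The first factor is at most $\sqrt{L}$.

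\textbf{Step 3: the Gaussian fourth moment.} The variable $X-Y$ is centered Gaussian with variance $2-2\rho$. Its fourth moment is therefore $3(2-2\rho)^2=12(1-\rho)^2$, whose square root is $2\sqrt{3}(1-\rho)$. Combining,
\begin{align*}
1-\cov[T(X),T(Y)] \le \tfrac12\sqrt{L}\cdot 2\sqrt3\,(1-\rho) = \sqrt{3L}\,|1-\cov[X,Y]|.
\end{align*}

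The argument is short, and no step is a serious obstacle. The main point of care is the implicit joint-Gaussianity assumption, which is what makes $X-Y$ Gaussian. A second point is that the almost-everywhere Lipschitz bound must hold on a set of full measure under the joint law of $(X,Y)$. This is automatic when $|\rho|<1$, since the joint law then has a density. In the degenerate case $\rho=1$ we have $X=Y$, both sides vanish, and the claim is trivial.
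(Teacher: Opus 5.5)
Your proof is correct and follows the paper's argument step for step. Both use the identity $\E[(T(X)-T(Y))^2]=2(1-\cov[T(X),T(Y)])$, then the local Lipschitz bound, Cauchy--Schwarz, and the Gaussian fourth moment $\E[(X-Y)^4]=3\var[X-Y]^2$; your explicit joint-Gaussianity assumption is exactly what the paper uses implicitly.

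One small addition to your remarks on care: the case $\rho=-1$ is covered neither by your density argument nor by your $X=Y$ argument. There the law sits on the Lebesgue-null antidiagonal, and with only a Lebesgue-a.e.\ bound the claim can fail (take $T$ the identity and $K=0$ on $\{y=-x\}$). This does not matter for the application, where $\rho=\rho_\eta^{|h|}>0$.
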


\begin{proof}
    We calculate 
    \begin{align*}
        \E[(T(X)-T(Y))^2]
        &=\E[T(X)^2]+\E[T(Y)^2]-2\E[T(X)T(Y)]
        \\
        &=2(1-\cov[T(X),T(Y)])
    \end{align*}
   Thus,
    \begin{align*}
        2|1-\cov[T(X),T(Y)]|
        &=\E[(T(X)-T(Y))^2]
        \\
        &\leq \E[K(X,Y)^2(X-Y)^2]
        \\
        &\leq \sqrt{\E[K(X,Y)^4]\E[(X-Y)^4]}
        \\
        &=\sqrt{\E[K(X,Y)^4]3\var[X-Y]^2}
        \\
        &=2\sqrt{3\E[K(X,Y)^4]}|1-\cov[X,Y]|
        \\
        &\leq 2\sqrt{3L} |1-\cov[X,Y]|
    \end{align*}
    where the last two equalities follow since 
    $X-Y$ is normally distributed and $\var[X-Y]=2-2\cov[X,Y]$.
\end{proof}

We now show that the assumptions are satisfied for our choice of transformation.

\begin{lemma} \label{lem:transformation-1-covariance-satisfied}
Let $d\ge 8$ and $t_d$ be the CDF of the unit-variance $t$ distribution with $d$ degrees of freedom.
Define $T(x)= t_d^{-1}(\Phi(x))$ and
\begin{align*}
    K(x,y) = 2\sqrt{2\pi} (1+x^2+y^2)^{2}\exp\left(\tfrac{1}{2d}\max\{x^2,y^2\}\right).
\end{align*}
Then:
\begin{enumerate}[label=(\roman*)]
    \item  $|T(x)-T(y)| \le K(x,y) |x-y|$ for all $x,y\in\R$.
    \item If $(Z_1,Z_2)$ is mean-zero, unit-variance bivariate normal, then $\E[K(Z_1,Z_2)^4] \le C < \infty$ for an absolute constant $C$.
\end{enumerate}
\end{lemma}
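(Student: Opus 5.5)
Part (i) follows from the mean value theorem once $T'$ is bounded pointwise. Part (ii) then reduces to Gaussian moment bounds.

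\textbf{Step 1: derivative of $T$.} Write $f_d$ for the density of the unit-variance $t_d$ law. Since $t_d(T(x))=\Phi(x)$, differentiation gives
\[
T'(x)=\frac{\varphi(x)}{f_d(T(x))}.
\]
By the mean value theorem, $|T(x)-T(y)|\le \sup_{z\in[x,y]}|T'(z)|\,|x-y|$. Since $|z|\le\max\{|x|,|y|\}$ on this interval, it suffices to show
\[
|T'(z)|\le 2\sqrt{2\pi}\,(1+z^2)^2 e^{z^2/(2d)},
\]
because the right-hand side is increasing in $z^2$ and $1+z^2\le 1+x^2+y^2$.

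\textbf{Step 2: the pointwise bound.} The density $f_d(u)$ is proportional to $(1+u^2/(d-2))^{-(d+1)/2}$. I split into two regimes.

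For bounded $|z|$ the bound is a compact-set estimate, and one can check that the constant $2\sqrt{2\pi}$ is ample.

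For large $|z|$ I compare tails, using symmetry to take $z>0$. The defining identity is
\[
1-t_d(T(z))=1-\Phi(z).
\]
\begin{itemize}
\item The $t$ tail satisfies $1-t_d(u)\asymp u\,f_d(u)/d$.
\item The Mills ratio gives $1-\Phi(z)\asymp \varphi(z)/z$.
\end{itemize}
Combining these, $\varphi(z)/f_d(T(z))\lesssim z\,T(z)/d$, so it remains to bound $T(z)$. Since $1-t_d(u)\approx c_d\,u^{-d}$, we get $T(z)\approx (c_d/(1-\Phi(z)))^{1/d}$. Because $1-\Phi(z)\gtrsim e^{-z^2/2}/(1+z)$, this gives
\[
T(z)\lesssim (1+z)^{1/d}e^{z^2/(2d)}.
\]
Hence $|T'(z)|\lesssim (1+z)^{1+1/d}e^{z^2/(2d)}$, which the polynomial factor $(1+z^2)^2$ absorbs. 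The explicit constant would come from keeping honest versions of the Mills-ratio and $t$-tail inequalities, e.g.\ $\varphi(z)\,z/(1+z^2)\le 1-\Phi(z)$. I expect this step, getting explicit non-asymptotic tail inequalities that produce the exact constant, to be the main obstacle. If needed I would prove the bound with some absolute constant and adjust.

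\textbf{Step 3: the fourth moment.} We have
\[
K^4\lesssim (1+Z_1^2+Z_2^2)^8\exp\!\Big(\tfrac{2}{d}\max\{Z_1^2,Z_2^2\}\Big).
\]
Bound the maximum by the sum: $e^{2\max\{Z_1^2,Z_2^2\}/d}\le e^{2Z_1^2/d}+e^{2Z_2^2/d}$. Then apply Cauchy–Schwarz, which requires the integrability of
\[
\E\big[e^{4Z_i^2/d}\big]=(1-8/d)^{-1/2}.
\]
This is finite only if $d>8$, so the endpoint $d=8$ needs care. To handle it, I would use Hölder with exponents closer to $1$ on the exponential factor: the polynomial factor has all moments, so $\E[e^{(2/d)(1+\epsilon)Z^2}]<\infty$ whenever $(4/d)(1+\epsilon)<1$, which holds for $d\ge 8$ with small $\epsilon$. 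This yields a bound uniform over the correlation of $(Z_1,Z_2)$, since only the marginals enter.
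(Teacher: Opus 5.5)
Your plan is sound and follows the paper's skeleton. For (i) it uses $T'=\varphi/f_d(T)$ plus the mean value theorem. For (ii) it uses $\max\le$ sum, a H\"older-type split, and $\E[e^{tZ^2}]=(1-2t)^{-1/2}$. The key steps differ from the paper's, and both differences favour you.

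\textbf{Part (i).} The paper bounds $1+T(z)^2/(d-2)\le C_1(1+z^2)^{2/d}e^{z^2/d}$ and substitutes this into $f_d(u)=c_d(1+u^2/(d-2))^{-(d+1)/2}$. The exponentials cancel against $\varphi(z)$, but any slack $C_1>1$ is raised to the power $(d+1)/2$. That route therefore needs $C_1=1+O(1/d)$ to give a $d$-free constant; with the paper's $C_1\le e$ the chain only gives a prefactor $e^{(d+1)/2}$. Your factorization
\[
T'(z)=\frac{\varphi(z)}{1-\Phi(z)}\cdot\frac{1-t_d(u)}{f_d(u)},\qquad u=T(z),
\]
uses $T(z)$ only linearly, and $T(z)$ itself comes out as a $d$-th root, so crude constants are harmless.

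\textbf{Part (ii).} You are right that Cauchy--Schwarz needs $\E[e^{4Z^2/d}]=(1-8/d)^{-1/2}$, which is infinite at $d=8$ and unbounded as $d\downarrow 8$. This is exactly the step the paper takes; its factor $2/\sqrt{1-4/d}$ and condition $d>4$ do not match it. So your H\"older repair is needed. Fix the exponent independently of $d$, e.g.\ $p=3/2$, $q=3$, so that $\E[e^{3Z^2/d}]=(1-6/d)^{-1/2}\le 2$ for all $d\ge 8$. The constant is then absolute in both $d$ and the correlation.

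\textbf{The gap in (i): uniformity in $d$.} What remains is to make (i) non-asymptotic and uniform in $d$. This matters because $K$ has a $d$-free constant and $d=2+\nu_\eta^{1/3}\to\infty$ in the application. As written, your split fails here in two ways.

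First, $1-t_d(u)\asymp uf_d(u)/d$ holds only once $u^2\gtrsim d$, i.e.\ only for $|z|$ of order $\sqrt d$ and beyond. For $u^2\ll d$ the Mills ratio of $t_d$ is of order $1/(1+u)$, not $u/d$. So the compact-set estimate for bounded $|z|$ does not cover the middle range uniformly in $d$.

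Second, your bound $T(z)\lesssim(1+z)^{1/d}e^{z^2/(2d)}$ drops a factor $\sqrt{d-2}$. This is harmless only because that factor later meets the $1/d$.

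\textbf{How to close it.} Replace both regimes by bounds valid for all $u\ge 0$. Set $a=d-2$ and $v=u+w$, and use $a+v^2\ge a+u^2+\max\{2uw,w^2\}$: the $w^2$ term gives the first bound below, the $2uw$ term the second. Integrating $f_d(v)/f_d(u)$ over $w\ge0$ gives
\[
\frac{1-t_d(u)}{f_d(u)}\le I_d\sqrt{a+u^2},\qquad 1-t_d(u)\le\frac{c_d\,a}{(d-1)\,u}\Bigl(1+\frac{u^2}{a}\Bigr)^{-\frac{d-1}{2}}\le\frac{c_d\,a^{(d+1)/2}}{d-1}\,u^{-d},
\]
where $I_d=\int_0^\infty(1+s^2)^{-(d+1)/2}\,ds$. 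Log-convexity of $\Gamma$ gives $I_d\sqrt a\le\sqrt{\pi/2}$ and $c_d\sqrt{2\pi a}\le\sqrt d\le d-1$.

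Combine these with $1-\Phi(z)\ge 2\varphi(z)/(z+\sqrt{z^2+4})\ge\varphi(z)/(1+z)$ for $z\ge0$. This yields $u\le\sqrt a\,(1+z)^{1/d}e^{z^2/(2d)}$, and then
\[
T'(z)\le(1+z)\,I_d(\sqrt a+u)\le\sqrt{2\pi}\,(1+z)^{1+1/d}e^{z^2/(2d)}\le 2\sqrt{2\pi}\,(1+z^2)^2e^{z^2/(2d)}.
\]
This is exactly the stated constant, with no adjustment needed. The case $z<0$ follows since $T$ is odd.
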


\begin{proof} \, \\
\begin{enumerate}[label=(\roman*)]
    \item $T=t_d^{-1}\!\circ\Phi$ is $C^1$ and strictly increasing, hence
\begin{align*}
T'(z)=\frac{\phi(z)}{f_t(T(z))},\qquad |T(x)-T(y)|\le \sup_{z\in[x,y]}T'(z)\,|x-y|.
\end{align*}
For the unit-variance $t_d$ density one has
\begin{align*}
f_t(u)= c_d\Big(1+\frac{u^2}{d-2}\Big)^{-\frac{d+1}{2}},\qquad \quad c_d=\frac{\Gamma(\frac{d+1}{2})}{\sqrt{(d -2)\pi}\,\Gamma(\frac{d}{2})}.
\end{align*}
Thus
\begin{align*}
T'(z)=\frac{1}{c_d}\,\phi(z)\left(1+\frac{T(z)^2}{d-2}\right)^{\frac{d+1}{2}}.
\end{align*}
Using  $t_d(T(z))=\Phi(z)$ and crude Mills bounds, one obtains for all $z\in\R$  the pointwise bound
\begin{align*}
1+\frac{T(z)^2}{d-2}\ \le\ C_1\,(1+z^2)^{2/d}\,e^{\,z^2/d},
\end{align*}
with $C_1=(\sqrt{2\pi}A_d^{1/d}/\sqrt{d-2})^2$ and $A_d=c_d d^{(d-1)/2}$. 
For $d\ge 8$, Stirling's formula gives $\frac{1}{c_d}\le \sqrt{2\pi}$ and $C_1\le e$, whence
\begin{align*}
T'(z)\le \frac{1}{c_d}\,\phi(z)\,\Big(C_1\,(1+z^2)^{2/d}e^{\,z^2/d}\Big)^{\frac{d+1}{2}}
\le C_*\,(1+z^2)^{2}e^{\,\frac{z^2}{2d}},
\end{align*}
with $C_*=2\sqrt{2\pi}$.
Therefore, 
\begin{align*}
|T(x)-T(y)|\le \sup_{z\in[x,y]}T'(z)\,|x-y|
\le C_*\,(1+\max\{x^2,y^2\})^{2}\,e^{\,\frac{1}{2d}\max\{x^2,y^2\}}\,|x-y|,
\end{align*}
and since $(1+\max\{a,b\})^2\le (1+a+b)^2$, we obtain (i) with the stated $K$.

\item For $u,v\ge 0$, it holds
\begin{align*}
(1+u+v)^{8}\le 3^{7}\,(1+u^{8}+v^{8}),\qquad
e^{\frac{2}{d}\max\{u,v\}}\le e^{\frac{2}{d}u}+e^{\frac{2}{d}v}.
\end{align*}
Applying these with $u=Z_1^2$, $v=Z_2^2$,
\begin{align*}
K(Z_1,Z_2)^4
&\le C_*^{\,4}\,3^{7}\big(1+Z_1^{16}+Z_2^{16}\big)\,\big(e^{\frac{2}{d}Z_1^2}+e^{\frac{2}{d}Z_2^2}\big).
\end{align*}
The Cauchy-Schwarz inequality  gives for all $i, j \in \{1, 2\}$ that
    $\E[Z_i^{16}e^{\frac{2}{d}Z_j^2}]^2 \le \E[Z_i^{32}] \E[e^{\frac{4}{d}Z_j^2}],$
and recall that $\E [e^{tZ^2}]=(1-2t)^{-1/2}$ for a standard normal with $t<\tfrac 1 2$.
Taking expectations on both sides of the second-to-last inequality, we get, for $d>4$ (hence $2/d<1/2$),
\begin{align*}
\E[K(Z_1,Z_2)^4]
&\le C_*^{\,4}\,3^{7}\left(1+2 \sqrt{\E[Z_1^{32}]}\right) \frac{2}{\sqrt{1-\frac{4}{d}}},
\end{align*}
which proves (ii). \qedhere
\end{enumerate}

\end{proof}

\end{document}